\documentclass{amsart}[11pt]
\setlength{\hoffset}{-1in}\hoffset-1in
\setlength{\textwidth}{17cm}
\usepackage{amsmath, amsfonts, amsthm, amssymb,mathtools}
\usepackage{subfigure}
\usepackage{stmaryrd}
\usepackage{verbatim}
\usepackage{hyperref}
\usepackage{color}
\usepackage{bm}
\usepackage{ulem}
\usepackage{dsfont}
\usepackage{graphicx}
\usepackage{enumerate}

\usepackage{upgreek}

\linespread{1.3}

\numberwithin{equation}{section}
\topmargin = 0pt
\voffset = -10pt
\addtolength{\textheight}{2cm}
\newtheorem{theorem}{Theorem}[section]

\newtheorem{lemma}[theorem]{Lemma}
\newtheorem{proposition}[theorem]{Proposition}

\theoremstyle{definition}
\newtheorem{definition}[theorem]{Definition}
\newtheorem{remark}[theorem]{Remark}

\newtheorem{assumption}[theorem]{Assumption}

\newcommand{\ind}{\mathds{1}} 
\newcommand{\I}{\mathtt{i}}

\newcommand{\RR}{\mathbb{R}}

\newcommand{\PP}{\mathbb{P}}

\newcommand{\NN}{\mathbb{N}}
\newcommand{\Kk}{\mathcal{K}}

\newcommand{\D}{\mathrm{d}}

\newcommand{\CC}{\mathbb{C}}

\newcommand{\Ee}{\mathcal{E}}
\newcommand{\Tt}{\mathcal{T}}

\newcommand{\E}{\mathrm{e}}

\newcommand{\EE}{\mathbb{E}}

\newcommand{\Nn}{\mathfrak{N}}
\newcommand{\Vv}{\mathcal{V}}
\newcommand{\Ff}{\mathcal{F}}

\newcommand{\eps}{\varepsilon}

\newcommand{\TT}{\mathbb{T}}
\newcommand{\nn}{\mathfrak{n}}
\newcommand{\ZZ}{\mathfrak{Z}}
\newcommand{\zz}{\mathfrak{z}}
\newcommand{\ssf}{\mathfrak{s}}
\newcommand{\ee}{\mathfrak{e}}
\newcommand{\Qq}{\mathcal{Q}}
\newcommand{\Qf}{\mathfrak{Q}}
\newcommand{\qf}{\mathfrak{q}}

\DeclareMathOperator*{\foo}{\text{\raisebox{-0.7ex}{\huge $\Phi$}}}
\newcommand{\Uu}{\mathcal{U}}
\newcommand{\Aa}{\mathcal{A}}

\begin{document}
\title{Optimal liquidation in a Level-I limit order book for large-tick stocks}
\date{\today}
\author{Antoine Jacquier}
\address{Department of Mathematics, Imperial College London}
\email{a.jacquier@imperial.ac.uk}
\author{Hao Liu}
\address{Department of Mathematics, Imperial College London}
\email{hao.liu112@imperial.ac.uk}

\thanks{The authors would like to thank Martin Gould and Fabrizio Lillo for useful discussions.
AJ acknowledges financial support from the EPSRC First Grant EP/M008436/1 and 
HL acknowledges financial support from SHELL.
}
\subjclass[2010]{91G60, 91G99, 60K15, 60K20}
\keywords{limit order book, optimal liquidation, semi-Markov decision process, queueing theory, dynamic programming}

\maketitle

\begin{abstract}
We propose a framework to study the optimal liquidation strategy in a limit order book for large-tick stocks, 
with the spread equal to one tick. 
All order book events (market orders, limit orders and cancellations) occur according to independent Poisson processes, with parameters depending on the most recent price move direction.
Our goal is to maximise the expected terminal wealth of an agent who needs to liquidate her positions within a fixed time horizon.
By assuming that the agent trades (through both limit and market orders) only when the price moves, 
we model her liquidation procedure as a semi-Markov decision process, 
and compute the semi-Markov kernel using Laplace method in the language of queueing theory. 
The optimal liquidation policy is then solved by dynamic programming,
and illustrated numerically. 
\end{abstract}
\section{Introduction}
Nowadays, most equity and derivative exchanges all over the world are at least partially using order-driven trading mechanisms:
Helsinki, Hong Kong, Shenzhen, Swiss, Tokyo, Toronto, Vancouver Stock Exchanges, 
Australian Securities Exchange, Euronext
are pure order-driven markets 
and New York, London Stock Exchanges, Nasdaq are hybrid markets~\cite{gould2013limit}.
Different from a quote-driven market, where large market makers centralise buy and sell orders and provide liquidity to other market participants through setting the bid and ask quotes,
an order-driven market is much more flexible,
which allows all market participants to send buy or sell orders specifying the price and amount they want to trade into a limit order book (LOB). 
According to the classical terminology~\cite[Section~2.2]{gould2013limit}, 
orders leading to an immediate execution upon submission based on the LOB's trade-matching algorithm are called market orders,
while orders that do not result in an immediate execution and therefore are stored in the LOB are called limit orders.
The active limit orders can either get executed by subsequent counterpart market orders based on a certain priority rule\footnote{
A priority rule regulates how limit orders stored in the LOB will get executed.
By far the most common priority rule is `{price-time}'~\cite[Section 3.4]{gould2013limit},
that is,
limit orders posted closer to the mid price will get priority and
limit orders posted at the same price follow the `first come first serve' rule.}
or be cancelled.
Therefore, a LOB can be understood as a collection of buy and sell limit orders stored at different price levels awaiting to be executed by counterpart market orders or cancellations. 

The predominance of automated order-driven markets, 
together with the significant breakthroughs in quantitative modelling and information technology in recent years, has vastly facilitated the emergence and proliferation of algorithmic trading.
Broadly speaking, algorithms serve different purposes and are classified as either proprietary or agency~\cite{hasbrouck2013low}.
Proprietary algorithms are mainly employed by high-frequency traders
aiming at making profits from the trading process itself~\cite{hagstromer2013diversity, menkveld2016economics, securities2014equity}.
Agency algorithms, on the other hand, are normally used by buy-side institutional investors to implement long-term position changes,
aiming at minimising the execution cost and market impact.
In the process of buying or selling a large parent order,
an agency algorithm is essentially decomposed into three layers~: 
\begin{enumerate}
\item [\textbf{(L$1$)}]
how to slice the parent order and schedule the child orders over the entire trading horizon;
\item [\textbf{(L$2$)}]
what is the price, type and timing to execute each child order within the scheduled horizon;
\item [\textbf{(L$3$)}]
which venue(s) should each child order be routed to.
\end{enumerate}
Almgren and Chriss~\cite{almgren2001optimal}, 
Almgren~\cite{almgren2003optimal}, 
Gatheral, Schied and Slynko~\cite{gatheral2012transient}
and 
Lorenz and Almgren~\cite{lorenz2011mean}
address the optimal execution problems by solely considering the first layer, 
in which case the direct interactions between the trader and the LOBs are abstracted away.
Some studies take the first two layers into account and
formulate optimal strategies for executing a large position in a single LOB market.
For example, Obizhaeva~and~Wang~\cite{obizhaeva2013optimal}
and Alfonsi, Fruth and~Schied~\cite{alfonsi2010optimal}
develop the optimal execution strategies entirely using market orders, 
assuming that the liquidity replenishes gradually over time after it is taken.
Bayraktar and Ludkovski~\cite{bayraktar2014liquidation},
Gu{\'e}ant, Lehalle and Fernandez-Tapia~\cite{gueant2012optimal}
design the optimal liquidation strategy that posts limit orders only, 
treating the liquidation process as a sequence of order fills and modelling it by a point process.
Cartea and Jaimungal~\cite{cartea2015optimal} seek to execute a large order employing both market and limit orders,
and solve the optimal strategies under different scenarios.
Some researchers focus on the second layer and study how to optimally execute a single child order,
for the purpose of incorporating information on the LOB market microstructure into their trading strategies,
in particular Stoikov and Waeber~\cite{stoikov2012optimal},
Donnellya and Gan~\cite{donnelly2017optimal}
and
Gonzalez and Schervish~\cite{gonzalez2017instantaneous}
for market-order-oriented, limit-order-oriented and hybrid optimal strategy, respectively. 
Finally, Cont and Kukanov~\cite{cont2017optimal} 
combine the last two layers together 
and propose a strategy that optimally distributes a child order across different order types and trading venues.

In this paper, we formulate and solve a stylised optimal liquidation problem from the perspective of the second layer of an agency algorithm.
Specifically, we consider an agent (or her agency algorithm) who wants to sell a child order of a pre-specified (small) quantity over a fixed (short) trading window in a LOB of a large-tick stock\footnote{See~\cite[Section 4]{gould2015queue} for definition and selection criteria of large-tick stocks.},
where the price-time priority mechanism is applied.
Information available to this agent contains historical order flows and depths of the LOB at the best prices (`Level-I' data).
In particular, we are mostly interested in how different trading conditions 
(LOB state, inventory position, time to maturity) impact the agent's decisions.
In order to achieve this,
we first build up a `Level-I' LOB model describing the trading environment 
whose dynamics are driven by the general market participants' order flows and exogenous information. 
Realistic simplifying assumptions for this LOB follow those in~\cite{cont2013price, cont2010stochastic},
including unit order size, 
constant one-tick spread, 
Poisson order flows,
depletion of the best bid (resp. ask) queue moving the price one tick downward (resp. upward)
and volumes at best prices after a price move being regarded as stationary variables drawn from a joint distribution. 
We further develop this model by allowing the Poisson rates of the order flows 
and the joint distribution determining the depths at the best prices after a price move
to depend on the most recent price move direction.
Under these assumptions, the evolution of this LOB can be modelled as a Markov renewal process as in~\cite{fodra2015semi},
whose transition mechanism is intuitively described by a queueing race between the volumes at the best prices.
We then assume the agent to be risk-neutral,
trying to maximise her expected terminal wealth by selling a fixed-amount child order within a fixed (finite) time horizon in this LOB.
In order to model the price-time priority rule and capture the executions of the agent's limit orders, we assume that the agent is slow and only reacts immediately after the price moves using both limit and market orders:
at each price-change time,
the agent can choose to post a limit order at the best ask price with the least time priority 
and/or submit a market order that never consumes up the entire volumes at the best bid price.
Through combining the assumptions for the LOB and the liquidating strategy,
the agent's trading procedure is then formulated through a (stationary) semi-Markov decision process within a finite horizon~\cite{huang2011finite},
among a certain class of horizon-related Markov deterministic policies.
In general, at each price-change time, the optimal policy is a deterministic function
which tells the agent the size of the market and limit order to trade based on the current LOB state
(price move direction, volumes at the best prices), the agent's inventory position and time to maturity in order to achieve terminal wealth maximisation.

We restrict our attention to optimal execution strategy for large-tick stocks 
mainly because the spread of large-tick stocks is almost always equal to one tick~\cite{dayri2015large}, 
so that in most cases traders cannot undercut each other by submitting limit orders inside the spread and therefore have to wait in the queue to get executed. 
This feature may largely simplify the LOB modelling. 
More importantly, market conditions and trading strategies for large-tick stocks are deemed to be different from those for small-tick stocks~\cite{o2015relative}. 
Therefore, the trading strategies for these two categories of stocks should be studied separately.

This paper is organised as follows. 
In Section~\ref{sec:stylisedLOB}, 
we set the basic assumptions for the LOB model, illustrate the evolutional dynamics of a `Level-I' LOB and define the objective together with the admissible trading strategy set for the agent.
In Section~\ref{sec:TradingProcedure}, 
a semi-Markov decision process with a horizon-related Markov deterministic policy is introduced to model the agent's trading procedure and an optimal policy is defined. 
In Section~\ref{Section:SemiMarkovKernel}, we provide an expression for the semi-Markov kernel,
which works as the transition mechanism of the semi-Markov decision process. 
Existence of a stationary optimal policy is proved in Section~\ref{Section:OptimalStrategy}, 
and empirical studies show our numerical results in Section~\ref{sec:empirical}.
\\

\noindent\textbf{Notations:}\label{notations}
we shall use the following notations:
$\NN := \{0, 1, 2, \dots\}$, 
$\NN^+ := \{1, 2, \dots\}$, 
$\RR^+:=(0,\infty)$, $\RR^+_0:=\RR^+\cup\{0\}$, $\RR^-:=\RR\setminus\RR^+_0$,
and~$\CC$ represents the set of imaginary numbers.
In this paper, $T>0$ is a fixed terminal time, and we denote
$\TT := [0,T]$ and $\TT_- := \RR^-\!\cup\TT$.
For a continuous-time process~$({L}_s)_{s\geq 0}$, 
denote~$\tau_{{L}}$ its first passage time to the origin, 
and~$f_L$ (resp.~$F_L$) the density (resp. cumulative distribution function) of~$\tau_{L}$ and, as usual $\overline{F}_L := 1-F_L$.

\section{Limit Order Book and Trading Strategy}\label{sec:stylisedLOB}
\subsection{`Level-I' Limit Order Book Model}\label{subsec:LOB basic assumption}
We consider a limit order book characterised by two resolution parameters as in~\cite[Section~2.1]{gould2013limit}: 
the tick size~$\eps>0$ represents the smallest interval (assumed constant) between price levels,
and the lot size,~$\sigma >0$, specifies the smallest amount of the asset that can be traded.
All buy and sell orders thus must arrive at a price~$k_1\eps$
and with a size~$k_2\sigma$, for some~$k_1, k_2\in\NN^+$.
Throughout this paper we shall work with the following modelling assumptions for the limit order book:
\begin{assumption}[Order book settings]\label{ass:Underlying}\ 
\begin{enumerate}[(a)]
\item \label{ass:Underlying1} 
orders from general market participants are of unit size, 
defined by~$\sigma$ actual size;
\item \label{ass:Underlying2} the spread of the limit order book is equal to the tick size~$\eps$.
\end{enumerate}
\end{assumption}
The LOB model is formulated based on a `Level-I' data, that is, the order flows and depths at the best bid and ask prices. 
As illustrated in~\cite[Section~2.1]{cont2013price},
this reduced-form modelling approach is motivated by empirical findings 
showing (a) that large amounts of order flows occur at the best price levels for large-tick stocks~\cite{gareche2013fokker},
(b) that the imbalance between the order flows at the best prices is shown to be a good predictor of the order book dynamics~\cite{cartea2015enhancing, cont2014price},
and (c) that data at the best prices are more obtainable than the `Level-II' market data.
In the following, we impose the assumptions for the evolution of the LOB:
\begin{assumption}[Evolution of the limit order book]\label{ass:LOBEvolution}\leavevmode
\begin{enumerate}[(a)]
\item \label{ass:LOBEvolution1} 
whenever orders at the best bid (resp. ask) price are depleted, 
both the best bid and ask prices decrease (resp. increase) by one tick;
\item \label{ass:LOBEvolution3} 
immediately after each price increase (resp. decrease), volumes at the best bid and ask prices are treated as random variables with joint distribution
$f_{+1}$  (resp. $f_{-1}): (\NN^+)^2\to[0, 1]$;
for any~$x_1, x_2\in\NN^+$, $f_{+1}(x_1 ,x_2)$ (resp. $f_{-1}(x_1, x_2)$) 
represents the probability that the best bid and ask queue contain~$x_1$ and~$x_2$ unit limit orders
(of actual size~$x_1\sigma$ and~$x_2\sigma$), 
right after a price increase (resp. decrease).
\end{enumerate}
\end{assumption}
\begin{remark}
Assumption~\ref{ass:LOBEvolution} presumes that the limit order book contains no empty level near mid price
so that price changes are restricted to one tick,
and that price changes are entirely due to exogenous information,
in which case market participants swiftly readjust their order flows at the new best prices, 
as if a new state of the limit order book is drawn from its invariant distribution~\cite{huang2015simulating}.
In other words, we rule out the possibility that depletion of the best bid (resp. ask) queue is followed by the insertion of a buy (resp. sell) limit order inside the spread,
keeping the best bid and ask prices unchanged.
\end{remark}

Modelling order flows from general market participants is based on the `zero-intelligence' approach~\cite{cont2013price, cont2010stochastic, smith2003statistical}.
\begin{assumption}[Poisson order flows]\label{ass:PoissonOrderFlow}\leavevmode
All order book events (market orders, limit orders and cancellations) from general market participants occur according to independent Poisson processes, 
with parameters depending on the most recent price move direction. 
To be more specific,
taking order flows at the best ask price for example, during any period between a price increase (resp. decrease) and the next price change, the following mutually independent events happen:
\begin{enumerate}[(a)]
\item \label{ass:PoissonOrderFlow1}
buy market orders arrive at independent, exponential times with rate $\mu^a_{+1}>0$ (resp. $\mu^a_{-1}>0$);
\item \label{ass:PoissonOrderFlow2}
sell limit orders arrive at independent, exponential times with rate $\kappa^a_{+1}>0$ (resp. $\kappa^a_{-1}>0$);
\item \label{ass:PoissonOrderFlow3}
cancellations of limit orders occur at independent, exponential times with rate $\theta^a_{+1}>0$ (resp. $\theta^a_{-1}>0$) multiplied by the amount (in unit size) of the outstanding sell limit orders. 
\end{enumerate}
We assume an analogous framework at the best bid price, with parameters
$\mu^b_{+1}, \mu^b_{-1}, \kappa^b_{+1}, \kappa^b_{-1}, \theta^b_{+1}, \theta^b_{-1} >0$.
\end{assumption}
\begin{remark}\ 
\begin{itemize}
\item Although the `zero-intelligence' model is not exactly compatible with empirical observations~\cite{zhao2010model}, 
it still retains the major statistical features of limit order books while remaining computationally manageable.
With the `zero-intelligence' hypothesis,
the agent can easily characterise the dynamical properties of the limit order book from historical data without assuming behavioural assumptions for other market participants or resorting to auxiliary assumptions to quantify unobservable parameters.
\item Assumption~\ref{ass:PoissonOrderFlow}(\ref{ass:PoissonOrderFlow3}) means that 
if there are~$v$ limit orders at the best ask (resp. bid) price, each of which can be cancelled at an exponential time with rate~$\theta^a$ (resp.~$\theta^b$) independently,
and the overall cancellation rate is then~$\theta^a v$ (resp.~$\theta^b v$).
\end{itemize}
\end{remark}

\subsection{Objective and admissible trading strategies}\label{subsec:strategyassum}
In the limit order book model introduced in Section~\ref{subsec:LOB basic assumption},
we assume that the agent is risk-neutral and her goal is to maximise the expected wealth obtained through selling the child order of~$\chi\in\NN^+$ unit size ($\chi\sigma$ actual size) within the finite horizon~$\TT$.
The following assumption describes the set of admissible trading strategies:
\begin{assumption}[Admissible trading strategies]\label{ass:AdmissibleTradingS}\ 
\begin{enumerate}[(a)]
\item \label{itm:Strategyfirst}
the agent can only trade immediately after a price change;
let~$\tau_n$ denote her~$n$-th decision epoch, 
namely the time of the~$n$-th price change; 
$\tau_0 = 0$ and the last decision epoch before or at maturity is~$\tau_{\nn}$, 
where~$\nn:=\sup\{n\in\NN: \tau_n\leq T\}$;
\item \label{itm:Strategyfirst'}
at maturity~$T$,
the agent is required to sell all the unexecuted stocks through a market order;
\item \label{itm:Strategysecond}
at each decision epoch~$\tau_n$, the agent observes the bid and ask queues,
with volumes of~$v^b$ and $v^a$ unit size;
she can then post a sell limit order of~$l$ unit size at the best ask price
and submit a sell market order of~$m$ unit size at the best bid price;
we assume that the best bid queue is never depleted by the agent,
and that the agent is slow, meaning that her limit order (of~$l$ unit size) has less time priority upon submission than the limit orders from other market participants (of~$v^a$ unit size);
\item \label{itm:Strategythird}
the agent follows a `no cancellation' rule:
she will not cancel her limit order unless the price goes down;
\item \label{itm:Strategyfourth}
short selling is not allowed.
\end{enumerate}
\end{assumption}
Restricting the agent's trading actions at price changes 
(Assumption~\ref{ass:AdmissibleTradingS}\eqref{itm:Strategyfirst}) might sound relatively strong,
but is necessary to capture the time-priority rule and the executions of the agent's limit orders. 
We shall study later in Section~\ref{sec:OptimalStrategy} how to define an optimal policy
maximising the expected wealth at maturity~$T$. 
\section{Trading procedure modelled by semi-Markov decision processes}\label{sec:TradingProcedure}
A semi-Markov decision model~\cite[Chapter~7]{tijms2003first} 
is a dynamic system whose states are observed at random epochs,
each of when an action is taken and a payoff incurs
(either as a lump sum at that epoch or at a rate continuously until the next epoch)
as a result of the action.
It satisfies the following two Markovian properties:
\begin{enumerate}
\item[\textbf{(M$1$)}] given the current state and the action at a given epoch,
 the time until the next epoch and the next state only depend on the current state and action;
\item[\textbf{(M$2$)}] the payoff incurred at any epoch depends only on the state and the action at that epoch.
\end{enumerate}
The semi-Markov decision model well describes the agent's liquidation problem within our stylised limit order book:
the limit order book with the agent's participation is a dynamic system, 
and the agent's selling action at each decision epoch may lead to a payoff.
Indeed, Assumption~\ref{ass:AdmissibleTradingS}\eqref{itm:Strategyfirst} enables us to track the state of this system merely at the decision epochs, 
and Assumptions~\ref{ass:LOBEvolution}, \ref{ass:PoissonOrderFlow} and~\ref{ass:AdmissibleTradingS}\eqref{itm:Strategysecond}
ensure that the transition mechanism of the system is stationary and satisfies~\textbf{(M$1$)}-\textbf{(M$2$)}.
Moreover, according to Assumption~\ref{assumption:limitpayoffend},
each payoff from the agent's matched limit order is allocated to the nearest incoming decision epoch
in order to make the payoff as a lump sum. 
In Section~\ref{sec:IntoSMDP}, we define a (stationary) semi-Markov decision model with lump-sum payoffs for the agent's liquidation procedure.
In Section~\ref{sec:evolutionSMDP}, we define a horizon-related Markov deterministic policy and illustrate the evolution of the semi-Markov decision process.
In Section~\ref{sec:vfOptimalStrategy}, we give the definition of the expected reward function, the value function and the optimal policy for the agent's liquidation problem.

\subsection{Semi-Markov decision model}\label{sec:IntoSMDP}
The semi-Markov decision model with lump-sum payoffs and the finite-horizon constraint is defined as a six-tuple
$\left\{\Ee, (\Aa(e))_{e\in \Ee}, Q(\cdot, \cdot\lvert \cdot), P(\cdot\lvert\cdot), r(\cdot, \cdot), w(\cdot, \cdot)\right\}$,
where each element is defined below.
\subsubsection{State space}
Fix~$N\in\NN^+$ large enough.
The state space~$\Ee:= \{-1, +1\}\times\{1, \dots, N\}^3\times\{0, \dots, N\}^2$
is the set of all pre-decision conditions of the system
(i.e. the limit order book with the agent's participation)
observed at each decision epoch.
Specifically, the system being in state~$e := (j, v^b, v^a, p, z, y) \in \Ee$ means that: 
\begin{itemize}
\item the ask/bid price change is equal to~$j$ tick;
\item the best bid (resp. ask) queue contains~$v^b$ (resp. $v^a$) unit orders;
\item the ask price\footnote{The stylised limit order book model doesn't implement a positive restriction on the stock price.
But we assume that the stock price is far above zero at inception
and the liquidation horizon~$\TT$ is short,
so that the stock price will never become negative.} is equal to~$p\eps$;
\item the executed part of the limit order posted by the agent at the previous decision epoch is of~$z$ unit size;
\item the agent's remaining inventory position is of~$y$ unit size.  
\end{itemize}

\subsubsection{Action space}\label{sec:actionspace}
The action space ${\Aa} := \{0, \dots, \overline{m}\}\times\{0, \dots, \overline{l}\}$, 
with~$\overline{m}, \overline{l}\in\NN^+$,
represents the set of trading strategies, that is,
the amount (in unit size) of the market and limit order that the agent chooses to submit and post at the best bid and ask price respectively.
The constant~$\overline{m}$ (resp. $\overline{l}$)
represents the maximum amount (in unit size) of a single market (resp. limit) order that the agent is allowed to trade. 
From Assumption~\ref{ass:AdmissibleTradingS}\eqref{itm:Strategysecond}\eqref{itm:Strategyfourth},
the agent's admissible action space in state~$e \in \Ee$ is defined by
\begin{equation}\label{def:actionspace}
{\Aa}(e) := \left\{(m, l)\in {\Aa}: m < v^b, m + l \leq y\right\},
\end{equation}
so that the agent will never consume up the entire best bid queue nor short sell. 
The set of all feasible state-action pairs is denoted by~$\Kk:=\{(e, \alpha) | e\in \Ee, \alpha\in\Aa(e)\}$.

\subsubsection{Semi-Markov kernel}\label{subsubsec:semi-Markov kernel}
Before introducing our next concept, recall the following definition.
\begin{definition}[sub-/semi-Markov kernel]\label{def:kernel}
Let~$(\Omega_1, \mathcal{F}_1)$ and~($\Omega_2, \mathcal{F}_2)$ be real measurable spaces.
A map~$p(\cdot | \cdot): \mathcal{F}_2\times\Omega_1\to[0 ,1]$
is called a sub-Markov kernel on~$\Omega_2$ given~$\Omega_1$ if:
\begin{itemize}
\item for any~$\omega_1\in\Omega_1$, 
$p(\cdot | \omega_1)$ is a measure on~$(\Omega_2, \mathcal{F}_2)$ with $p(\Omega_2 | \omega_1) \leq 1$;
\item for any~$F_2\in\mathcal{F}_2$, $p(F_2\lvert\cdot)$ is a Borel measurable function.
\end{itemize}
In particular, if~$p(\Omega_2\lvert\omega_1) = 1$ for all~$\omega_1\in\Omega_1$,
then $p(\cdot\lvert\cdot)$ is a Markov kernel on~$\Omega_2$ given~$\Omega_1$.
Furthermore, a map~$q(\cdot, \cdot | \cdot): \RR^+_0\times\mathcal{F}_2\times\Omega_1\to[0, 1]$ 
is a semi-Markov kernel on~$\RR^+_0\times\Omega_2$ given~$\Omega_1$ if:
\begin{itemize}
\item for~$(F_2, \omega_1)\in\mathcal{F}_2\times\Omega_1$,
$q(\cdot, F_2\lvert \omega_1)$ is non-decreasing, right-continuous and~$q(0, F_2\lvert \omega_1) = 0$;
\item for~$t\geq0$, $q(t, \cdot | \cdot)$
is a sub-Markov kernel on~$\Omega_2$ given~$\Omega_1$;
\item the limit~$\displaystyle\lim_{t\uparrow \infty}q(t, \cdot | \cdot)$
is a Markov kernel on~$\Omega_2$ given~$\Omega_1$.
\end{itemize}
\end{definition}
In our model, let~$Q(\cdot, \cdot\lvert \cdot)$  be a semi-Markov kernel on $\RR^+_0\times \Ee$ given~$\Kk$,
determining the (stationary) transition mechanism of the semi-Markov decision process:
for any~$t\geq 0$ and~$\tilde{e}\in \Ee$, 
given the state-action pair~$(e, \alpha)\in \Kk$ at some decision epoch,
the quantity\footnote{By abuse of language,
we write~$Q(t, \{\tilde{e}\}\lvert({e}, {\alpha}))$ as~$Q(t, \tilde{e}\lvert({e}, {\alpha}))$.}
$Q(t, \tilde{e}\lvert ({e}, {\alpha}))$ represents 
the (joint) probability that the time until the next decision epoch is less than or equal to~$t$ 
and the next system state is~$\tilde{e}$. 
Detailed computations are given in Section~\ref{Section:SemiMarkovKernel}.

\subsubsection{Terminal kernel}
The terminal kernel~$P(\cdot\lvert\cdot)$ 
is a sub-Markov kernel on~$\NN$ given~$\Kk\times\TT_-$,
and describes the execution dynamics between the last decision epoch and the maturity:
for any~$\zz\in\NN$, given the state-action pair~$(e, \alpha)\in \Kk$ and the time to maturity~$\lambda\in\TT_-$ at some decision epoch\footnote{A decision epoch with time to maturity~$\lambda < 0$ means that it happens a period of time~$\lvert\lambda\lvert$ after the maturity.},
the quantity\footnote{By abuse of language,
we write~$P\left(\{\zz\}\lvert\left((e, \alpha), \lambda\right)\right)$ as~$P(\zz\lvert(e, \alpha), \lambda)$.} 
$P(\zz\lvert(e, \alpha), \lambda)$
represents the (joint) probability that the time until the next decision epoch is strictly larger than~$\lambda$ and the executed part of the limit order up to the maturity is of~$\zz$ unit size.
Detailed computations are given in Section~\ref{Section:SemiMarkovKernel}.
\begin{remark}\label{rmk: tk}
According to our modelling framework, the terminal kernel satisfies the following properties:
\begin{itemize}
\item 
$P(0\lvert (e, \alpha), \lambda) = 1$ when~$\lambda\leq 0$;
\item 
$\sum_{\zz \geq 0}P(\zz\lvert (e, \alpha), \lambda) = 1 - Q(\lambda, \Ee\lvert (e, \alpha))$ when~$\lambda>0$;
\item $P(\zz\lvert(e, \alpha), \lambda) = 0$  when~$\zz > l$;
\end{itemize}
for any~$(e, \alpha)\in \Kk$.
\end{remark}

\subsubsection{Periodical reward function}
The {periodical reward function}~$r: \Kk\to\RR^+_0$
is defined as
\begin{equation}\label{eq:p_PFun}
r(e, \alpha) :=\rho\left[m\left(p - 1\right) + z\left(p - j\right)\right],
\qquad\text{for all } (e, \alpha)\in \Kk,
\qquad\qquad\text{where }\rho := \eps \sigma,
\end{equation}
and represents the lump-sum payoff associated with a decision epoch
given the state-action pair~$(e, \alpha)$.
Specifically, 
the definition~\eqref{eq:p_PFun} is given based on 
the following assumption that assigns the payoff from the matched part of the agent's limit order to the nearest incoming decision epoch.
\begin{assumption}[Periodic reward function]\label{assumption:limitpayoffend}
For $n \in \NN^+$, the payoff from the matched limit order within the interval~$[\tau_{n-1}, \tau_{n})$ is allocated at $\tau_{n}$. 
\end{assumption}
Assuming that the system is in state~$e\in \Ee$ and the agent takes action~$\alpha\in\Aa(e)$ at some decision epoch.
She then earns an immediate payoff worth~$m(p - 1)\rho$ from submitting the market order of~$m$ unit size at the best bid price~$(p - 1)\eps$.
On top of that, the matched limit order of $z$ unit size at the previous best ask price~$(p- j)\eps$ entails a payoff worth~$z(p - j)\rho$,
which is allocated at the current decision epoch according to Assumption~\ref{assumption:limitpayoffend}.

\subsubsection{Terminal reward function}
The terminal reward function $w:\Kk\times\NN\to\RR^+_0$ is defined as
\begin{equation}\label{eq:t_PFun}
w(e, \alpha, \zz) := 
\rho\left[\left(p - 1\right)\left(y - m\right) + \zz\right] - g\left(y - m - \zz\right),
\qquad\text{for all }(e, \alpha)\in \Kk \text{ and } \zz\in\NN,
\end{equation}
where the market impact function $g: \NN\to \RR^+_0$ is of the form
\begin{equation}\label{eq:mifg}
g(x) :=
\rho\frac{x}{\overline{v}},
\end{equation}
for a constant $\overline{v}\in\NN^+$.
For any~$(e, \alpha)\in \Kk$ and~$\zz\in\NN$,
the quantity~$w(e, \alpha, \zz)$
represents the lump-sum payoff associated with the maturity~$T$,
given the state-action pair~$(e, \alpha)$ at the last decision epoch,
and the matched part of the agent's limit order between the last decision epoch and the maturity being of~$\zz$ unit size.
Particularly,
the identity~\eqref{eq:t_PFun} is given based on the following assumption:
\begin{assumption}[Terminal reward function]\label{ass:TRF}\leavevmode
\begin{enumerate}[(a)]
\item \label{ass:TRFa}the payoff from the matched limit order obtained within the interval~$[\tau_{\nn},T)$ is allocated at~$T$;
\item \label{ass:TRFb}
when depicting the market impact brought by the market order at maturity, 
we assume that the impact is linear with~$\overline{v}$ representing the average depth (in unit size) on the bid side of the limit order book;
\item \label{ass:TRFc}
the unexecuted shares at maturity cannot sweep all the liquidity on the bid side of the limit order book, so that the terminal reward function is~$\RR^+_0$-valued. 
\end{enumerate}
\end{assumption}
Assumption~\ref{ass:TRF}\eqref{ass:TRFb} yields the market impact function~$g(\cdot)$ in~\eqref{eq:mifg}.
Furthermore, based on Assumption~\ref{ass:TRF}\eqref{ass:TRFa}\eqref{ass:TRFb},
the terminal reward~$w(e, \alpha, \zz)$
consists of the payoff from the matched limit order (of amount~$\rho p \zz$) 
and the market order at maturity (of amount~$\rho(p-1)(y-m-\zz)$), 
deducted by the corresponding market impact (of amount~$g(y -m - \zz)$). 

\subsection{Dynamics of the finite-horizon semi-Markov decision process}\label{sec:evolutionSMDP}
Assume that the agent applies a horizon-related Markov deterministic policy defined below,
specifying a decision rule for her action at each epoch based on the current state and time to maturity.
\begin{definition}\label{def:hrpolicy}
A decision rule is a measurable function
$$
\phi: \Ee\times\TT_-\ni(e,\lambda) \mapsto \alpha\in \Aa(e),
$$
such that~$\phi(e, \lambda) = (0, 0)$ for any~$(e, \lambda)\in \Ee\times\RR^-$.
Let~$\Phi$ represent the set of decision rules.
A horizon-related Markov deterministic policy is a sequence of decision rules
$$
\pi := \{\phi_0, \phi_1, \phi_2, \dots\},
$$
with $\phi_n\in\Phi$ for any~$n\in\NN$.
We denote by~$\Pi$ the set of horizon-related Markov deterministic policies.
A policy~$\pi\in\Pi$ is said to be stationary if there exists~$\phi\in\Phi$ such that~$\phi_n = \phi$ for any~$n\in\NN$
and we write~$\pi = \{\phi, \phi, \dots\} := \pi^\phi$.
We denote~$\Pi^S$ the set of stationary horizon-related Markov deterministic policies.
\end{definition}
\begin{remark}\label{rem:StopTrading}
At the~$n$-th decision epoch with system state~$e_n$
and time to maturity $\lambda_n := T - \tau_n$,
an action~$a_n = \phi_n(e_n, \lambda_n)$ is given by the decision rule~$\phi_n$
when the policy~$\pi\in\Pi$ is applied.
In particular,
the agent stops trading at any decision epoch~$\tau_n$ with $n>\nn$ (namely~$\lambda_n <0$) as~$\alpha_n = (0, 0)$ by Definition~\ref{def:hrpolicy},
fulfilling Assumption~\ref{ass:AdmissibleTradingS}\eqref{itm:Strategyfirst'}.
\end{remark}
Table~\ref{tab:example of SMDP} summarises the evolution of the semi-Markov decision model when implementing a policy~$\pi\in\Pi$.
Suppose that the system is in state~$e_0$ at inception~$\tau_0$,
and the agent has a planned trading horizon~$\lambda_0$.
According to the policy~$\pi$, she chooses the action~$\alpha_0 = \phi_0(e_0, \lambda_0)$.
It then takes a period of time~$t_1$ to reach the next decision epoch~$\tau_1 = \tau_0+t_1$,
 at which point the system state changes to~$e_1$ and the time to maturity for the agent becomes~$\lambda_1 = \lambda_0-t_1$.
She then chooses the action~$\alpha_1 = \phi_1(e_1, \lambda_1)$, and so on.
At the~$n$-th decision epoch, a periodic payoff of amount~$r(e_n, \alpha_n)$ incurs.
At maturity~$T$, a terminal payoff~$w(e_{\nn}, \alpha_{\nn},\zz)$ is obtained. 
In particular, the agent takes no action after~$T$ according to Remark~\ref{rem:StopTrading},
and correspondingly no payoff is paid.


\begin{table}[!htp]
\centering
\resizebox{\textwidth}{!}{
\begin{tabular}{|l|l|l|l|l|l|}
\hline
Index & Time &State &Time to Maturity &Action & Payoff\\
\hline\hline
Initial & $\tau_0$ & $e_0$  & $\lambda_0\geq0$ & $\alpha_0 = \phi_0(e_0, \lambda_0)$ & $r(e_0, \alpha_0)$\\
1\textsuperscript{st} & $\tau_1 = \tau_0 + t_1$ & $e_1$ & $\lambda_1 = \lambda_0 - t_1\geq0$ & $\alpha_1 = \phi_1(e_1, \lambda_1)$ & $r(e_1, \alpha_1)$\\
2\textsuperscript{nd} & $\tau_2 = \tau_1 + t_2$ & $e_2$ & $\lambda_2= \lambda_1 - t_2\geq0$ & $\alpha_2= \phi_2(e_2, \lambda_2)$ & $r(e_2, \alpha_2)$\\
$\vdots$ & $\vdots$ & $\vdots$ & $\vdots$ & $\vdots$ & $\vdots$\\
$(\nn-1)$-th & $\tau_{\nn-1} = \tau_{\nn-2} + t_{\nn-1}$ & $e_{\nn-1}$ & $\lambda_{\nn-1} = \lambda_{\nn-2} - t_{\nn-1}\geq0$ & $\alpha_{\nn-1}= \phi_{\nn-1}(e_{\nn-1}, \lambda_{\nn-1})$ & $r(e_{\nn-1}, \alpha_{\nn-1})$\\
$\nn$-th & $\tau_\nn = \tau_{\nn-1} + t_{\nn}$ & $e_{\nn}$ & $\lambda_{\nn} = \lambda_{\nn-1} - t_\nn\geq0$ & $\alpha_{\nn}= \phi_\nn(e_\nn, \lambda_\nn)$ & $r(e_{\nn}, \alpha_{\nn})$\\
\hline
 & Terminal $T$ & & & & $w(e_{\nn}, \alpha_{\nn}, \zz)$\\
\hline
$(\nn+1)$-th & $\tau_{\nn+1} = \tau_{\nn} + t_{\nn+1}$ & $e_{\nn+1}$ & $\lambda_{\nn+1} = \lambda_\nn - t_{\nn+1}<0$ & $\alpha_{\nn+1} = (0, 0)$ & $0$\\
$\vdots$ & $\vdots$ & $\vdots$ & $\vdots$ & $\vdots$ & $\vdots$\\
\hline
\end{tabular}
}
\caption{Evolution of the semi-Markov decision process under policy~$\pi\in\Pi$}
\label{tab:example of SMDP}
\end{table}

In the following, we construct the semi-Markov decision process in a probability space
based on the Ionescu Tulcea's Theorem.

\begin{definition}\label{def: measurablespace}
Let~$(\Omega, \mathcal{F})$ be a measurable space consisting of the sample space~$\Omega$,
defined by
$$
\Omega := \Big\{\nn\in\NN, \zz\in\NN, \left(\{t_n, e_n, \lambda_n, \alpha_n\} 
\in\RR^+_0\times \Ee\times\TT_- \times\Aa(e_n)\right)_{n \in \NN}
\Big\},
$$
and the corresponding Borel $\sigma$-algebra $\mathcal{F}$.
Define the random variables~$\Nn$, $\ZZ$, $X_n$, $E_n$, $\Lambda_n$, $A_n$ on~$(\Omega, \mathcal{F})$ as:
\begin{equation*}
\begin{array}{rlrl}
\Nn(\omega) & = \nn, \qquad  & \ZZ(\omega) & = \zz,\\
X_n(\omega) & = t_n, \qquad & E_n(\omega) & := \left(J_n, V^b_n, V^a_n, P_n, Z_n, Y_n\right)(\omega) = e_n,\\
\Lambda_n(\omega) & = \lambda_n, & \qquad \qquad A_n(\omega) & :=\left(M_n, L_n\right)(\omega) = \alpha_n,
\end{array}
\end{equation*}
for any~$\omega\in\Omega$ and~$n\in\NN$,
where
\begin{itemize}
\item $X_n$ is the time between the~$(n-1)$-th and the~$n$-th decision epoch
($X_0 = 0$ almost surely);
\item $E_n, \Lambda_n, A_n$ represent the system state, time to maturity and agent's action at the~$n$-th decision epoch;
\item $\Nn$ is the index of the last decision epoch;
\item $\ZZ$ is the amount (in unit size) of the agent's limit order executed between the~$\Nn$-th decision epoch and the maturity.
\end{itemize}
\end{definition}


\begin{remark}\label{rmk:evolution_E}
Based on this modelling framework, the following properties hold almost surely for~$n\in\NN$
\begin{itemize}
\item $\Lambda_{n+1} = \Lambda_n - X_{n+1}$: evolution of the time to maturity;
\item $P_{n+1} = P_n + J_{n+1}$: evolution of the ask price (in tick size);
\item $Y_{n+1} = Y_n - M_n - Z_{n+1}$: evolution of the inventory position (in unit size);
\item $Z_{n+1} \leq L_{n}$:
the amount of the matched limit order cannot exceed that of the limit order posted by the agent in each queueing race;
\item $\Nn = \sup\{n\in\NN: \Lambda_n \geq 0\}$: index of the last decision epoch;
\item $\ZZ\leq Z_{\Nn+1}$: 
the amount of the matched limit order between the last decision epoch and the maturity cannot exceed that of limit order executed when there is no finite-horizon restriction.
\end{itemize}
\end{remark}
\begin{theorem}\label{thm:Tulcea}
[Tulcea's Theorem~\cite[Section 2.7.2]{ash2014real}]
For any~$(e, \lambda)\in \Ee\times\TT$ and~$\pi\in\Pi$,
there exists a unique probability measure~$\PP^{\pi}_{(e, \lambda)}$ on~$(\Omega, \mathcal{F})$ 
such that,
for any~$t\geq0$, $\tilde{e}\in \Ee$, $\alpha\in \Aa$, $\zz\in\NN$ and~$n\in\NN$,
\begin{equation*}
\begin{array}{rll}
\PP^\pi_{(e, \lambda)}(X_0 = 0, E_0 = e, \Lambda_0 = \lambda)&= 1,\\
\PP^\pi_{(e, \lambda)}(A_n = \alpha\lvert H_n = h_n)
&= \ind_{\{\phi_n(e_n, \lambda_n) = \alpha\}},
\\
\PP^\pi_{(e, \lambda)}(X_{n+1}\leq t, E_{n+1} = \tilde{e}\lvert H_n = h_n, A_n = \alpha_n)&= 
Q(t, \tilde{e}\lvert (e_n, \alpha_n)),\\
\PP^\pi_{(e, \lambda)}(X_{n+1} > \lambda_n, \ZZ = \zz\lvert H_n = h_n, A_n = \alpha_n)&= P(\zz\lvert(e_n, \alpha_n), \lambda_n),
\end{array}
\end{equation*}
where
$$
H_n := \left\{ \begin{array}{ll}
(\{X_0, E_0, \Lambda_0\}), & \text{if }n = 0,\\
\big(\{X_{i}, E_{i}, \Lambda_{i}, A_{i}\}_{i=0, \ldots, n-1}, 
\{X_n, E_n, \Lambda_n\}\big), 
& \text{if }n \in \NN^+,
\end{array}
\right.
$$
is the sequence of random variables describing the history up to the~$n$-th decision epoch
(realisations of the random variables (or sequences of random variables) are denoted by the corresponding lower case letters).
\end{theorem}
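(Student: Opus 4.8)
The plan is to recognise the statement as an instance of the Ionescu--Tulcea extension theorem (the cited ``Tulcea's Theorem''): I would build the appropriate sequence of one-step transition kernels, check they are genuine Markov kernels depending measurably on the conditioning data, and then invoke the theorem to obtain~$\PP^\pi_{(e,\lambda)}$ together with its uniqueness.

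First I would fix the measurable structure, realising the trajectory part of~$\Omega$ as the countable product alternating ``transition'' blocks~$\RR^+_0\times\Ee\times\TT_-$ (carrying~$(X_n,E_n,\Lambda_n)$) and ``action'' blocks~$\Aa$ (carrying~$A_n$), with~$\mathcal F$ the product $\sigma$-algebra and~$H_n$ the canonical projection onto the first $n$ blocks. The kernels fed into the theorem are: the initial law~$\delta_{(0,e,\lambda)}$, which forces~$X_0=0$, $E_0=e$, $\Lambda_0=\lambda$ almost surely; for each~$n$, the degenerate action kernel~$h_n\mapsto\delta_{\phi_n(e_n,\lambda_n)}$, measurable since~$\phi_n\in\Phi$ is measurable, $\Aa$ is finite, and~$\phi_n(e_n,\lambda_n)\in\Aa(e_n)$ by Definition~\ref{def:hrpolicy}; and for each~$n$, the transition kernel~$q_{n+1}(\cdot\,|\,h_n,\alpha_n)$ on~$\RR^+_0\times\Ee\times\TT_-$ defined as the image of~$Q(\mathrm dt,\cdot\,|\,(e_n,\alpha_n))$ under~$(t,\tilde e)\mapsto(t,\tilde e,\lambda_n-t)$. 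I would check that~$q_{n+1}$ is a Markov kernel because~$\lim_{t\uparrow\infty}Q(t,\Ee\,|\,\cdot)=1$ by Definition~\ref{def:kernel}, and that it is measurable in~$(h_n,\alpha_n)$ since~$Q$ sees the history only through~$e_n$. Tulcea then yields a measure on the product satisfying the first three displayed identities, and the a.s.\ relations of Remark~\ref{rmk:evolution_E} follow: $\Lambda_{n+1}=\Lambda_n-X_{n+1}$ is built into the pushforward, while~$P_{n+1}=P_n+J_{n+1}$, $Y_{n+1}=Y_n-M_n-Z_{n+1}$ and~$Z_{n+1}\le L_n$ hold once one verifies (in Section~\ref{Section:SemiMarkovKernel}) that~$Q(t,\cdot\,|\,(e_n,\alpha_n))$ is supported on states~$\tilde e$ compatible with those relations.

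Next I would deal with~$\Nn$ and~$\ZZ$, which are not coordinates of the product. For~$\Nn$: since~$\Ee,\Aa$ are finite, $\Kk$ is finite, and for each~$(e,\alpha)\in\Kk$ one has~$Q(\delta,\Ee\,|\,(e,\alpha))\downarrow Q(0,\Ee\,|\,(e,\alpha))=0$ as~$\delta\downarrow0$ (right-continuity and~$Q(0,\cdot\,|\,\cdot)=0$), so there is a single~$\delta>0$ with~$Q(\delta,\Ee\,|\,(e,\alpha))\le\tfrac12$ for all~$(e,\alpha)\in\Kk$, i.e.\ $\PP^\pi_{(e,\lambda)}(X_{n+1}>\delta\,|\,H_n,A_n)\ge\tfrac12$ for every~$n$. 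By L\'evy's conditional Borel--Cantelli lemma, $\{X_{n+1}>\delta\}$ occurs infinitely often almost surely, hence~$\sum_nX_n=\infty$ and~$\Lambda_n\to-\infty$ a.s.; since~$\Lambda_0=\lambda\ge0$, the quantity~$\Nn:=\sup\{n:\Lambda_n\ge0\}$ is an a.s.\ finite, measurable non-negative integer, matching Remark~\ref{rmk:evolution_E}. For~$\ZZ$: I would enrich each transition block with an auxiliary~$\NN$-valued coordinate~$W_{n+1}$ whose conditional law, given the rest of the block, is~$\delta_0$ on~$\{X_{n+1}\le\lambda_n\}$ and, on~$\{X_{n+1}>\lambda_n\}$, the sub-probability~$P(\cdot\,|\,(e_n,\alpha_n),\lambda_n)$ normalised by~$\overline F(\lambda_n):=1-Q(\lambda_n,\Ee\,|\,(e_n,\alpha_n))$ (positive for~$\lambda_n\in\TT$, since the queueing race can last arbitrarily long), coupled to~$E_{n+1}$ so that~$W_{n+1}\le Z_{n+1}$. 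Tulcea still applies to this enriched sequence, and~$\sum_\zz P(\zz\,|\,\cdot)=\overline F(\lambda_n)$ from Remark~\ref{rmk: tk} gives~$\PP^\pi_{(e,\lambda)}(X_{n+1}>\lambda_n,\,W_{n+1}=\zz\,|\,H_n,A_n)=P(\zz\,|\,(e_n,\alpha_n),\lambda_n)$. Setting~$\ZZ:=W_{\Nn+1}$ (meaningful by the previous step) and pushing the enriched measure forward onto~$(\Omega,\mathcal F)$ after forgetting the~$W$'s yields~$\ZZ\le Z_{\Nn+1}$ and the fourth displayed identity (at the step~$n=\Nn$ this is exactly the marginal just obtained). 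Uniqueness is the uniqueness clause of Ionescu--Tulcea applied to these kernels.

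The hard part, beyond invoking Ionescu--Tulcea, will be (i) establishing that~$Q$ and~$P$ assemble into a single well-defined Markov transition kernel per step that is simultaneously consistent with \emph{all} the a.s.\ identities of Remark~\ref{rmk:evolution_E} --- in particular that the coupling of the auxiliary~$W_{n+1}$ with the~$Z$-component of~$E_{n+1}$ can be arranged so that~$\ZZ\le Z_{\Nn+1}$ --- and (ii) the almost-sure finiteness of~$\Nn$, which rests on the uniform non-degeneracy of~$Q$ near~$t=0$ over the finite set~$\Kk$ together with the conditional Borel--Cantelli step. The remaining verifications are routine bookkeeping on top of the Ionescu--Tulcea theorem.
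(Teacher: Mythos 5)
Your proposal is correct, and it supplies precisely the construction that the paper leaves implicit: the paper states this result as a bare citation of the Ionescu--Tulcea theorem and records no proof, so there is no in-paper argument to diverge from. Your assembly of the one-step kernels (the initial Dirac mass at $(0,e,\lambda)$, the degenerate action kernel $\delta_{\phi_n(e_n,\lambda_n)}$, and the pushforward of $Q(\D t,\cdot\,\lvert\,(e_n,\alpha_n))$ under $(t,\tilde e)\mapsto(t,\tilde e,\lambda_n-t)$) is the standard instantiation of that citation, and your measurability checks are the right ones. Two of your additions go beyond what the paper says at this point and are genuinely needed. First, the almost-sure finiteness of $\Nn$ (without which the coordinate $\nn\in\NN$ of the sample space is ill defined) is exactly the content of the paper's Lemma~\ref{lem:fntde}, proved only later via the uniform bound $Q(\zeta,\Ee\lvert(e,\alpha))\leq 1-\E^{-2\iota\zeta}$ over the finite set $\Kk$; your right-continuity-at-zero argument combined with conditional Borel--Cantelli is an equally valid route to the same conclusion. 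Second, you correctly observe that the displayed identities pin down only two marginals of the conditional law of $(X_{n+1},E_{n+1},\ZZ)$ given $(H_n,A_n)$, so the asserted uniqueness tacitly presupposes a convention on how $\ZZ$ is coupled to $Z_{\Nn+1}$ (consistent with Remark~\ref{rmk:evolution_E}); your enrichment by the auxiliary coordinates $W_{n+1}$ is a legitimate way to resolve this, and identifying the gap is something the paper does not do.
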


\subsection{Value function and optimal policy}\label{sec:OptimalStrategy}\label{sec:vfOptimalStrategy}
Consider an agent with objective and trading strategies as described in Section~\ref{subsec:strategyassum}, introduce the following definition.
\begin{definition}\label{Def:expectedReward}
Define the {finite-horizon expected reward function} under a policy~$\pi\in \Pi$ by
\begin{equation}\label{eq:originValueFunction}
V^{\pi}(e, \lambda)
:= \EE^{\pi}_{(e, \lambda)}
\left(\displaystyle\sum_{n = 0}^{\Nn}r(E_n, A_n)
 + w(E_\Nn, A_\Nn, \ZZ)\right),
 \qquad\text{for any~$(e, \lambda)\in \Ee\times\TT$},
\end{equation}
as well as the value function
\begin{equation}\label{eq:VF}
V^*(e, \lambda) := \sup\left\{V^\pi(e, \lambda), \pi\in \Pi\right\}.
\end{equation}
A policy $\pi^*\in \Pi$ is called $\TT$-optimal if the equality
\begin{equation}\label{eq:optimalpolicy}
V^{\pi^*} (e, \lambda) = V^*(e, \lambda)
\end{equation}
holds for all $(e, \lambda)\in \Ee\times \TT$.
\end{definition}
\begin{remark}\label{rmk:vfchange}
For any $(e,\lambda) \in \Ee\times\TT$, we can rewrite the quantity~$V^\pi(e,\lambda)$ 
in~\eqref{eq:originValueFunction} as
\begin{align*}
V^\pi(e, \lambda) & =
 \EE^{\pi}_{(e, \lambda)}\left(\displaystyle\sum_{n = 0}^{\infty}
\left(r(E_n, A_n)\ind_{\{\Nn \geq n\}}\right)
 + w(E_n, A_n, \ZZ)\ind_{\{\Nn = n\}}\right)\\
 & = \EE^{\pi}_{(e, \lambda)}\left(\displaystyle\sum_{n = 0}^{\infty}
\left(r(E_n, A_n)\ind_{\{\Lambda_n\geq 0\}}
 + \displaystyle w(E_n, A_n, \ZZ)\ind_{\{0\leq\Lambda_n < X_{n+1}\}}\right)\right)\\
 & =
 \displaystyle\sum_{n=0}^\infty \EE_{(e, \lambda)}^\pi\Big(
 r(E_n, A_n)\ind_{\{\Lambda_n\geq 0\}}
 + w(E_n, A_n, \ZZ)\ind_{\{0\leq\Lambda_n < X_{n+1}\}}
\Big),
\end{align*}
where the second equality follows by writing
\begin{align*}
\{\Nn\geq n\} &= \{\Lambda_0 \geq0, \dots, \Lambda_n\geq 0\} = \{\Lambda_n \geq 0\},\\
\{\Nn = n\} & = \{\Lambda_0 \geq0, \dots, \Lambda_n\geq 0, \Lambda_{n+1}<0\} = \{\Lambda_n\geq 0, \Lambda_{n+1}<0\}
 = \{0\leq \Lambda_{n} < X_{n+1}\},
\end{align*}
since the sequence~$\{\Lambda_n\}_{n\in\NN}$ is non-increasing,
and the third equality is due to the non-negativity of the periodic/terminal reward function and the monotone convergence theorem.
\end{remark}

\section{Semi-Markov kernel}\label{Section:SemiMarkovKernel}
We now provide the expressions for the semi-Markov kernel~$Q(\cdot, \cdot\lvert \cdot)$ 
and the terminal kernel~$P(\cdot\lvert \cdot)$
defined in Section~\ref{sec:IntoSMDP} using the language of queueing theory.
We first (Section~\ref{sec:closedformSMK}) model the dynamics of the best queues 
with the agent's participation as generalised birth-death processes, 
and derive the closed-form expressions for the semi-Markov kernel 
and the terminal kernel in all possible scenarios
in terms of the distributions of the first-passage time of the generalised birth-death processes to zero.
We then (Section~\ref{sec:queuedistribution}) compute these distributions by using Laplace method. 
\subsection{Closed-form expressions}\label{sec:closedformSMK}
For notational convenience, we shall fix an element~$(\ee, \uplambda)$ in~$\Ee\times\TT$ 
together with a deterministic stationary policy~$\pi\in\Pi$
and denote~$\PP^{\pi}_{(\ee, \uplambda)}$ by~$\PP$
throughout this section.
\subsubsection{Semi-Markov kernel}
According to Theorem~\ref{thm:Tulcea} and the Markovian property~\textbf{(M$1$)},
we can express the semi-Markov kernel as a (stationary) distribution 
of the duration and outcome of a queueing race given its initial condition and the agent's action:
\begin{equation}\label{eq:semiMarkovkernel}
Q(t, \tilde{e}\lvert ({e}, {\alpha})) = 
\PP(X_{n+1} \leq t, E_{n+1} = \tilde{e}\lvert E_n= {e}, A_n = {\alpha}),
\quad\text{ for any }t\geq0, \tilde{e} \in \Ee, ({e}, {\alpha})\in \Kk, n\in\NN.
\end{equation}
To simplify further calculations, we now factorise the conditional probability 
in~\eqref{eq:semiMarkovkernel}.
\begin{proposition}
For any $\tilde{e} := \left(\tilde{j}, \tilde{v}^b, \tilde{v}^a, \tilde{p}, \tilde{z}, \tilde{y}\right)\in \Ee$
and $\left({e} := (j, v^b, v^a, p, z, y), {\alpha} := (m, l)\right)\in \Kk$,
we have
\begin{equation}\label{eq:semiMarkovkernelsimp}
Q(t, \tilde{e}\lvert ({e}, {\alpha}))  = 
{\Qq_{j, v, \alpha}}\left(t, \tilde{j}, \tilde{z}\right)
f_{\tilde{j}}\left(\tilde{v}^b, \tilde{v}^a\right)
\ind_{\{\tilde{p} = {p} + \tilde{j}\}}
\ind_{\{\tilde{y} = {y} - {m} - \tilde{z}\}},
\end{equation}
for all $t\geq 0$, where\footnote{\,$\PP$ (short for~$\PP^{\pi}_{(\ee, \uplambda)}$ in this section) is the probability measure introduced in Theorem~\ref{thm:Tulcea},
and we use the short-hand notation
$\PP\left(X_{n+1}\leq t, J_{n+1} = \tilde{j}, Z_{n+1} = \tilde{z}, \Big\lvert\cdots\right)
 = \PP\left(X_{n+1}\leq t, J_{n+1} = \tilde{j}, Z_{n+1} = \tilde{z}, V^b_{n+1}\in\NN^+, V^a_{n+1}\in\NN^+, P_{n+1}\in\NN^+, Y_{n+1}\in\NN \Big\lvert\cdots\right)$}
 for any~$n\in \NN$,
$$
{\Qq_{j, v, \alpha}}\left(t, \tilde{j}, \tilde{z}\right) :=
\PP\left(X_{n+1}\leq t, J_{n+1} = \tilde{j}, Z_{n+1} = \tilde{z} \Big\lvert 
J_n = {j}, (V^b_n, V^a_n) = ({v}^b, {v}^a), A_n = \alpha\right).
$$
\end{proposition}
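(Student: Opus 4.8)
The plan is to unwind the definition of $Q$ via Tulcea's Theorem (Theorem~\ref{thm:Tulcea}) and then split the event $\{E_{n+1}=\tilde e\}$ into the components that are \emph{determined} by $(e,\alpha)$ and the queueing race on the one hand, and the components that are \emph{freshly resampled} at the price change on the other. Concretely, writing $E_{n+1}=(J_{n+1},V^b_{n+1},V^a_{n+1},P_{n+1},Z_{n+1},Y_{n+1})$, I would first observe from Remark~\ref{rmk:evolution_E} that, conditionally on $E_n=e$ and $A_n=\alpha=(m,l)$, the coordinates $P_{n+1}$ and $Y_{n+1}$ are deterministic functions of the remaining data: $P_{n+1}=p+J_{n+1}$ and $Y_{n+1}=y-m-Z_{n+1}$. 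Hence on the event $\{J_{n+1}=\tilde j,\,Z_{n+1}=\tilde z\}$ the events $\{P_{n+1}=\tilde p\}$ and $\{Y_{n+1}=\tilde y\}$ reduce to the indicators $\ind_{\{\tilde p=p+\tilde j\}}$ and $\ind_{\{\tilde y=y-m-\tilde z\}}$, which factor straight out of the probability.

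Next I would handle the volume coordinates $(V^b_{n+1},V^a_{n+1})$. By Assumption~\ref{ass:LOBEvolution}\eqref{ass:LOBEvolution3}, immediately after a price move of direction $\tilde j$ the new best-queue volumes are drawn from $f_{\tilde j}$, independently of everything that happened before the move; in particular, conditionally on $\{J_{n+1}=\tilde j\}$ (and on the whole pre-move history, a fortiori on $\{E_n=e,A_n=\alpha\}$ and on the value of $Z_{n+1}$, which is measurable with respect to the race dynamics before the move), one has $\PP(V^b_{n+1}=\tilde v^b,V^a_{n+1}=\tilde v^a\mid\cdots)=f_{\tilde j}(\tilde v^b,\tilde v^a)$. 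This lets the factor $f_{\tilde j}(\tilde v^b,\tilde v^a)$ come out, leaving exactly the joint probability of $\{X_{n+1}\le t,\,J_{n+1}=\tilde j,\,Z_{n+1}=\tilde z\}$. Finally, I would argue that this residual probability depends on the conditioning pair $(e,\alpha)$ only through $(j,v^b,v^a,\alpha)$ — i.e. not through $(p,z,y)$: the duration and the direction of the next price move, together with how much of the agent's posted limit order gets filled, are governed purely by the queueing race between the best bid and ask queues, whose generalised birth–death dynamics depend only on the side parameters (via the most recent move direction $j$), the current queue sizes $(v^b,v^a)$, and the agent's order $\alpha=(m,l)$ (the market order $m$ shifts the bid queue down, the limit order $l$ is appended behind the ask queue). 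This is precisely the content of the Markovian property \textbf{(M$1$)} applied to the reduced state; it identifies the residual probability with $\Qq_{j,v,\alpha}(t,\tilde j,\tilde z)$ as defined, and assembling the three factored pieces gives~\eqref{eq:semiMarkovkernelsimp}.

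The step I expect to require the most care is the independence claim used to extract $f_{\tilde j}(\tilde v^b,\tilde v^a)$: one must be precise that the resampling of the volumes at a price change is independent of the \emph{partial fill} $Z_{n+1}$ and of the elapsed time $X_{n+1}$, both of which are realized \emph{up to and including} the price change but \emph{before} the new volumes are posted. Formally this is a statement about the structure of $Q$ built in Theorem~\ref{thm:Tulcea} together with Assumption~\ref{ass:LOBEvolution}\eqref{ass:LOBEvolution3}; I would phrase it by conditioning on the $\sigma$-algebra generated by the race up to the $(n+1)$-th price change (which determines $X_{n+1}$, $J_{n+1}$ and $Z_{n+1}$) and invoking the assumption that the post-move volumes are a fresh draw from $f_{J_{n+1}}$. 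The remaining manipulations — pulling deterministic indicators out of a conditional probability and renaming the conditioning variables using \textbf{(M$1$)} — are routine.
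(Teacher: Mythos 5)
Your proposal is correct and follows essentially the same route as the paper's proof: condition, pull out the deterministic indicators for $P_{n+1}$ and $Y_{n+1}$ via Remark~\ref{rmk:evolution_E}, extract $f_{\tilde j}(\tilde v^b,\tilde v^a)$ from the fresh resampling in Assumption~\ref{ass:LOBEvolution}\eqref{ass:LOBEvolution3}, and identify the residual factor with $\Qq_{j,v,\alpha}$. Your extra care about the independence of the post-move volume draw from $(X_{n+1},Z_{n+1})$ and about why the residual probability does not depend on $(p,z,y)$ only makes explicit what the paper leaves implicit.
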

\begin{proof}
According to Assumption~\ref{ass:LOBEvolution} and Remark~\ref{rmk:evolution_E}, we can write
\begin{align*}
Q(t, \tilde{e}\lvert ({e}, {\alpha}))
& = \PP\left(X_{n+1}\leq t, J_{n+1} = \tilde{j}, Z_{n+1} = \tilde{z} \lvert E_n = e, A_n = \alpha\right)\times\\
&\qquad \PP\left( (V^b_{n+1}, V^a_{n+1}) = (\tilde{v}^b, \tilde{v}^a), P_{n+1} = \tilde{p}, Y_{n+1} = \tilde{y}\big\lvert
X_{n+1}\leq t, J_{n+1} = \tilde{j}, Z_{n+1} = \tilde{z}, E_n = e, A_n = \alpha\right)\\
& = {\Qq_{j, v, \alpha}}\left(t, \tilde{j}, \tilde{z}\right)
\PP\left((V_{n+1}^b, V_{n+1}^a) = (\tilde{v}^b, \tilde{v}^a)\lvert J_{n+1} = \tilde{j}\right)\times\\
&\qquad
\PP\left(P_{n+1} = \tilde{p}\lvert J_{n+1} = \tilde{j}, P_{n} = {p}\right)
\PP\left(Y_{n+1} = \tilde{y}\lvert Y_n = {y}, M_n = {m}, Z_{n+1} = \tilde{z}\right)\\
& = {\Qq_{j, v, \alpha}}\left(t, \tilde{j}, \tilde{z}\right)
f_{\tilde{j}}\left(\tilde{v}^b, \tilde{v}^a\right)
\ind_{\{\tilde{p} = {p} + \tilde{j}\}}
\ind_{\{\tilde{y} = {y} - {m} - \tilde{z}\}}.
\end{align*}
\end{proof}

\begin{remark}\label{rem:ddotQsemiMarkov}
The function~${\Qq}$ is a semi-Markov kernel on~$\RR^+_0\times \Ee^\prime$ given~$\Kk^\prime$, 
where 
\begin{align*}
\Ee^\prime &:= \{-1, +1\}\times\{0, 1, \dots, N\};\\
\Kk^\prime &:= \left\{(j, v^b, v^a, \alpha): j\in\{+1, -1\}, (v^b, v^a)\in\{1, \dots, N\}^2, \alpha\in\Aa, m < v^b\right\}.
\end{align*}
Indeed, for any~$(j, v^b, v^a, \alpha)\in \Kk^\prime$,
the probability ${\Qq}_{j, v, \alpha}\left(t, \{+1, -1\}, \{0, \dots, {l}\}\right)$ converges to~$1$ 
for large~$t$,
indicating the amount of the matched limit order cannot exceed that of the limit order posted by the agent.
\end{remark}

According to Assumptions~\ref{ass:LOBEvolution},~\ref{ass:PoissonOrderFlow} and~\ref{ass:AdmissibleTradingS},
the semi-Markov kernel~${\Qq}$ describes the dynamical mechanism of a {queueing race} between the volumes sitting at the best bid and ask prices.
Intuitively, fix~$\left(j, v^b, v^a, \alpha\right)\in \Kk^\prime$,
and consider a queueing race starting with~$v^b$ and~$v^a$ units limit orders 
(from the general market participants) 
at the best bid and ask prices at a certain decision epoch.
The agent subsequently submits a sell market order of~${m}$ unit size,
which decreases the best bid volume to~$({v}^b - {m})$ unit size,
and posts a sell limit order of~${l}$ unit size,
which has less time priority than the pre-existing~${v}^a$ units limit orders at the best ask price. 
After the agent's action, mutually independent order book events happen at exponential times with the rates depending on the price move direction~${j}$ and therefore change the volumes of the best bid and ask queues. 
The queueing race terminates whenever the volume of either the best bid or ask queue reaches zero,
and we denote the result of a queueing race by~$+1$ (resp.~$-1$) if the best ask (resp. bid) queue is depleted first.
For~$(t, \tilde{j}, \tilde{z})\in\RR^+_0\times \Ee^\prime$, 
the quantity~${\Qq}_{j, v, \alpha}(t, \tilde{j}, \tilde{z})$
is the probability that the duration of the race is less than or equal to~$t$,
the result is~$\tilde{j}$,
and~$\tilde{z}$ unit size of the agent's limit order gets executed. 
In the following,
we model the dynamics of the volumes at the best bid and ask prices 
as generalised birth-death processes, 
and therefore build a connection between the semi-Markov kernel and the queueing theory. 


\begin{definition}\label{def:bdprocesses}
Let $(\overline{\Omega}, \overline{\Ff}, \overline{\PP})$ be a new filtered probability space.
For~$v\in\NN^+$, $l\in\NN$ and~$\kappa, \mu, \theta,\eta >0$, define the following processes on
this space:
\begin{itemize}
\item $\left(B[v, \kappa, \mu, \theta]_s\right)_{s\geq 0}$
is a birth and death process with state space~$\NN$ and absorbing state~$0$, 
given the initial state~$v$;
$\kappa$ is the birth rate and~$\mu+i\theta$ the death rate when in state~$i\in\NN^+$;
\item $\left(C[v, l, \mu, \theta]_s\right)_{s\geq0}$ 
is a pure death process with state space~$\NN$ and absorbing state~$0$
given initial state~$l+v$;
the death rate equals to~$\mu+\max(0, i-l)\theta$ when in state~$i \in\NN^+$;
\item$\left(G[\kappa, \mu, \theta, \eta]_s\right)_{s\geq0}$ is a process with state space~$\NN$
given initial state~$0$. 
Strictly before time~$\eta$,
it is a birth and death process with birth rate~$\kappa$ and death rate~$i\theta$ when in state~$i\in\NN$.
After~$\eta$,
the birth and death rate of this process change to~$\kappa$ and~$\mu + i\theta$ when in state~$i\in\NN^+$ and~$0$ becomes the absorbing state.
\item
$\left(A[v, l, \kappa, \mu, \theta]_s\right)_{s\geq0}$ is a process with state space~$\NN^2$ defined by
$$
A[v, l, \kappa, \mu, \theta]_s := \big(C[v, l, \mu, \theta]_s, G[\kappa, \mu, \theta, \tau_{C[v, l, \mu, \theta]}]_s\big),
\quad\textrm{for $s\geq0$}.
$$
\end{itemize}
\end{definition}

\begin{lemma}\label{Lem:SMK&BDP}\cite[Lemma~2]{cont2010stochastic}
Fix~$\left(j, v^b, v^a, \alpha\right)\in \Kk^\prime$.
Suppose that, at the~$n$-th decision epoch, 
the queueing race starts with~$v^b$ and~$v^a$ units limit orders at the best bid and ask prices 
after the price moves by~$j$ tick, and the agent takes an action~$\alpha = (m, l)$.
On $[\tau_n, \tau_{n+1})$, define the following processes:
\begin{itemize}
\item $\widetilde{B}$: size of the orders sitting at the best bid price;
\item $\widetilde{C}$: size of the agent's limit order together with the orders with higher time priority at the best ask price;
\item $\widetilde{G}$: size of the orders with lower time priority than the agent's limit order at the best ask price.
\end{itemize}

Then there exist two independent processes~$B[{v}^b-{m}, \kappa_{{j}}^b, \mu_{{j}}^b, \theta_{{j}}^b]$
and~$A[{v}^a, {l}, \kappa_{{j}}^a, \mu_{{j}}^a, \theta_{{j}}^a]$
such that
$$
B[{v}^b-{m}, \kappa_{{j}}^b, \mu_{{j}}^b, \theta_{{j}}^b]_s = \widetilde{B}_{s+\tau_n}
\qquad\text{and}\qquad
A[{v}^a, {l}, \kappa_{{j}}^a, \mu_{{j}}^a, \theta_{{j}}^a]_s = (\widetilde{C}_{s+\tau_n}, \widetilde{G}_{s+\tau_n}),
\qquad\text{for all }s\in[0, \tau_{n+1} - \tau_n).
$$
\end{lemma}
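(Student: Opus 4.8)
The plan is to identify the three order-book processes $\widetilde B$, $\widetilde C$, $\widetilde G$ explicitly in terms of the Poisson order flows specified in Assumption~\ref{ass:PoissonOrderFlow}, restricted to the random time interval $[\tau_n,\tau_{n+1})$ during which the price move direction is frozen at~$j$, and then check that, after the deterministic time shift $s\mapsto s+\tau_n$, each of them satisfies exactly the birth/death transition rates of the corresponding process in Definition~\ref{def:bdprocesses}. First I would describe the state of the book just after the $n$-th decision epoch and the agent's action $\alpha=(m,l)$: the best bid queue holds $v^b-m$ units (the agent's market order of $m$ units having consumed that much, with $m<v^b$ so the queue is not depleted), the agent's limit order of $l$ units sits at the back of the best ask queue behind the $v^a$ pre-existing units. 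So I set $\widetilde C_{\tau_n}=v^a$ and $\widetilde G_{\tau_n}=0$ initially, with $\widetilde C$ counting the agent's order plus everything ahead of it — but since the agent's order has the lowest priority, it is natural instead to let $\widetilde C$ track the $v^a$ higher-priority units together with the $l$ agent units that will only start being eaten once those $v^a$ are gone; this matches the ``$l+v$'' initial state and the death rate $\mu+\max(0,i-l)\theta$ of the $C$-process (cancellations act only on the $\max(0,i-l)$ genuine market-participant orders, while market orders hit the whole stack at rate $\mu$).

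Next I would treat $\widetilde B$: on $[\tau_n,\tau_{n+1})$ the bid queue receives buy limit orders at rate $\kappa^b_j$, is hit by sell market orders at rate $\mu^b_j$, and suffers cancellations at rate $\theta^b_j$ per outstanding unit, so in state $i\in\NN^+$ it has birth rate $\kappa^b_j$ and death rate $\mu^b_j+i\theta^b_j$, with $0$ absorbing (depletion triggers a price move, hence $\tau_{n+1}$) — precisely $B[v^b-m,\kappa^b_j,\mu^b_j,\theta^b_j]$. For $\widetilde G$, the orders queued behind the agent: before the agent's order starts executing (i.e. strictly before $\tau_{\widetilde C}$, the first-passage time of $\widetilde C$ to $0$), no market order reaches this layer, so it only sees arrivals of new sell limit orders at rate $\kappa^a_j$ and cancellations at rate $\theta^a_j$ per unit, a birth--death process with birth rate $\kappa^a_j$, death rate $i\theta^a_j$ and no absorption; once $\widetilde C$ hits $0$ the market orders begin consuming this layer at rate $\mu^a_j$ and $0$ becomes absorbing (depletion of the ask side triggers the next price move). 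Identifying $\eta$ with $\tau_{\widetilde C}=\tau_{C[v^a,l,\mu^a_j,\theta^a_j]}$ gives exactly $G[\kappa^a_j,\mu^a_j,\theta^a_j,\tau_{C[\cdots]}]$, and bundling $(\widetilde C,\widetilde G)$ yields the $A$-process.

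The remaining point is independence: $B$ (bid side) versus $A$ (ask side). This follows from Assumption~\ref{ass:PoissonOrderFlow}, which posits that all six families of order-flow events — buy/sell market orders, buy/sell limit orders, bid/ask cancellations — are mutually independent Poisson processes (with $j$-dependent rates), so the bid-side driving processes and the ask-side driving processes are independent; since $\widetilde B$ is a deterministic functional of the bid-side flows and $(\widetilde C,\widetilde G)$ of the ask-side flows, the two are independent. One subtlety I would flag is that the rates switch at the \emph{random} time $\tau_n$ (and the clock on $[\tau_n,\tau_{n+1})$ restarts there): here the strong Markov property of the underlying Poisson order-flow model, together with the fact that $\tau_n$ is the $n$-th price-change time and the post-$\tau_n$ queue sizes are drawn independently from $f_j$ (Assumption~\ref{ass:LOBEvolution}), guarantees that conditionally on $J_n=j$, $(V^b_n,V^a_n)=(v^b,v^a)$, $A_n=\alpha$ the driving Poisson flows on $[\tau_n,\tau_{n+1})$ are fresh with the stated rates, so the shifted processes have the claimed laws.

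The main obstacle I expect is not any single computation but rather setting up the bookkeeping for $\widetilde C$ and $\widetilde G$ so that the priority rule (the agent is slow, Assumption~\ref{ass:AdmissibleTradingS}\eqref{itm:Strategysecond}) is faithfully encoded — in particular justifying that cancellations in the $C$-process act at rate $\theta^a_j\max(0,i-l)$ rather than $\theta^a_j i$ (the agent follows a ``no cancellation'' rule, Assumption~\ref{ass:AdmissibleTradingS}\eqref{itm:Strategythird}, so her $l$ units are never cancelled) and that the layer behind the agent is shielded from market orders exactly until $\tau_{\widetilde C}$. Once that correspondence is laid out, the proof reduces to reading off generators, and I would simply invoke \cite[Lemma~2]{cont2010stochastic} for the formal statement that the order-flow description determines these birth--death laws.
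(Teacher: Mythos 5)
Your argument is correct and is essentially the same generator-matching argument that underlies the cited result: the paper itself offers no proof of this lemma, deferring entirely to \cite[Lemma~2]{cont2010stochastic}, and your identification of the transition rates of $\widetilde{B}$, $\widetilde{C}$, $\widetilde{G}$ from Assumptions~\ref{ass:PoissonOrderFlow} and~\ref{ass:AdmissibleTradingS}, together with the independence of the bid- and ask-side Poisson flows, is exactly the reasoning that adaptation requires. The two subtleties you flag — the $\max(0,i-l)$ cancellation rate reflecting the agent's no-cancellation rule, and the freshness of the flows after the random time $\tau_n$ — are indeed the points that need care, and you handle both correctly.
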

According to Lemma~\ref{Lem:SMK&BDP}, we now provide an expression for~${\Qq}$, 
and defer its proof to Appendix~\ref{app:propqbarcal}.
We recall that, for a continuous-time process~$L$, 
the functions~$f_L$ and~$F_L$ are defined in the Notations part, just before Section~\ref{sec:stylisedLOB}.
\begin{proposition}\label{prop:qbarcal}
Fix~$(j, v^b, v^a, \alpha)\in \Kk^\prime$, 
introduce the short-hand notations:
\begin{equation*}
\begin{array}{lrlllrl}
 & \displaystyle {B}^b & := & B[{v}^b - {m}, \kappa^b_{{j}}, \mu^b_{{j}}, \theta^b_{{j}}],
 & \qquad \displaystyle {B}^a & := & B[{v}^a, \kappa^a_{{j}}, \mu^a_{{j}}, \theta^a_{{j}}],\\
 & \displaystyle {A}^{l} & := & A[{v}^a, {l}, \kappa_{{j}}^a, \mu_{{j}}^a, \theta_{{j}}^a],
 & \qquad \displaystyle {C}^{l} & := & C[{v}^a, l, \mu_{{j}}^a, \theta_{{j}}^a],
\end{array}
\end{equation*}
as well as the scenarios:
\begin{table}[!htp]
\centering
\begin{tabular}{|c|c|c|c|c|c|}
\hline
S1 & S2$\pm$ & S3 & S4 & S5 & S6\\
\hline
${l}\geq 1$ & $l=0$ & $l\geq 1$ & $l = 1$ & $l > 1$ & $l>1$\\
$ \tilde{j} = +1$ & $\tilde{j} = \pm 1$ & $\tilde{j} = -1$ & $\tilde{j} = -1$ & $\tilde{j} = -1$ & $ \tilde{j} = -1$\\
& &$ \tilde{z} =0$ & $\tilde{z} = 1$ & $\tilde{z} \in \{1, \dots, {l} -1\}$ & $\tilde{z} = {l}$\\
\hline
\end{tabular}
\end{table}

\vspace{0.1cm}

Then the following holds for any~$(t, \tilde{j}, \tilde{z})\in\RR^+_0\times \Ee^\prime$:
\begin{equation*}
\Qq_{j, v, \alpha}\left(t, \tilde{j}, \tilde{z}\right)
 = 
\left\{
\begin{array}{ll}
\displaystyle\left[F_{{A}^{l}}(t) - \int_0^t f_{{A}^{l}}(u) F_{{B}^b}(u) \D u\right]
\ind_{\{\tilde{z} = {l}\}}, & [S1],\\
\displaystyle \left[F_{{B}^a}(t) - \int_0^t f_{{B}^a}(u)F_{{B}^b}(u)\D u\right]
\ind_{\{\tilde{z} = 0\}}, 
&  [S2+],\\
\displaystyle \left[F_{{B}^b}(t) - \displaystyle \int_0^t f_{{B}^b}(u) F_{{B}^a}(u)\D u\right]
\ind_{\{\tilde{z} = 0\}}, 
&   [S2-],\\
\displaystyle F_{{B}^b}(t) - \int_0^t f_{{B}^b}(u) F_{{C}^{1}}(u) \D u, &  [S3],\\
\displaystyle \int_0^t f_{{B}^b}(u)\left[F_{{C}^{1}}(u) - F_{{A}^{1}}(u)\right]\D u, &  [S4],\\
\displaystyle \int_0^t f^*_{B^b}(\epsilon)\int_0^\epsilon f^*_{C^{\tilde{z}}}(u)\D u\, \D\epsilon, &  [S5],\\
\displaystyle\int_0^t f_{{B}^b}(u)\left[F_{C^1}(u) - F_{A^l}(u)\right]\D u-
\displaystyle\sum_{z = 1}^{{l}-1}
\int_0^t f^*_{B^b}(\epsilon)\int_0^\epsilon f^*_{C^{z}}(u)\D u\, \D\epsilon, &  [S6],\\
0,&\text{otherwise},
\end{array}
\right.
\end{equation*}
where 
$f^{*}_{{C}^{{z}}}(\xi) := \E^{\mu_{{j}}^a \xi}f_{{C}^{{z}}}(\xi)$ 
and 
$f^{*}_{{B}^b}(\xi) := \E^{-\mu_{{j}}^a \xi}f_{{B}^b}(\xi)$
for any~$\xi\geq 0$ and~$z\in\NN^+$.
\end{proposition}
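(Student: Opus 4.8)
The plan is to reduce the computation of $\Qq_{j,v,\alpha}(t,\tilde j,\tilde z)$ to first-passage-time events of the independent processes supplied by Lemma~\ref{Lem:SMK&BDP}, and then to decompose each scenario according to which queue empties first and how much of the agent's limit order has been matched by that time. Concretely, with $\widetilde B, \widetilde C, \widetilde G$ as in Lemma~\ref{Lem:SMK&BDP}, the next decision epoch is $\tau_{n+1}-\tau_n = \min(\tau_{\widetilde B}, \tau_{\widetilde C + \widetilde G})$ (recalling the notation $\tau_L$ for the first passage time of $L$ to $0$ from the Notations section); the outcome is $\tilde j = -1$ if the bid queue $\widetilde B$ hits zero first and $\tilde j = +1$ otherwise; and the matched amount $\tilde z$ is the number of the agent's $l$ units that have been consumed by the ask-side death process at that time, which is $l - \widetilde C_{\cdot} $ while $\widetilde C > 0$ and is exactly $l$ once $\widetilde C$ hits zero. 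The key translation step is therefore to express, for each row of the scenario table, the event $\{X_{n+1}\le t, J_{n+1}=\tilde j, Z_{n+1}=\tilde z\}$ purely in terms of $\{\tau_{B^b}\le\cdot\}$, $\{\tau_{A^l}\le\cdot\}$, $\{\tau_{C^z}\le\cdot\}$ and their complements, using the independence of $B^b$ and $A^l$ (equivalently of $B^b$ and $(C^l, G)$).

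First I would dispatch the easy scenarios S1 and S2$\pm$. In S2$\pm$ the agent posts no limit order ($l=0$), so $\widetilde C\equiv 0$, $\widetilde G$ is a standard birth--death process started from $0$ — which is exactly $B^a = B[v^a,\kappa^a_j,\mu^a_j,\theta^a_j]$ after the absorbing-state switch at the (zero) first-passage time of $C$ — and $\tilde z \equiv 0$ is forced; the race result is $+1$ iff $\tau_{B^a} < \tau_{B^b}$ (ask depleted first) and $-1$ iff $\tau_{B^b}<\tau_{B^a}$, so conditioning on the value of $\min$ gives $\PP(\min(\tau_{B^a},\tau_{B^b})\le t,\ \tau_{B^a}<\tau_{B^b})$, which by independence equals $\int_0^t f_{B^a}(u)\overline F_{B^b}(u)\,\D u = F_{B^a}(t) - \int_0^t f_{B^a}(u)F_{B^b}(u)\,\D u$ (using $\overline F = 1-F$ and Fubini), and symmetrically for S2$-$. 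Scenario S1 ($l\ge1$, $\tilde j=+1$): here the ask queue including the agent's order must be depleted first, i.e. the full process $A^l = (C^l, G)$ reaches $(0,0)$ before $B^b$ hits zero, which forces $\widetilde C$ to have hit zero hence $\tilde z = l$; this is $\PP(\tau_{A^l}\le t,\ \tau_{A^l}<\tau_{B^b})\ind_{\{\tilde z=l\}}$, which by independence is $\left[F_{A^l}(t)-\int_0^t f_{A^l}(u)F_{B^b}(u)\,\D u\right]\ind_{\{\tilde z=l\}}$.

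Next the $\tilde j=-1$ scenarios S3--S6, where the bid queue empties first at time $\tau_{B^b}$ and the matched amount $\tilde z$ is determined by the state of the ask-side death process $\widetilde C$ at that instant. The event $\{\widetilde C$ has consumed at least $z$ of the agent's units by time $u\}$ is $\{\tau_{C^z}\le u\}$, where $C^z = C[v^a,z,\mu^a_j,\theta^a_j]$ is the pure death process started from $v^a+z$ reaching $0$ — because the agent's $z$-th unit is matched precisely when the ask-side death chain started from $v^a + l$ has fallen below $v^a + l - z$, which (since the death rates in $C^l$ above level $v^a$ are the plain rates $\mu^a_j$, identical to those of $C^z$ above $v^a$, while below $v^a$ the rate is $\mu^a_j + (\cdot)\theta^a_j$ in both) has the same law as $\tau_{C^z}$. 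Thus $\{\tilde z = z\} = \{\tau_{C^z} \le \tau_{B^b} < \tau_{C^{z+1}}\}$ for $0\le z < l$ (with the convention $\tau_{C^0}=0$) and $\{\tilde z=l\} = \{\tau_{C^l}\le\tau_{B^b}\} = \{\tau_{A^l}\le\tau_{B^b}\}$. Integrating against the density $f_{B^b}$ and using independence: S3 ($\tilde z=0$) gives $\PP(\tau_{B^b}\le t,\ \tau_{B^b}<\tau_{C^1})=\int_0^t f_{B^b}(u)\overline F_{C^1}(u)\,\D u = F_{B^b}(t)-\int_0^t f_{B^b}(u)F_{C^1}(u)\,\D u$; S4 ($l=1$, $\tilde z=1$) gives $\PP(\tau_{C^1}\le\tau_{B^b}\le t) = \int_0^t f_{B^b}(u)[F_{C^1}(u)-F_{A^1}(u)]\,\D u$ — here I must check $F_{C^1} - F_{A^1}$ is the right integrand, which follows because $\{\tau_{C^1}\le u\}\setminus\{\tau_{A^1}\le u\}$ is the event that $\widetilde C$ is absorbed but $\widetilde G$ has not yet driven the whole ask side to zero; S5 ($\tilde z = z\in\{1,\dots,l-1\}$) gives $\int_0^t f_{B^b}(u)[F_{C^z}(u)-F_{C^{z+1}}(u)]\,\D u$, which I would rewrite via the substitution $F_{C^z}(u)-F_{C^{z+1}}(u)=\int_0^u\!\big(f_{C^z}(\xi) - f_{C^{z+1}}(\xi)\big)\,\D\xi$ — no, more carefully, $F_{C^z}(u) - F_{C^{z+1}}(u)$ equals $\PP(\tau_{C^z}\le u<\tau_{C^{z+1}})$, and the claimed form $\int_0^t f^*_{B^b}(\epsilon)\int_0^\epsilon f^*_{C^z}(u)\,\D u\,\D\epsilon$ with $f^*_{C^z}(\xi)=\E^{\mu^a_j\xi}f_{C^z}(\xi)$, $f^*_{B^b}(\xi)=\E^{-\mu^a_j\xi}f_{B^b}(\xi)$ emerges from a thinning/Laplace argument relating the event $\{\tau_{C^z}\le u<\tau_{C^{z+1}}\}$ to $\{\tau_{C^z}\le u\}$ multiplied by the survival probability $\E^{-\mu^a_j(u-\tau_{C^z})}$ of the next (exponential-rate-$\mu^a_j$, since above level $v^a$) death not yet having occurred; and S6 ($\tilde z=l$) is the complementary term $\PP(\tau_{C^l}\le\tau_{B^b}\le t) = \int_0^t f_{B^b}(u)[F_{C^1}(u)-F_{A^l}(u)]\,\D u$ minus the sum over $z=1,\dots,l-1$ of the S5-type integrals, since $\{\tau_{C^1}\le\tau_{B^b}\} = \bigsqcup_{z=1}^{l}\{\tilde z=z\}$.

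The main obstacle I anticipate is the bookkeeping in the exponential-tilt identities appearing in S5 and S6: one must argue that once the ask-side death chain is at a level above $v^a$ its next jump is a pure death at constant rate $\mu^a_j$ (the $\theta$-term being switched off there by the definition of $C^l$, and $G$ not yet active as an absorbing source), so that conditioning on $\tau_{C^z}$ and on the bid-passage density $f_{B^b}$ produces the factor $\E^{\mu^a_j\xi}$ on the $C^z$ side and $\E^{-\mu^a_j\epsilon}$ on the $B^b$ side; getting the limits of integration ($0$ to $\epsilon$ inside, $0$ to $t$ outside) and the ordering $\tau_{C^z}\le\tau_{B^b}$ consistent with the nested integral, and checking that the "otherwise$=0$" cases (e.g. $\tilde j=+1$ with $l=0$ but $\tilde z>0$, or $\tilde z>l$) are genuinely excluded, is where the proof is most error-prone. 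Everything else is repeated application of independence of $B^b$ and $A^l$, Fubini, and $\overline F_L = 1-F_L$, which I would relegate to Appendix~\ref{app:propqbarcal} as the statement indicates.
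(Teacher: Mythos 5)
Your proposal is correct and follows essentially the same route as the paper's Appendix~\ref{app:propqbarcal}: each scenario is translated into first-passage-time events for the independent processes of Lemma~\ref{Lem:SMK&BDP} and evaluated by conditioning on the bid-depletion density, with your treatment of S5 (conditioning on $\tau_{C^{\tilde{z}}}$ and using the memoryless $\mathrm{Exp}(\mu^a_j)$ inter-match time to write $F_{C^{\tilde{z}}}(u)-F_{C^{\tilde{z}+1}}(u)=\E^{-\mu^a_j u}\int_0^u f^*_{C^{\tilde{z}}}(\xi)\,\D\xi$) being a slightly cleaner rearrangement of the paper's explicit integration by parts, and S4/S6 obtained directly rather than by differencing as in the paper. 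The only blemishes are notational: in S4 and S6 the displayed events should also carry the constraint $\tau_{B^b}<\tau_{A^l}$ (which your integrands nonetheless encode, since $\tau_{C^{\tilde{z}+1}}\leq\tau_{A^l}$), and the agent's $z$-th unit is matched when the death chain $C^l$ reaches level $l-z$, not when it falls below $v^a+l-z$ --- though your identification of that hitting time with $\tau_{C^z}$ is correct.
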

\subsubsection{Terminal kernel}
According to Theorem~\ref{thm:Tulcea} and the Markovian property~\textbf{(M$1$)},
we can express the terminal kernel as
\begin{equation}
P(\zz\lvert (e, \alpha), \lambda) = 
\PP(X_{n+1} > \lambda, \ZZ = \zz\lvert E_n = e, A_n = \alpha),
\end{equation}
for any~$(e, \alpha)\in \Kk$, $\lambda\in\RR$, $\zz\in\NN$.
Remark~\ref{rmk: tk} implies that only the cases when~$\lambda >0$ and~$\zz \in \{0, \dots, l \}$
need to be considered.
According to Lemma~\ref{Lem:SMK&BDP}, we now provide an expression for~${\Qq}$, 
proved in Appendix~\ref{app:Pcal}.

\begin{proposition}\label{prop:Pcal}
For any~$\lambda>0, (e, \gamma)\in \Kk$ (with corresponding~$(j, v^b, v^a, m, l)\in \Kk^\prime$)), 
introduce the processes~$B^b, B^a, A^l, C^l$ as in Proposition~\ref{prop:qbarcal}.
Then the following equality holds:
\begin{equation*}
P(\zz\lvert (e, \gamma), \lambda)
 = 
 \left\{
 \begin{array}{ll}
\overline{F}_{B^b}(\lambda)\overline{F}_{B^a}(\lambda), & \text{if }l=0\text{ and }\zz = 0,\\
\overline{F}_{B^b}(\lambda)\overline{F}_{C^1}(\lambda), & \text{if }l\geq1\text{ and }\zz = 0,\\
\overline{F}_{B^b}(\lambda)\left[F_{C^\zz}(\lambda) - \left(F_{C^\zz} \ast F_{\Xi}\right)(\lambda)\right], & \text{if }l>1\text{ and }\zz\in \{1, \dots, l -1\},\\
\overline{F}_{B^b}(\lambda)\left[F_{C^l}(\lambda) - F_{A^l}(\lambda)\right], & \text{if }l\geq1\text{ and }\zz = l,\\
0, & \text{otherwise},
 \end{array}
\right.
\end{equation*}
where~$\Xi$ is an exponentially distributed random variable with parameter~$\mu_j^a$, 
and~$\ast$ is the convolution operator.
\end{proposition}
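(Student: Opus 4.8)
The plan is to reduce the terminal kernel to first-passage-time events for the generalised birth-death processes supplied by Lemma~\ref{Lem:SMK&BDP}, and then to identify, scenario by scenario, which event in the $(B^b, A^l)$ picture corresponds to ``the race has not yet finished by time $\lambda$ and exactly $\zz$ units of the agent's limit order have been filled''. First I would invoke Lemma~\ref{Lem:SMK&BDP} to realise, on $[\tau_n,\tau_{n+1})$, the bid queue as $B^b = B[v^b-m,\kappa^b_j,\mu^b_j,\theta^b_j]$ and the ask side as $A^l = (C^l, G[\kappa^a_j,\mu^a_j,\theta^a_j,\tau_{C^l}])$, with $B^b$ independent of $A^l$. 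The event $\{X_{n+1} > \lambda\}$ says that neither the bid queue nor the full ask queue has hit zero before $\lambda$; since the bid and ask sides are independent, the factor $\overline{F}_{B^b}(\lambda)$ will always split off, and what remains is a statement purely about $A^l$ (equivalently about $C^l$ and the residual queue $G$) on $[0,\lambda]$.

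Next I would treat the cases. If $l=0$ the agent posts no limit order, so $\ZZ=0$ necessarily, the ask queue is just $B^a$, and $\{X_{n+1}>\lambda\}=\{\tau_{B^b}>\lambda,\tau_{B^a}>\lambda\}$ gives $\overline{F}_{B^b}(\lambda)\overline{F}_{B^a}(\lambda)$ by independence. If $l\geq 1$ and $\zz=0$: no unit of the agent's order is filled by time $\lambda$ iff the price has not yet gone down (so $C^l$ has not been depleted) — more precisely, by the price-time priority encoded in $C^l$, zero fills up to $\lambda$ is equivalent to $\{\tau_{C^1}>\lambda\}$ on the ask side (the orders ahead of the agent together with one unit of hers have survived), whence $\overline{F}_{B^b}(\lambda)\overline{F}_{C^1}(\lambda)$. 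For $\zz=l$ (all of the agent's order filled, $l\geq 1$): this requires $\tau_{C^l}\le \lambda$ (the $C$-process, i.e. the orders ahead plus all $l$ of the agent's units, fully consumed) but the ask queue as a whole not yet gone, i.e. $\tau_{A^l}>\lambda$; so the ask-side probability is $\PP(\tau_{C^l}\le\lambda<\tau_{A^l}) = F_{C^l}(\lambda)-F_{A^l}(\lambda)$, using $\tau_{C^l}\le\tau_{A^l}$ a.s. Finally, for $1\le\zz\le l-1$ (only partial fill, so $l>1$): exactly $\zz$ units of the agent's order are executed and the race is still running at $\lambda$. Decomposing on the time $\tau_{C^\zz}=u\le\lambda$ at which the $\zz$-th unit is matched, after that instant the remaining ask queue behaves, until absorption, like a birth-death process restarted, and the additional waiting time before the $(\zz{+}1)$-th unit would be hit is $\mathrm{Exp}(\mu^a_j)$-distributed (this is exactly where the process $G$ and the rate $\mu^a_j$ enter); requiring that this extra exponential clock $\Xi$ exceed $\lambda-u$, i.e. $\tau_{C^\zz}+\Xi>\lambda$, and that the race not finish on the ask side gives the ask-side factor $F_{C^\zz}(\lambda)-(F_{C^\zz}\ast F_\Xi)(\lambda)$. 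Multiplying by $\overline{F}_{B^b}(\lambda)$ yields the stated line, and all remaining $(\zz,l)$ combinations give $0$ by Remark~\ref{rmk: tk}.

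The main obstacle I expect is the partial-fill case $1\le\zz\le l-1$: one has to argue carefully that, conditionally on $\tau_{C^\zz}=u$, the time until the next unit of the agent's order would be executed is exactly exponential with rate $\mu^a_j$ and independent of the past — this uses the memorylessness of the Poisson market-order flow together with the structure of $C^l$ (once the $\zz$-th unit of the agent is filled, the $(\zz{+}1)$-th sits at the head and is hit only by the next buy market order, rate $\mu^a_j$, while cancellations no longer affect it) — and then to package the event $\{X_{n+1}>\lambda, \ZZ=\zz\}$ on the ask side as $\{\tau_{C^\zz}\le\lambda\} \cap \{\tau_{C^\zz}+\Xi>\lambda\}$ up to the ask queue surviving, producing the convolution $(F_{C^\zz}\ast F_\Xi)(\lambda)$ correctly. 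Once this identification is made, the rest is bookkeeping: independence of $B^b$ and $A^l$, the a.s.\ ordering $\tau_{C^\zz}\le\tau_{C^l}\le\tau_{A^l}$, and the constraints from Remark~\ref{rmk: tk}. I would relegate the detailed computation to Appendix~\ref{app:Pcal} as stated, keeping the body-text argument at the level of the scenario dictionary above.
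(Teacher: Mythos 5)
Your proposal is correct and follows essentially the same route as the paper's Appendix~\ref{app:Pcal}: split off $\overline{F}_{B^b}(\lambda)$ by independence of the bid and ask sides, then identify the ask-side event in each scenario as $\{\tau_{B^a}>\lambda\}$, $\{\tau_{C^1}>\lambda\}$, $\{\tau_{C^\zz}\le\lambda<\tau_{C^\zz}+\Xi\}$, or $\{\tau_{C^l}\le\lambda<\tau_{A^l}\}$ respectively, with the remaining cases vanishing by Remark~\ref{rmk: tk}. Your added justification that the residual clock after $\tau_{C^\zz}$ is $\mathrm{Exp}(\mu_j^a)$ (because for states $i\le l$ the death rate of $C^l$ is just $\mu_j^a$) is a correct elaboration of a step the paper leaves implicit.
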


\subsection{Laplace method}\label{sec:queuedistribution}
Not surprisingly, the distributions of the first-passage time of the generalised birth-death processes 
$A, B, C$ in Definition~\ref{def:bdprocesses}
do not admit closed-form expressions.
To compute them, we first determine their Laplace transforms, and invert them numerically.
We keep here the notations of Proposition~\ref{prop:qbarcal}.
\begin{definition}
Let~$f: \RR^+_0 \to \RR$ be a function absolutely integrable on~$[0, \omega]$ for any~$\omega>0$.
Its (one-sided) Laplace transform is defined by
$\hat{f}(s) := \displaystyle\lim_{\omega\uparrow\infty}\int_0^{\omega}\E^{-st}f(t)\D t$,
for all $s\in\CC$ such that the right-hand side converges.
\end{definition}
The standard (albeit simplified) inversion formula for the Laplace transform is the Bromwich contour integral, or Mellin inversion~\cite[Chapter 1]{abate2000introduction}:
for an absolutely integrable continuous function~$f$, 
the identity 
$f(t) = \frac{1}{2\pi\I}\int_{x -\I\infty}^{x + \I\infty}\E^{ts}\hat{f}(s)\D s$
holds for any~$x>0$, and, by symmetry arguments, can be simplified to
\begin{equation}\label{eq:inverseLT}
f(t) =\frac{2\E^{xt}}{\pi}\int_0^{\infty} \Re\left[\hat{f}(x+\I u)\right]\cos(ut)\D u,
\qquad\text{for all }t>0.
\end{equation}
We then apply the Euler algorithm~\cite[Section 1]{abate1995numerical}
that exploits the specific structure of the integrand in~\eqref{eq:inverseLT}.
We now consider the general case of a birth-death process~$X^b$
with initial state~$b\in\NN^+$,
and
with birth rate $\lambda_n\geq0$ and death rate $\mu_n>0$ in state $n \in\NN^+$.
The following lemma, derived in~\cite[Equation (14)]{cont2010stochastic} following Abate-Whitt methodology~\cite[Section 4]{abate1999Computing}, 
expresses the Laplace transforms of the density and cumulative distribution function of~$\tau_{X^b}$.
\begin{lemma}\label{lem:LaplaceMethodf}
The equality $\hat{F}_{X^b}(s) = s^{-1}\hat{f}_{X^b}(s)$ holds on $\{s\in\CC: \Re(s)>0\}$, 
and 
\begin{equation}\label{eq:LTpdfcf}
\hat{f}_{X^b}(s) = \prod_{n=1}^b\left[-\frac{1}{\lambda_{n-1}}\foo_{k\geq 0}
\left(\frac{-\lambda_{k+n-1}\mu_{k+n}}{\lambda_{k+n} + \mu_{k+n} +s}\right)\right],
\quad\text{for all }s \in \CC \text{ such that }\Re(s)>0,
\end{equation}
where
$\displaystyle
\foo_{k\geq 0}\frac{a_k}{b_k} := \lim_{k\uparrow\infty}t_0\circ t_1\circ\dots\circ t_k(0)$
and 
$\displaystyle
t_k(u) := \frac{a_k}{b_k + u}$ for $u\geq 0$.
\end{lemma}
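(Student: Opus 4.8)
The plan is to reduce the problem to a classical first-step (or rather, first-passage-structure) argument for birth-death processes and to an algebraic continued-fraction identity. First I would establish the relation $\hat F_{X^b}(s)=s^{-1}\hat f_{X^b}(s)$: since $\tau_{X^b}$ is an absolutely continuous nonnegative random variable (it has density $f_{X^b}$ by hypothesis/standard theory for irreducible birth-death chains absorbed at $0$), $F_{X^b}(t)=\int_0^t f_{X^b}(u)\,\D u$, and the Laplace transform of an antiderivative is $s^{-1}$ times the transform of the integrand on $\{\Re(s)>0\}$. This is the routine half.

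The substantive half is \eqref{eq:LTpdfcf}. The key observation is that, starting from state $b$, the process must pass successively through $b-1, b-2, \dots, 1, 0$ (it moves by unit jumps), so $\tau_{X^b}$ decomposes as a sum of independent first-passage times $\tau_{X^b}=\sum_{n=1}^b \sigma_n$, where $\sigma_n$ is the time to go from state $n$ to state $n-1$, starting afresh at $n$; by the strong Markov property and the spatial homogeneity of the decomposition, the $\sigma_n$ are independent. Hence $\hat f_{X^b}(s)=\prod_{n=1}^b \hat f_n(s)$ where $\hat f_n(s):=\EE[\E^{-s\sigma_n}]$. It therefore suffices to show
\begin{equation*}
\hat f_n(s) = -\frac{1}{\lambda_{n-1}}\,\foo_{k\geq 0}\!\left(\frac{-\lambda_{k+n-1}\mu_{k+n}}{\lambda_{k+n}+\mu_{k+n}+s}\right).
\end{equation*}
Wait --- $\hat f_n$ should not depend on $\lambda_{n-1}$ at all, since going from $n$ to $n-1$ never uses the birth rate out of $n-1$; indeed $\lambda_{n-1}$ appears inside the $\foo$ at $k=0$ in the numerator and cancels the prefactor $-1/\lambda_{n-1}$, so the product is really $\hat f_n(s)=\foo_{k\ge 0}(\cdots)$ written in a normalized way that makes the telescoping across $n$ transparent. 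To derive it, condition on the first jump out of state $n$: with probability $\mu_n/(\lambda_n+\mu_n)$ the process steps down to $n-1$ (done), and with probability $\lambda_n/(\lambda_n+\mu_n)$ it steps up to $n+1$, after which it must first return to $n$ (an independent copy of $\sigma_{n+1}$) and then go from $n$ to $n-1$ (an independent copy of $\sigma_n$). The holding time in state $n$ is exponential with rate $\lambda_n+\mu_n$, contributing a factor $(\lambda_n+\mu_n)/(\lambda_n+\mu_n+s)$. This yields the recursion
\begin{equation*}
\hat f_n(s) = \frac{\lambda_n+\mu_n}{\lambda_n+\mu_n+s}\left(\frac{\mu_n}{\lambda_n+\mu_n} + \frac{\lambda_n}{\lambda_n+\mu_n}\,\hat f_{n+1}(s)\,\hat f_n(s)\right),
\end{equation*}
i.e. a quadratic relation in $\hat f_n(s)$ that, solved for $\hat f_n(s)$ in terms of $\hat f_{n+1}(s)$, has the Möbius form $\hat f_n(s)=t_n^{(s)}(\hat f_{n+1}(s)\,\lambda_n\mu_{n+1}/\text{const})$ — unrolling this recursion downward and identifying the composition with the map $t_k(u)=a_k/(b_k+u)$ produces exactly the generalized continued fraction $\foo$ in \eqref{eq:LTpdfcf}.

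The main obstacle is not the algebra of the first-step decomposition but the \emph{convergence and well-posedness} of the infinite continued fraction: one must justify that $\foo_{k\ge 0}(a_k/b_k)=\lim_{k\uparrow\infty}t_0\circ\cdots\circ t_k(0)$ actually exists and equals the Laplace transform, which amounts to showing that the finite truncations (corresponding to birth-death processes killed upon exceeding level $n+k$) converge as $k\to\infty$ to the quantity for the unrestricted process. I would handle this by a monotone-approximation argument: for $\Re(s)>0$ the truncated transforms are Laplace transforms of first-passage times of the level-capped processes, these first-passage times decrease (stochastically) to $\sigma_n$ as the cap rises (capping can only delay absorption at $0$ — actually one must be slightly careful, capping removes upward excursions and hence can only \emph{decrease} the first-passage time, so the capped $\sigma_n$ increase to the true $\sigma_n$), and dominated convergence gives convergence of the transforms; standard continued-fraction convergence theory (e.g. the conditions under which such $S$-fractions converge, as in the Abate–Whitt references cited) then identifies the limit with the formal $\foo$. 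Since the statement attributes the formula to \cite[Equation (14)]{cont2010stochastic} and the Abate–Whitt methodology \cite[Section 4]{abate1999Computing}, I would at this point simply invoke those results for the convergence step rather than reprove it, and present the first-step recursion as the self-contained derivation of the algebraic identity.
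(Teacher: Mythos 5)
The paper gives no proof of this lemma: it is quoted directly from \cite[Equation~(14)]{cont2010stochastic}, following the Abate--Whitt continued-fraction methodology of \cite[Section~4]{abate1999Computing}, so there is no in-paper argument to compare yours against. Your sketch reconstructs precisely the derivation underlying that reference, and it is essentially correct: the identity $\hat F_{X^b}(s)=s^{-1}\hat f_{X^b}(s)$ is the Laplace transform of an antiderivative; the decomposition $\tau_{X^b}=\sum_{n=1}^b\sigma_n$ into independent successive down-crossing times follows from the unit-jump structure and the strong Markov property; and the first-step analysis yields $\hat f_n(s)=\mu_n/\bigl(\lambda_n+\mu_n+s-\lambda_n\hat f_{n+1}(s)\bigr)$, which matches the $n$-th factor of~\eqref{eq:LTpdfcf} once one checks that $-\lambda_{n-1}^{-1}t_0(u)=\mu_n/(\lambda_n+\mu_n+s+u)$ and that the tail $t_1\circ t_2\circ\cdots$ produces $u=-\lambda_n\hat f_{n+1}(s)$; your observation that the spurious $\lambda_{n-1}$ cancels is exactly right. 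Two minor imprecisions, neither fatal: the first-step relation is affine, not quadratic, in $\hat f_n(s)$ (the unknown appears linearly on both sides); and your discussion of the truncation monotonicity is muddled --- the depth-$k$ truncation $t_0\circ\cdots\circ t_k(0)$ corresponds to setting $\hat f_{n+k+1}\equiv 0$, i.e.\ to the defective transform $\EE\bigl[\E^{-s\sigma_n}\ind_{\{\text{the excursion from }n\text{ to }n-1\text{ never exceeds level }n+k\}}\bigr]$, which \emph{increases} to $\hat f_n(s)$ by monotone convergence (for real $s>0$, extended to $\Re(s)>0$ by domination), with no capped process needed. Since you defer the convergence step to the cited references in any case, these slips do not affect the validity of your argument.
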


\begin{proposition}\label{prop:Laplace}
Fix $v\in\NN^+$, $l\in\NN$ and $\kappa, \mu, \theta >0$, and denote the processes
$A[v, l, \kappa, \mu, \theta]$, $B[v, \kappa, \mu, \theta]$, $C[v, l, \mu, \theta]$ 
(as in Definition~\ref{def:bdprocesses})
by~$A$, $B$ and~$C$, respectively.
In particular, we denote the process~$B[j, \kappa, \mu, \theta]$ by~$B^j$ for any~$j\in\NN^+$.
Assume that~$f_{A}$, $f_{B}$ and~$f_{C}$ are continuous on~$\RR^+$.
Then 
$$
\hat{f}_{{B}}(s) = \frac{1}{(-\kappa)^{v}}\prod_{n = 1}^v
\foo_{k\geq 0}
\left[\frac{-\kappa\mu-\kappa(k+n)\theta}{\kappa + \mu + (k+n)\theta + s}\right],
\quad\text{and}\quad
\hat{f}_{C}(s) = \left(\frac{\mu}{\mu+s}\right)^l
\prod_{n = l+1}^{l+v}\frac{\mu + (n - l)\theta}{\mu + (n-l)\theta + s},
$$
for~$\Re(s)>0$.
Besides, given
$
R_j(u) := \frac{1}{j!}
\exp\left(   -\frac{\kappa}{\theta}\left(1 - \E^{-\theta u}\right)   \right)
\left[\displaystyle\frac{\kappa}{\theta}\left(1 - \E^{-\theta u}\right) \right]^j$
for $u\geq 0$ and $j\in\NN$,
we have
\begin{equation}\label{eq:LTfA}
f_{A}(t) = f_{C}(t)R_0(t) + \int_0^t\sum_{j=1}^{\infty}f_{B^j}(t - u)f_{C}(u) R_j(u)\D u,\quad\text{for all }t\geq0.
\end{equation}
\end{proposition}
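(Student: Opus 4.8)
The plan is to establish the three asserted identities separately, starting from the structural description of the birth–death processes in Definition~\ref{def:bdprocesses} and applying Lemma~\ref{lem:LaplaceMethodf} wherever a first-passage Laplace transform is needed.

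\textbf{Step 1: the Laplace transform $\hat{f}_B$.} The process $B = B[v,\kappa,\mu,\theta]$ is a birth–death process on $\NN$ with absorbing state $0$, started at $v$, with birth rate $\lambda_n = \kappa$ for all $n\in\NN$ and death rate $\mu_n = \mu + n\theta$ in state $n\in\NN^+$. I would simply substitute $\lambda_n \equiv \kappa$ and $\mu_n = \mu + n\theta$ into formula~\eqref{eq:LTpdfcf} of Lemma~\ref{lem:LaplaceMethodf}. The product $\prod_{n=1}^b$ there becomes $\prod_{n=1}^v$, each factor $-1/\lambda_{n-1}$ becomes $-1/\kappa$, contributing the prefactor $(-\kappa)^{-v}$, and inside the continued fraction $\foo_{k\geq 0}$ one has $-\lambda_{k+n-1}\mu_{k+n} = -\kappa(\mu + (k+n)\theta)$ in the numerator and $\lambda_{k+n}+\mu_{k+n}+s = \kappa + \mu + (k+n)\theta + s$ in the denominator. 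This yields exactly the claimed expression, valid for $\Re(s)>0$; continuity of $f_B$ on $\RR^+$ is the hypothesis ensuring the inversion machinery applies and need not be reproved here.

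\textbf{Step 2: the Laplace transform $\hat{f}_C$.} The process $C = C[v,l,\mu,\theta]$ is a \emph{pure} death process started at $l+v$, with death rate $\mu + \max(0,i-l)\theta$ in state $i$. Hence $\tau_C$ is a sum of $l+v$ independent exponential holding times: $l$ of them with rate $\mu$ (the time spent in states $l+v, l+v-1, \dots, l+1$ — wait, more precisely the states $\{1,\dots,l\}$ each have rate $\mu$, and the states $l+1,\dots,l+v$ have rates $\mu+\theta, \mu+2\theta, \dots, \mu+v\theta$). Since the Laplace transform of an $\mathrm{Exp}(\gamma)$ holding time is $\gamma/(\gamma+s)$ and Laplace transforms multiply over independent sums, one gets $\hat{f}_C(s) = (\mu/(\mu+s))^l \prod_{n=l+1}^{l+v} (\mu+(n-l)\theta)/(\mu+(n-l)\theta+s)$, which is the stated formula. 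I would present this as a direct consequence of the explicit holding-time decomposition rather than invoking Lemma~\ref{lem:LaplaceMethodf} (though setting $\kappa = 0$ there also works after a small bookkeeping check on the shifted indices).

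\textbf{Step 3: the identity~\eqref{eq:LTfA} for $f_A$.} This is the substantive part. Recall $A[v,l,\kappa,\mu,\theta]_s = (C[v,l,\mu,\theta]_s, G[\kappa,\mu,\theta,\tau_C]_s)$, and $\tau_A$ is the first time \emph{both} coordinates hit $0$. I would condition on the value of $\tau_C$ and on the state $G_{\tau_C} = j$ of the second coordinate at that instant. Before $\tau_C$, the $G$-component is an immigration–death (birth–death with constant birth rate $\kappa$ and pure linear death rate $i\theta$, no $\mu$) started from $0$; the probability that such a process is in state $j$ at a deterministic time $u$ is the Poisson probability $R_j(u)$ with mean $(\kappa/\theta)(1-\E^{-\theta u})$ — this is the classical transient distribution of the $M/M/\infty$-type queue started empty, which I would either cite or verify in one line via the generating-function ODE. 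Given $\{\tau_C = u,\ G_{\tau_C} = j\}$, after time $u$ the $G$-process evolves as $B[j,\kappa,\mu,\theta]$ and $\tau_A - u$ is its first-passage time to $0$; when $j = 0$ the extra absorption time is $0$ (captured by the $R_0(t)$ term, where $\tau_A = \tau_C$ directly). Assembling these pieces by the law of total probability and the independence of the post-$\tau_C$ dynamics from the past gives $f_A(t) = f_C(t)R_0(t) + \int_0^t \sum_{j\geq 1} f_{B^j}(t-u)\, f_C(u)\, R_j(u)\,\D u$, i.e.~\eqref{eq:LTfA}. Continuity of $f_A$ on $\RR^+$ then follows from that of $f_C$, $f_{B^j}$ and dominated convergence under the sum and integral.

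The main obstacle is Step 3: one must argue carefully that the joint law of $(\tau_C, G_{\tau_C})$ factorises the way claimed — i.e.\ that on $[0,\tau_C)$ the $G$-component genuinely behaves as the stated immigration–death process \emph{independently enough} of the first coordinate to make the Poisson formula $R_j$ exact — and that the strong Markov property at the random time $\tau_C$ legitimately restarts the second coordinate as $B[j,\kappa,\mu,\theta]$. The interchange of the infinite sum with the integral (justified by monotone or dominated convergence, all terms being non-negative) is routine by comparison, as is checking the $j=0$ boundary term. I would keep the exponential/Poisson bookkeeping terse and refer to~\cite{cont2010stochastic} for the portions that mirror their Lemma~2 and Equation~(14).
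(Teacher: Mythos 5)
Your proposal is correct and follows essentially the same route as the paper: $\hat{f}_B$ by direct substitution into Lemma~\ref{lem:LaplaceMethodf}, $\hat{f}_C$ from the holding-time structure of the pure death process, and \eqref{eq:LTfA} by conditioning on $(\tau_C, G_{\tau_C})$, using the transient Poisson law of the $M/M/\infty$ queue (Tak\'acs) for $R_j$ and restarting the second coordinate as $B^j$ after $\tau_C$. The independence concern you flag in Step~3 is handled in the paper exactly as you suggest, since $G$ runs on its own clocks before the switching time $\tau_C$.
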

\begin{proof}
The formulae for~$\hat{f}_B$ and~$\hat{f}_C$
are derived directly from Lemma~\ref{lem:LaplaceMethodf}, 
and we therefore focus on~\eqref{eq:LTfA}.
Let~$\tau_\Delta := \tau_A - \tau_{C}$.
Before time~$\tau_{C}$, the process~$(G_u) := (G[\kappa, \mu, \theta, \tau_C]_u)$ 
can be regarded as an initial empty~$M/M/\infty$ queue with arrival rate~$\kappa$ 
and service rate~$\theta$.
Let~$R_j(u)$ denote the probability of~$G_u$ being in state~$j\in\NN$
when~$u<\tau_C$.
Then, by~\cite[p. 160]{takacs1959introduction}, we have
\begin{equation*}
R_j(u) = \overline{\PP}\left(G_u = j \Big\lvert\tau_C = u\right)
= \frac{1}{j!}
\exp\left\{-\frac{\kappa}{\theta}\left(1 - \E^{-\theta u}\right)\right\}
\left[ \displaystyle\frac{\kappa}{\theta}\left(1 - \E^{-\theta u}\right) \right]^j.
\end{equation*}
Given~$\tau_C = u, G_u = j\in\NN^+$, the probability density function of~$\tau_\Delta$ is~$f_{B^j}$.
Indeed, in the case when $\tau_C = u$ and~$G_u = G_{\tau_C} = j$,
the time spent on depleting the agent's order and the orders with higher time priority is~$u$
and at that time the volume remaining in the queue is of~$j$ unit size.
The remaining queue can be described by the process~$B^j$, and the depletion time~$\tau_\Delta$ is thus~$\tau_{B^j}$ 
(with density~$f_{B^j}$).
And given~$\tau_C = u, G_u = 0$, we have~$\Delta = 0$ almost surely. 
Therefore, the mixture density 
$\delta(\cdot)R_0(u) + \sum_{j=1}^{\infty}f_{B^j}(\cdot)R_j(u)$,
with~$\delta(\cdot)$ being the Dirac mass, 
provides the density of~$\tau_\Delta$
given $\tau_C = u$.
Furthermore, the function
$\delta(\cdot - u)R_0(u) + \sum_{j=1}^{\infty}f_{B^j}(\cdot - u)R_j(u)(u)$
is the density of $\tau_A = \tau_\Delta + \tau_{C}$ given $\tau_{C} = u$.
Consequently, we obtain~\eqref{eq:LTfA}.
\end{proof}

\section{Existence of Optimal Policy}\label{Section:OptimalStrategy}
We now illustrate our main result, 
namely the existence and uniqueness of the value function,
and the existence of a stationary optimal policy.
\begin{theorem}\label{thm:ValueFunction}
The value function~$V^*$ in~\eqref{eq:VF} exists and is unique,
and
there exists a stationary~$\TT$-optimal policy~$\pi^{\phi^*} := \{\phi^*, \phi^*, \dots\}\in\Pi^S$
in~\eqref{eq:optimalpolicy}, with, for any~$(e, \lambda)\in\Ee\times\TT$,
\begin{equation}\label{eq:OptimalSPolicy}
\phi^*(e, \lambda) = \underset{\alpha\in\Aa(e)}{\arg\max}
\left\{
r(e,\alpha) + \sum_{\zz = 0}^\infty w(e, \alpha, \zz)P(\zz\lvert (e, \alpha), \lambda)
+\sum_{\tilde{e}\in \Ee}\int_0^\lambda V^*(\tilde{e}, \lambda - t)Q(\D t, \tilde{e}\lvert (e, \alpha))
\right\}.
\end{equation}
\end{theorem}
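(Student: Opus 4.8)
The plan is to recognise this as a finite-horizon semi-Markov decision problem and to verify the hypotheses of a standard existence theorem (e.g. the one in \cite{huang2011finite} for finite-horizon semi-Markov decision processes, or the value-iteration approach in \cite[Chapter 7]{tijms2003first}), adapting them to the horizon-related policy class $\Pi$. First I would observe that, because decision epochs accumulate only finitely often before $T$ almost surely — indeed $\Nn = \sup\{n : \Lambda_n \geq 0\} < \infty$ $\PP^\pi_{(e,\lambda)}$-a.s., since between consecutive price changes the waiting time is a genuine (strictly positive) holding time of an ergodic queueing race and the drawn queue sizes are bounded by $N$, so the inter-epoch times are i.i.d.-dominated from below by an exponential with a fixed rate — the sum $\sum_{n=0}^{\Nn} r(E_n, A_n) + w(E_\Nn, A_\Nn, \ZZ)$ in \eqref{eq:originValueFunction} has only finitely many terms. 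Combined with the fact that $r$ and $w$ are bounded on the finite sets $\Kk$ and $\Kk\times\{0,\dots,N\}$ (they are continuous functions of integer-valued arguments ranging over finite sets, and $w$ is $\RR_0^+$-valued by Assumption~\ref{ass:TRF}\eqref{ass:TRFc}), and that $\EE^\pi_{(e,\lambda)}[\Nn] < \infty$ by the geometric-tail argument above, this shows $V^\pi(e,\lambda)$ is finite and uniformly bounded over $\pi\in\Pi$; hence $V^*$ in \eqref{eq:VF} is well-defined and finite, and trivially unique as a pointwise supremum.

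Next I would set up the dynamic programming operator. Define, for bounded $V : \Ee\times\TT \to \RR$,
\[
(\mathcal{T}V)(e,\lambda) := \max_{\alpha\in\Aa(e)}\left\{ r(e,\alpha) + \sum_{\zz=0}^\infty w(e,\alpha,\zz)P(\zz\lvert(e,\alpha),\lambda) + \sum_{\tilde{e}\in\Ee}\int_0^\lambda V(\tilde{e},\lambda-t)\,Q(\D t,\tilde{e}\lvert(e,\alpha)) \right\},
\]
the maximum being attained because $\Aa(e)$ is finite. I would show by conditioning on the first transition (the triple $(X_1, E_1, \ZZ)$ given $E_0=e$, $\Lambda_0=\lambda$, $A_0=\phi_0(e,\lambda)$, using the kernels $Q$ and $P$ supplied by Theorem~\ref{thm:Tulcea}) and the rewriting in Remark~\ref{rmk:vfchange} that $V^*$ satisfies the Bellman equation $V^* = \mathcal{T}V^*$: the $\geq$ direction follows by plugging in any shifted policy and taking suprema, the $\leq$ direction by taking a policy that is near-optimal from epoch $1$ onward and noting that, because $\Pi$ contains all sequences of decision rules, prepending any first-stage rule keeps us inside $\Pi$. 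The key structural point making this work is that the $P$-term captures exactly the event $\{X_1 > \lambda\}$ (no further epoch before maturity, terminal payoff $w$), while the $Q$-term integrates over $\{X_1 = t \leq \lambda\}$ and continues with value $V^*(\tilde e, \lambda - t)$ — this is precisely the decomposition $\ind_{\{\Lambda_0\geq 0\}} = \ind_{\{0\leq\Lambda_0<X_1\}} + \ind_{\{X_1\leq\Lambda_0\}}$ underlying Remark~\ref{rmk:vfchange}.

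Then I would extract the stationary optimal decision rule: define $\phi^*(e,\lambda)$ to be any maximiser in \eqref{eq:OptimalSPolicy} — measurability of $\phi^*$ in $(e,\lambda)$ follows because $\Ee$ is finite, $\lambda\mapsto P(\zz\lvert(e,\alpha),\lambda)$ and $\lambda\mapsto\int_0^\lambda V^*(\tilde e,\lambda-t)Q(\D t,\tilde e\lvert(e,\alpha))$ are measurable (indeed the former is built from continuous first-passage c.d.f.'s via Proposition~\ref{prop:Pcal}, and a measurable-selection argument over the finite action set applies), and $\phi^*(e,\lambda)=(0,0)$ for $\lambda<0$ by Remark~\ref{rmk: tk} (both $P$ and $Q(\lambda,\cdot\lvert\cdot)$ degenerate, so the bracket is constant in $\alpha$ and $(0,0)$ is admissible). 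It remains to verify $V^{\pi^{\phi^*}} = V^*$. Writing $V_k$ for the value of the problem truncated to at most $k$ remaining epochs (equivalently $V_k := \mathcal{T}^k V_0$ with $V_0(e,\lambda) := \max_{\alpha\in\Aa(e)}\{r(e,\alpha)+\sum_\zz w(e,\alpha,\zz)P(\zz\lvert(e,\alpha),\lambda)\}$, i.e. the value of stopping after the current epoch), I would show $V_k \uparrow V^*$ pointwise by monotone convergence — using $r,w\geq 0$ and the a.s.-finiteness of $\Nn$, the tail $\EE^\pi_{(e,\lambda)}[\sum_{n=k}^{\Nn} r(E_n,A_n)\ind_{\{\Nn\geq k\}} + \dots]\to 0$ uniformly in $\pi$ — and that the stationary policy $\pi^{\phi^*}$ achieves $V_k$ on the event $\{\Nn\leq k\}$, hence $V^{\pi^{\phi^*}}\geq V_k$ for all $k$, giving $V^{\pi^{\phi^*}}\geq V^*$; the reverse is automatic from \eqref{eq:VF}.

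\textbf{Main obstacle.} The delicate step is the interchange of limits and expectations that legitimises both the Bellman equation and the convergence $V_k\uparrow V^*$: one must control the random number of epochs $\Nn$ and argue — uniformly over all policies in the large class $\Pi$ — that contributions beyond epoch $k$ vanish. This hinges on a quantitative lower bound on inter-epoch times (a queueing race with queue sizes in $\{1,\dots,N\}$ and bounded rates cannot terminate arbitrarily fast in distribution), which yields a geometric bound $\PP^\pi_{(e,\lambda)}(\Nn\geq k)\leq C\rho^k$ with $\rho<1$ independent of $\pi$; establishing this bound cleanly, and checking it survives the agent's market orders (which only shorten the bid queue by at most $\overline m < v^b$ and cannot, by assumption, deplete it), is where the real work lies. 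Everything else is the finiteness of $\Ee$ and $\Aa$ plus bookkeeping with the kernels $Q$ and $P$.
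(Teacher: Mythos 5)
Your proposal is correct in its essential ingredients and identifies exactly the right technical crux, but it takes a genuinely different route from the paper, and one step of your final argument is mis-stated. On the common ground: your ``main obstacle'' --- a uniform-in-$\pi$ geometric tail bound on $\Nn$ coming from a quantitative lower bound on inter-epoch times --- is precisely the paper's Lemma~\ref{lem:fntde}, which establishes ${Q}(\zeta, \Ee\lvert(e,\alpha))\leq 1-\E^{-2\iota\zeta}$ via stochastic ordering of the birth--death processes, using exactly the observation you make that the agent never depletes the bid queue so at least one unit always remains. Where you diverge: you prove the Bellman equation $V^*=\mathcal{T}V^*$ directly by a two-sided policy argument and then run value iteration $V_k\uparrow V^*$ with a monotone-convergence tail estimate, whereas the paper never verifies the Bellman equation for $V^*$ directly. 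Instead it shows each $\Tt^\phi$ is a contraction on the Banach space $\Uu$ with modulus $1-\E^{-2\iota T}$, invokes Denardo's theorem to get the same for the sup-operator $\Vv$, obtains a unique fixed point $u^*$ with a maximiser $\phi^*$ via Mamer's selection result, and identifies $u^*=V^*=V^{\pi^{\phi^*}}$ by a sandwich of fixed-point iterations. The paper's route buys you the ``$\geq$'' direction of the Bellman equation for free; in your route that direction requires a single continuation policy in $\Pi$ that is $\varepsilon$-optimal simultaneously at all post-transition states $(\tilde e,\lambda-t)$, $t$ ranging over a continuum, which needs a measurable-selection argument you only gesture at.

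The concrete flaw is in your last step: you assert that $\pi^{\phi^*}$ ``achieves $V_k$ on the event $\{\Nn\leq k\}$, hence $V^{\pi^{\phi^*}}\geq V_k$ for all $k$.'' This is false as stated, because $\phi^*$ maximises the bracket with $V^*$ substituted, not with $V_{k-1}$; one only gets $\Tt^{\phi^*}V_{k-1}\leq \mathcal{T}V_{k-1}=V_k$, possibly strictly, so the $k$-truncated value under $\pi^{\phi^*}$ need not dominate $V_k$. The standard repair --- and essentially what the paper does --- is to note that $\Tt^{\phi^*}V^*=\mathcal{T}V^*=V^*$, so $V^*$ is a fixed point of the contraction $\Tt^{\phi^*}$, while $V^{\pi^{\phi^*}}$ is its unique fixed point by Proposition~\ref{prop:PropTphi}; uniqueness forces $V^{\pi^{\phi^*}}=V^*$. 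With that substitution your argument closes, but as written the inequality $V^{\pi^{\phi^*}}\geq V_k$ is unjustified.
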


The proof of Theorem~\ref{thm:ValueFunction} relies on several ingredients.
First, for a finite-horizon semi-Markov decision model to be sensible,
it is essential to have a (almost surely) finite number of decision epochs before maturity. 
In our setting, this is equivalent to the following lemma.
\begin{lemma}\label{lem:fntde}
For any~$(\ee, \uplambda)\in \Ee\times\TT$, $\pi\in\Pi$, the limit
$\lim\limits_{n\uparrow \infty}\PP^\pi_{(\ee, \uplambda)}(\Nn < n) = 1$
holds for~$\Nn$ as in Definition~\ref{def: measurablespace}.
\end{lemma}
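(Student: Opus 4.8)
The plan is to show that the inter-epoch times $X_{n}$ cannot accumulate, i.e. that the price changes infinitely often but with strictly positive expected spacing, so that $\tau_{n}=\sum_{i=1}^{n}X_{i}\uparrow\infty$ almost surely, from which $\PP^{\pi}_{(\ee,\uplambda)}(\Nn<n)=\PP^{\pi}_{(\ee,\uplambda)}(\tau_{n}>T)\to 1$ follows immediately since $\Nn=\sup\{n:\tau_{n}\le T\}$ (equivalently $\sup\{n:\Lambda_{n}\ge 0\}$ by Remark~\ref{rmk:evolution_E}). First I would use Lemma~\ref{Lem:SMK&BDP} to identify, on $[\tau_{n},\tau_{n+1})$, the duration $X_{n+1}=\tau_{n+1}-\tau_{n}$ with the first-passage time to $0$ of the pair of independent processes $B[v^{b}-m,\kappa^{b}_{j},\mu^{b}_{j},\theta^{b}_{j}]$ and $B[v^{a},\kappa^{a}_{j},\mu^{a}_{j},\theta^{a}_{j}]$: indeed a price change occurs exactly when the best bid queue or the best ask queue hits zero, and the agent's limit order only adds volume to the ask side, so $X_{n+1}\ge \tau_{B^{b}}\wedge\tau_{B^{a}}$ for the processes started from $v^{b}-m\ge 1$ and $v^{a}\ge 1$ respectively.

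The key quantitative step is a uniform lower bound on the inter-epoch times that does not depend on the (random) current state. Because the state space $\Ee$ is finite — all queue sizes lie in $\{1,\dots,N\}$ — and all the Poisson rates $\mu^{\bullet}_{\pm 1},\kappa^{\bullet}_{\pm 1},\theta^{\bullet}_{\pm 1}$ are strictly positive constants, there is a deterministic $\delta>0$ and $p_{0}>0$ such that, conditionally on any history $H_{n}$ and action $A_{n}$, we have $\PP^{\pi}_{(\ee,\uplambda)}(X_{n+1}>\delta\mid H_{n},A_{n})\ge p_{0}$. A clean way to get this: on a race started from queue sizes at most $N$ on each side, the event that no order-book event of any type occurs on $[\tau_{n},\tau_{n}+\delta)$ has probability at least $\E^{-\Lambda\delta}$ where $\Lambda$ is the largest possible total event rate (finite, since there are finitely many states), and on that event no queue can be depleted, so $X_{n+1}>\delta$. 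Set $p_{0}:=\E^{-\Lambda\delta}$.

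With this uniform bound I would conclude by a standard Borel–Cantelli / geometric argument: the events $G_{i}:=\{X_{i}>\delta\}$ satisfy $\PP^{\pi}_{(\ee,\uplambda)}(G_{i}\mid \text{past})\ge p_{0}$ for every $i$, so by the conditional (second) Borel–Cantelli lemma (or by a direct estimate comparing $\#\{i\le n: G_{i}\text{ occurs}\}$ to a sum of Bernoulli variables dominated below by i.i.d.\ $\mathrm{Bern}(p_{0})$) infinitely many of the $G_{i}$ occur almost surely; hence $\tau_{n}=\sum_{i=1}^{n}X_{i}\ge \delta\cdot\#\{i\le n:G_{i}\}\uparrow\infty$ a.s., and in particular $\PP^{\pi}_{(\ee,\uplambda)}(\Nn<n)=\PP^{\pi}_{(\ee,\uplambda)}(\tau_{n}>T)\to 1$ as $n\uparrow\infty$. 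The main obstacle is making the ``uniform in the history and action'' lower bound rigorous: one must check that the conditioning in Theorem~\ref{thm:Tulcea} indeed makes $X_{n+1}$ depend on $H_{n},A_{n}$ only through $(E_{n},A_{n})\in\Kk$, that the constraint $m<v^{b}$ in $\Aa(e)$ guarantees the bid queue starts at $v^{b}-m\ge 1$ (so the race is genuinely non-degenerate), and that finiteness of $\Ee$ and $\Aa$ makes $\Lambda$, hence $p_{0}$, a genuine positive constant; once these are in place the Borel–Cantelli step is routine.
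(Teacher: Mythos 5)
Your proposal is correct and rests on the same key quantitative input as the paper's proof, namely a lower bound on $\PP^{\pi}_{(\ee,\uplambda)}(X_{n+1}>\delta\mid E_n,A_n)$ that is uniform over all state--action pairs; the paper obtains this bound by stochastic ordering of the birth--death processes (comparing $B^b$ and $A^l$ with single-order queues, giving $\upsilon=\E^{-2\iota\zeta}$) and then invokes \cite[Proposition 2.1]{huang2011finite}, whereas you obtain it from the cruder but equally valid ``no order-book event in $[\tau_n,\tau_n+\delta)$'' estimate and finish with a self-contained conditional Borel--Cantelli argument. The one point to tighten is your parenthetical that the total event rate is finite ``since there are finitely many states'': during a race the queue lengths live on all of $\NN$ and the cancellation rate $i\theta$ is unbounded, so you should instead note that only the event rate at the \emph{initial} configuration (bounded by the rates at queue sizes at most $N$ and $N+\overline{l}$) matters for the no-event probability, since the rates cannot change before the first event occurs.
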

\begin{proof}
According to~\cite[Proposition 2.1]{huang2011finite}, 
it suffices to prove that there exist $\zeta, \upsilon>0$ such that
\begin{equation}\label{ineq:regu1}
{Q}(\zeta, \Ee\lvert ({e}, {\alpha})) \leq 1- \upsilon,
\end{equation}
for any~$({e}, {\alpha})\in \Kk$.
According to~\eqref{eq:semiMarkovkernel} and Lemma~\ref{Lem:SMK&BDP}, we can write,
for any $\zeta>0$ and~$({e}, {\alpha})\in \Kk$,
$$
{Q}(\zeta, \Ee\lvert({e}, {\alpha})) 
= \PP^\pi_{(\ee, {\uplambda})}(X_{n+1}\leq \zeta\lvert E_n = {e}, A_n = {\alpha})
 = \overline{\PP}\left(\tau_{{B}^b}\wedge\tau_{{A}^l}\leq \zeta\right)
= 1 - \overline{\PP}\left(\tau_{{B}^b} > \zeta\right) \overline{\PP}\left(\tau_{{A}^l} > \zeta\right).
$$
By Assumption~\ref{ass:AdmissibleTradingS}\eqref{itm:Strategysecond}, the agent never consumes up all the volumes at the best bid price through submitting market orders, 
so that there is at least one unit size order left at the best bid and ask price after the agent's action.
Then according to stochastic ordering for the birth and death processes~\cite[Section 3]{irle2003stochastic}, the inequalities
$$
\overline{\PP}\left(\tau_{{B}^b}> \zeta\right)
\geq
\overline{\PP}\left(\tau_{B[1, 0, \mu_{{j}}^b, \theta_{{j}}^b]} > \zeta\right) 
\geq \E^{-{\iota}\zeta},
$$
and
$$
\overline{\PP}\left(\tau_{{A}^l} > \zeta\right)
\geq
\overline{\PP}\left(\tau_{C^l} > \zeta\right)
\geq
\overline{\PP}\left(\tau_{C[1, 0, \mu_{{j}}^a, \theta_{{j}}^a]} > \zeta\right)
\geq \E^{-{\iota}\zeta},
$$
hold with
$\iota := \max\left\{\mu_j^\ssf+\theta_j^\ssf: (\ssf, j) \in \{a, b\}\times\{+1, -1\}\right\}$,
and~\eqref{ineq:regu1} therefore holds for $\zeta>0$ and~$\upsilon = \E^{-2{\iota}\zeta}$.
\end{proof}
Next, let~$\Uu$ denote the Banach space of non-negative valued functions on~$\Ee\times\TT$ with a finite supremum norm:
$$
\Uu:=\left\{u:\Ee\times\TT\to\RR_0^+\,\Bigg\lvert\,
\|u\|:= \sup_{(e,\lambda)\in \Ee\times\TT}|u(e, \lambda)|<\infty\right\}.
$$
and, for any decision rule~$\phi\in\Phi$,
introduce the dynamic programming operator~$\Tt^\phi$ acting on~$\Uu$ as
\begin{align*}
\Tt^\phi u(e, \lambda) 
 & :=
 r(e, \phi(e, \lambda)) + \sum_{\zz = 0}^\infty w(e, \phi(e, \lambda), \zz)P(\zz\lvert (e, \phi(e, \lambda)), \lambda)
 + \sum_{\tilde{e}\in \Ee}\int_0^\lambda u(\tilde{e}, \lambda-t)Q\big(\D t, \tilde{e}\lvert (e, \phi(e, \lambda))\big),
\end{align*}
for any~$u\in\Uu$ and~$(e, \lambda)\in \Ee\times \TT$.
The following proposition, as proved in Appendix~\ref{app:PropTpi}, gives properties of~$\Tt^\phi$.

\begin{proposition}\label{prop:PropTphi}
For any~$\phi\in\Phi$ and~$\pi := \{\phi_0, \phi_1, \phi_2, \dots\}\in\Pi$,
the following hold:
\begin{enumerate}[(a)]
\item \label{prop:PropTphifirst}
$\Tt^\phi$ is a monotone contraction on~$\Uu$ with codomain~$\Uu$;
\item \label{prop:PropTphisecond} 
the identity~$V^\pi = \Tt^{\phi_0} V^{\pi_-}$ is valid on~$\Ee\times\TT$,
where~$\pi_- := \{\phi_1, \phi_2, \dots\}\in\Pi$.
\end{enumerate}
\end{proposition}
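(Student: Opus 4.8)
The plan is to verify the two assertions in turn, both by more-or-less direct estimates on the defining formula for $\Tt^\phi$. For part~\eqref{prop:PropTphifirst}, I would first check that $\Tt^\phi$ maps $\Uu$ into itself: since $r$ is bounded on the finite set $\Kk$ (it is a finite sum with entries bounded in terms of $N$ and $\rho$), since $w(e,\alpha,\zz)$ is bounded (again a finite expression in $N,\rho,\overline v$, and it is $\RR^+_0$-valued by Assumption~\ref{ass:TRF}\eqref{ass:TRFc}, with $\zz\le l\le N$ so only finitely many terms contribute), since $P(\cdot\lvert(e,\alpha),\lambda)$ is a sub-Markov kernel so $\sum_\zz P(\zz\lvert\cdots)\le 1$, and since $Q(\lambda,\Ee\lvert(e,\alpha))\le 1$, one gets $\|\Tt^\phi u\|\le \|r\|+\|w\|+\|u\|<\infty$; non-negativity of $\Tt^\phi u$ follows from non-negativity of $r,w,u$ and the kernels. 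Monotonicity ($u_1\le u_2\Rightarrow\Tt^\phi u_1\le\Tt^\phi u_2$) is immediate because the only place $u$ enters is under the non-negative measure $Q(\D t,\tilde e\lvert\cdot)$. For the contraction property, given $u_1,u_2\in\Uu$,
\begin{equation*}
\left|\Tt^\phi u_1(e,\lambda)-\Tt^\phi u_2(e,\lambda)\right|
\le \sum_{\tilde e\in\Ee}\int_0^\lambda\left|u_1(\tilde e,\lambda-t)-u_2(\tilde e,\lambda-t)\right|Q(\D t,\tilde e\lvert(e,\phi(e,\lambda)))
\le \|u_1-u_2\|\, Q(\lambda,\Ee\lvert(e,\phi(e,\lambda))),
\end{equation*}
so a uniform bound $Q(\lambda,\Ee\lvert(e,\alpha))\le Q(T,\Ee\lvert(e,\alpha))\le 1-\upsilon<1$ for all $(e,\alpha)\in\Kk$ — exactly the estimate \eqref{ineq:regu1} established in the proof of Lemma~\ref{lem:fntde}, with $\zeta=T$ (or, if one prefers a fixed small $\zeta$, one first notes $Q(\lambda,\Ee\lvert\cdot)$ is non-decreasing in $\lambda$ and works on $\TT=[0,T]$) — yields $\|\Tt^\phi u_1-\Tt^\phi u_2\|\le(1-\upsilon)\|u_1-u_2\|$, i.e. $\Tt^\phi$ is a contraction with modulus $1-\upsilon$.

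For part~\eqref{prop:PropTphisecond}, the plan is to unfold the definition of $V^\pi$ from \eqref{eq:originValueFunction} (in the rearranged form of Remark~\ref{rmk:vfchange}) under $\PP^\pi_{(e,\lambda)}$ and condition on the first transition. Using the tower property and the transition laws from Theorem~\ref{thm:Tulcea}: the $n=0$ term of $V^\pi$ contributes $r(e,\phi_0(e,\lambda))$ (the $n=0$ indicator $\ind_{\{\Lambda_0\ge0\}}$ is $1$ since $\lambda\in\TT$) plus the terminal term $\EE^\pi_{(e,\lambda)}\big(w(e,\phi_0(e,\lambda),\ZZ)\ind_{\{0\le\lambda<X_1\}}\big)$, which by the last line of Theorem~\ref{thm:Tulcea} equals $\sum_{\zz\ge0}w(e,\phi_0(e,\lambda),\zz)P(\zz\lvert(e,\phi_0(e,\lambda)),\lambda)$; and the sum of the remaining terms ($n\ge1$), conditioned on $\{X_1=t,E_1=\tilde e\}$ (with $\Lambda_1=\lambda-t$ by Remark~\ref{rmk:evolution_E}), is by the Markov structure of the construction exactly $V^{\pi_-}(\tilde e,\lambda-t)$ when $\lambda-t\ge0$ and $0$ otherwise (all later rewards vanish once the horizon is exhausted, by the convention $\phi_n(e,\lambda)=(0,0)$ for $\lambda<0$ and the fact that $r,w$ vanish there after the liquidation is forced — more precisely one uses that the $n\ge1$ sum only sees decision epochs with $\Lambda_n\ge0$). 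Integrating against the law of $(X_1,E_1)$, namely $Q(\D t,\tilde e\lvert(e,\phi_0(e,\lambda)))$ restricted to $t\le\lambda$, gives precisely the integral term in $\Tt^{\phi_0}V^{\pi_-}(e,\lambda)$, completing the identity. A small point to be careful about is the interchange of $\EE$ with the infinite sum over $n$, which is licensed by monotone convergence since all summands are non-negative (this is already invoked in Remark~\ref{rmk:vfchange}).

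The main obstacle is the bookkeeping in part~\eqref{prop:PropTphisecond}: one must argue cleanly that conditioning on the first step genuinely reproduces $V^{\pi_-}$ evaluated at the shifted time-to-maturity, i.e. that the post-$\tau_1$ evolution under $\pi=\{\phi_0,\phi_1,\dots\}$ started from $(\tilde e,\lambda-t)$ is distributed exactly as the evolution under $\pi_-=\{\phi_1,\phi_2,\dots\}$ started from $(\tilde e,\lambda-t)$, including the terminal contribution $w(E_\Nn,A_\Nn,\ZZ)$. This is a shift/Markov-property statement for the Ionescu–Tulcea construction; I would handle it by checking that the two families of finite-dimensional distributions agree — both are pinned down by the same initial law and the same kernels $Q,P$ and decision rules $(\phi_1,\phi_2,\dots)$ — and then invoking uniqueness in Theorem~\ref{thm:Tulcea}. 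Everything else is routine estimation using boundedness of $r,w$ on the finite sets $\Kk,\Kk\times\{0,\dots,N\}$ and the sub-Markov/contraction bound $Q(\lambda,\Ee\lvert\cdot)\le 1-\upsilon$ inherited from Lemma~\ref{lem:fntde}.
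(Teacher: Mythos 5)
Your proposal is correct and follows essentially the same route as the paper's proof: part~(a) via direct boundedness estimates and the uniform bound $Q(\lambda,\Ee\lvert\cdot)\leq 1-\E^{-2\iota T}$ inherited from the proof of Lemma~\ref{lem:fntde}, and part~(b) by conditioning on the first transition via Theorem~\ref{thm:Tulcea} and the Markov/shift structure of the Ionescu--Tulcea construction. Your explicit handling of the case $\lambda - t < 0$ and of the identification of the post-$\tau_1$ law with that of $\pi_-$ is, if anything, slightly more careful than the paper's write-up.
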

We can now prove Theorem~\ref{thm:ValueFunction}.

\begin{proof}[Proof of Theorem~\ref{thm:ValueFunction}]
By Proposition~\ref{prop:PropTphi}\eqref{prop:PropTphisecond}, the identity~$V^{\pi^\phi} = \Tt^\phi V^{\pi^\phi}$ holds for any~$\phi\in\Phi$ and corresponding stationary policy~$\pi^\phi := \{\phi, \phi, \dots\}\in \Pi^S$.
For any~$\pi\in\Pi$, the finiteness of the state space and the action space together with Lemma~\ref{lem:fntde} yield that~$V^\pi\in\Uu$.
Therefore, Banach Fixed-Point's Theorem~\cite{granas2013fixed}
and Proposition~\ref{prop:PropTphi}\eqref{prop:PropTphifirst}
guarantee existence and uniqueness of~$V^{\pi^\phi}$ and that
\begin{equation}\label{eq:pfeq1}
\lim_{n\uparrow\infty}\left(\Tt^\phi\right)^n u = V^{\pi^\phi},\quad\text{for any }u\in\Uu.
\end{equation}

Introduce now the iteration operator~$\Vv$ acting on~$\Uu$ as, 
for any $u\in\Uu$ and $(e, \lambda)\in \Ee\times \TT$,
\begin{equation}\label{eq:DPoA}
\Vv u(e,\lambda) := \sup_{\alpha\in\Aa(e)}\left\{
r(e,\alpha) + \sum_{\zz = 0}^\infty w(e, \alpha, \zz)P(\zz\lvert (e, \alpha), \lambda)
+\sum_{\tilde{e}\in \Ee}\int_0^\lambda u(\tilde{e}, \lambda - t)Q(\D t, \tilde{e}\lvert (e, \alpha))
\right\},
\end{equation}
which is also a contraction with codomain~$\Uu$. 
Indeed, $\Vv(\Uu)\subset\Uu$ is immediate since the action space is finite,
and the contraction property is inherited from that of~$\Tt^\phi$ 
by~\cite[Theorem 2]{denardo1967contraction}.
Banach Fixed-Point's Theorem~\cite{granas2013fixed} then ensures that~$\Vv u = u$ has a unique solution, denoted by~$u^*$.
By~\cite[Section 1]{Mamer1986successive}, 
the fixed point~$u^*$ admits a maximiser~$\phi^*$ such that~$u^* = \Tt^{\phi^*}u^*$, 
with, for any~$(e, \lambda)\in\Ee\times\TT$,
\begin{equation}\label{eq:Toptimalp}
\phi^*(e, \lambda) := \underset{\alpha\in\Aa(e)}{\arg\max}
\left\{
r(e,\alpha) + \sum_{\zz = 0}^\infty w(e, \alpha, \zz)P(\zz\lvert (e, \alpha), \lambda)
+\sum_{\tilde{e}\in \Ee}\int_0^\lambda u^*(\tilde{e}, \lambda - t)Q(\D t, \tilde{e}\lvert (e, \alpha))
\right\}.
\end{equation}
Suppose now that a policy~$\pi^*:= \left\{\phi^*_0, \phi^*_1, \phi^*_2, \dots\right\}\in\Pi$ is~$\TT$-optimal in~\eqref{eq:optimalpolicy}.
Proposition~\ref{prop:PropTphi} and~\eqref{eq:pfeq1} yield
\begin{equation}\label{eq:pfeq2}
V^* = V^{\pi^*} = \Tt^{\phi^*_0}V^{\pi^*_-} \leq \Tt^{\phi^*_0} V^{\pi^*}\leq\lim_{n\uparrow\infty}\left(\Tt^{\phi^*_0}\right)^nV^{\pi^*} = V^{\pi^{\phi^*_0}}.
\end{equation}
Combining this with~$V^{\pi^{\phi^*_0}}\leq V^*$ by Definition~\ref{Def:expectedReward} indicates that
the stationary policy~$\pi^{\phi^*_0} := \left\{\phi^*_0, \phi^*_0, \dots\right\}\in\Pi^S$ is also~$\TT$-optimal.
Since~$u^* = \Tt^{\phi^*}u^*\geq \Tt^{\phi^*_0}u^*$, 
applying Proposition~\ref{prop:PropTphi} and~\eqref{eq:pfeq1} we obtain
$$
V^* = V^{\pi^{\phi^*_0}}=\lim_{n\uparrow\infty}\left(\Tt^{\phi^*_0}\right)^nu^*\leq\Tt^{\phi^*_0}u^*\leq u^* = \Tt^{\phi^*}u^*  =\lim_{n\uparrow\infty} \left(\Tt^{\phi^*}\right)^n u^* = V^{\pi^{\phi^*}}\leq V^*,
$$
and Theorem~\ref{thm:ValueFunction} follows. 
\end{proof}

\section{Empirical studies}\label{sec:empirical}
Our empirical calculations are based on the `Level-I' LOBSTER data
for three large-tick stocks: Microsoft~(MSFT), Intel~(INTC) and Yahoo~(YHOO),
that are traded on the Nasdaq platform from 11 April 2016 to 15 April 2016,
recording all market order arrivals, limit order arrivals, 
and cancellations at the best prices between 9.30am and 4pm.
These three large-tick stocks are selected due to price, trading volume and market share considerations~\cite[Section 4]{bonart2017latency}.
In order to avoid the impact from the abnormal trading behaviours shortly after market opening 
or shortly before market closing,
we exclude market activities during the first and the last twenty minutes of each trading day.
We also exclude all executions of hidden orders which accounts for around~$12\%$ of the entire trading volume.
In the following, we first (Section~\ref{sec:EstMd}) illustrate the estimation methodology
of the Poisson parameters in Assumption~\ref{ass:PoissonOrderFlow},
as well as the joint distribution of the best volumes after a price change in Assumption~\ref{ass:LOBEvolution}\eqref{ass:LOBEvolution3}.
We then (Section~\ref{sec:OptmStrtg}) give a numerical scheme that approximates the value function in~\eqref{eq:VF}.
We finally (Section~\ref{sec:OSresult}) visualise the optimal decision rule in~\eqref{eq:OptimalSPolicy} for liquidating the stock YHOO under different trading conditions. 

\subsection{Parameter estimation}\label{sec:EstMd}
\subsubsection{Poisson parameters}
As in Assumption~\ref{ass:Underlying}\eqref{ass:Underlying1},
orders from the general market participants are of unit size. 
We first compute the average size of the limit orders, market orders and cancellations at the best prices, denoted by~$S^l, S^m$ and~$S^c$ respectively, and choose the unit size to be~$S^l$. 
Estimation results are given in Table~\ref{Tab:aos}.
\begin{table}[!htp]
\centering
\begin{tabular}{|c|c|c|c|}
\hline
 & \quad MSFT \,\,\,\quad & \quad INTC \,\,\,\quad & \quad YHOO \,\,\,\quad \\
\hline
$S^l$ &176 &317  & 209 \\
\hline
$S^m$ & 332& 565 & 334 \\
\hline
$S^c$ & 163 & 309 & 201 \\
\hline
\end{tabular}
\caption{average order size (in shares)}
\label{Tab:aos}
\end{table}

We then estimate the Poisson parameters as follows.
From historical data, we formulate the set~$\Qf_{+1}$ (resp.~$\Qf_{-1}$) as the queueing races happening immediately after a price increase (resp. decrease):
if the spread is currently one tick, 
a queueing race~$\qf_{+1}\in\Qf_{+1}$ (resp.~$\qf_{-1}\in\Qf_{-1}$) starts when the best bid (resp. ask) price increases (resp. decreases) by one tick after the best ask (resp. bid) queue depletes, 
and ends whenever either the new best ask or bid queue depletes. 
By maximum likelihood estimation (see Appendix~\ref{App:MLE}), we have
\begin{equation}\label{eq:MaxLikelihoodResult}
\hat{\mu}^\ssf_{j}  = \frac{N^m_{\ssf, j}}{D_j}\frac{S^m}{S^l},\qquad\qquad
\hat{\kappa}^\ssf_{j}  = \frac{N^l_{\ssf, j}}{D_j},\qquad\qquad
\hat{\theta}^\ssf_{j}  = \frac{N^c_{\ssf, j}}{V_{\ssf, j}}\frac{S^c}{S^l},\qquad\qquad
\text{for }\ssf\in\{a, b\}\text{ and }j\in\{+1, -1\},
\end{equation}
where
\begin{itemize}
\item $N^m_{\ssf, j}, N^l_{\ssf, j}$ and~$N^c_{\ssf, j}$ represent the total number of market orders, limit orders and cancellations at~$\ssf$ price\footnote{By abuse of language, `at~$a$ (resp,~$b$ price)' means `at the best ask (resp. bid) price'.} for the queueing races in set~$\Qf_j$;
\item $D_j$ represents the sum of the length of the queueing races in~$\Qf_j$;
\item $V_{\ssf, j} := \displaystyle\sum_{i=1}^{\# \Qf_j}\int_{\mathfrak{T}_i} \textrm{Vol}^{\ssf}_i(t)\D t$,
where~$\textrm{Vol}^\ssf_i(t)$ (resp. $\mathfrak{T}_i$) denotes
the volume in unit size at~$\ssf$ price at time~$t$
(resp. the time interval) of the $i$-th queuing race in~$\Qf_j$.
\end{itemize}
Table~\ref{tab:ParaEst0msLtc} gives the Poisson parameter estimation where the agent's action at each decision epoch has no latency. For the three stocks, we find that:
\begin{itemize}
\item the rates of market order arrivals are indifferent to the side of the best price and the price move direction;
\item immediately after a price increase (resp. decrease), there is a higher rate of limit order arrivals and cancellations at the best bid (resp. ask) price than at the best ask (resp. bid) price;
\item from an estimation (of the Poisson parameters) point of view, an increase of the price on the bid (resp. ask) side is symmetric to a decrease of price on the ask (resp. bid) side.
\end{itemize}
Table~\ref{tab:ParaEst1msLtc} gives the Poisson parameter estimation where the agent's action at each decision epoch has a one-millisecond latency\footnote{When estimating the Poisson parameters in this case, market activities at the first one millisecond of each queueing race are excluded,
and the queueing races with duration shorter than one millisecond are excluded.}. 
By comparing it with Table~\ref{tab:ParaEst0msLtc}, we observe that:
\begin{itemize}
\item the rates of market order arrivals barely change;
\item the rates of limit order arrivals and cancellations see a decrease,
especially on the bid side after a price increase and on the ask side after a price decrease;
\item the symmetry remains unaffected.
\end{itemize}

\begin{table}[!htp]
\centering
\begin{tabular}{|c|c|c|c|c|c|c|c|c|c|c|}
\hline
\multicolumn{2}{|c|}{ } & \multicolumn{3}{|c|}{MSFT}&\multicolumn{3}{|c|}{INTC} & \multicolumn{3}{|c|}{YHOO}\\
\hline
$\ssf$ & $j$ & $\mu$ & $\kappa$ & $\theta$ & $\mu$ & $\kappa$ & $\theta$ &$\mu$ & $\kappa$ & $\theta$\\
\hline
$a$  & +1 & 0.32 &3.07  &0.31       & 0.16 & 2.45 & 0.16       & 0.14 & 1.97 & 0.26\\
$b$  & +1 & 0.34 & 5.97 &0.50       & 0.17 & 3.59  & 0.21       & 0.17 & 3.54  & 0.32\\
$a$  & -1  & 0.35 &5.97  & 0.51      & 0.18 & 3.87 &  0.22     & 0.15 & 3.29 &0.33 \\
$b$  & -1  &  0.34&3.06  &  0.32     & 0.18 & 2.22 &  0.16    & 0.15 & 1.92 &0.21 \\
\hline
\end{tabular}
\caption{Poisson parameter estimation with no latency}
\label{tab:ParaEst0msLtc}
\end{table}

\begin{table}[!htp]
\centering
\begin{tabular}{|c|c|c|c|c|c|c|c|c|c|c|}
\hline
\multicolumn{2}{|c|}{ } & \multicolumn{3}{|c|}{MSFT}&\multicolumn{3}{|c|}{INTC} & \multicolumn{3}{|c|}{YHOO}\\
\hline
$\ssf$ & $j$ & $\mu$ & $\kappa$ & $\theta$ & $\mu$ & $\kappa$ & $\theta$ &$\mu$ & $\kappa$ & $\theta$\\
\hline
$a$  & +1 &0.31 &2.89 &0.27         &0.15&2.36&0.15        &0.13&1.87&0.23\\
$b$  & +1 &0.33 &3.31 &0.40       &0.19&2.46&0.17        &0.16&2.07&0.26\\
$a$  & -1  &0.34 &3.22 &0.41       &0.18&2.49&0.18        &0.14&2.02&0.27\\
$b$  & -1  &0.34 &2.87 &0.27         &0.19&2.36&0.17        &0.15&1.83&0.18\\
\hline
\end{tabular}
\caption{Poisson parameter estimation with 1ms latency}
\label{tab:ParaEst1msLtc}
\end{table}
\subsubsection{Volume distribution after a price change}
The volume in unit size is approximated by 
rounding the division of the volume in shares by~$S^l$ up to the nearest integer.
Figure~\ref{fig:YHOOf1msltc} compares the volume distribution 
immediately after a price change and one millisecond later for~YHOO\footnote{For implementing numerical calculation, we introduce the truncation by assuming~$f_{\pm 1}(v^b, v^a) = 0$ for any~$v^b, v^a > 25$ 
since the inequality~$\sum_{v^b = 1}^{25}\sum_{v^a = 1}^{25} f_{\pm 1}(v^b, v^a) \geq 95\%$ holds right after a price change and one millisecond later for YHOO.}.
We observe that:
\begin{itemize}
\item the volume at the best bid (resp. ask) price is quite thin immediately after a price increase (decrease), 
but see a dramatic increase one millisecond later;
\item the volume at the best ask (resp. bid) price keeps the distribution almost unchanged within the first millisecond of the queueing race starting with a price increase (resp. decrease).
\end{itemize}
\begin{figure}[!htp]
\centering
\includegraphics[scale=0.46]{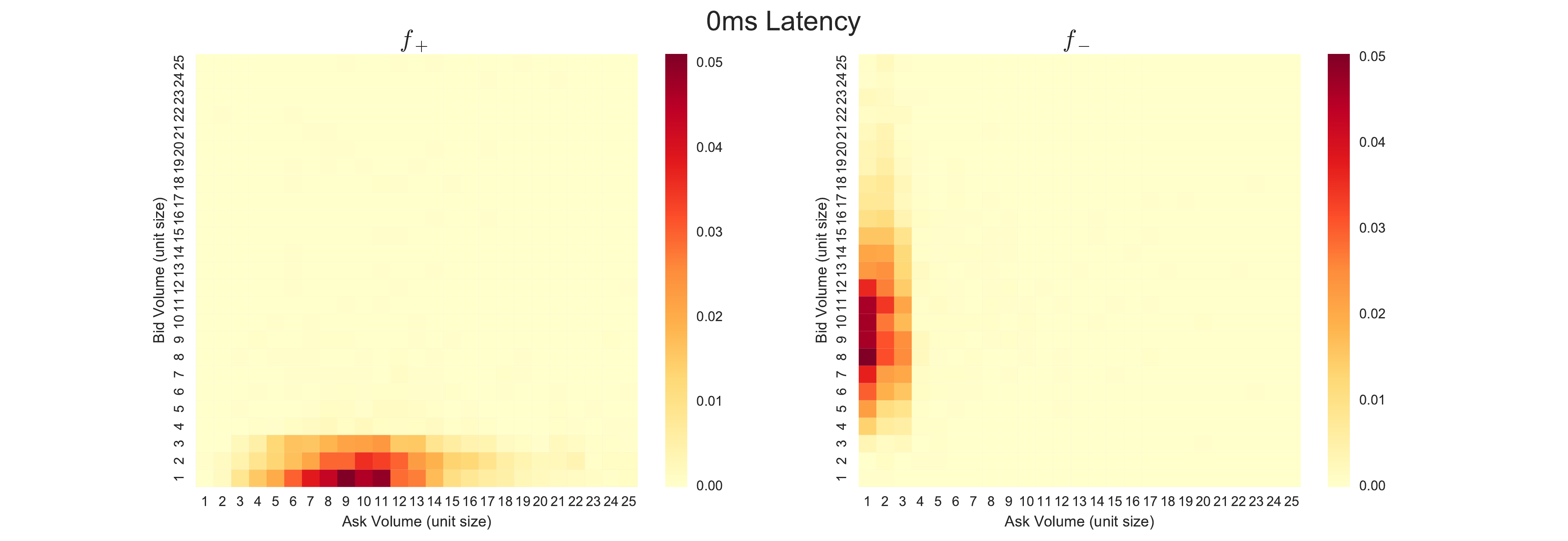}
\includegraphics[scale=0.46]{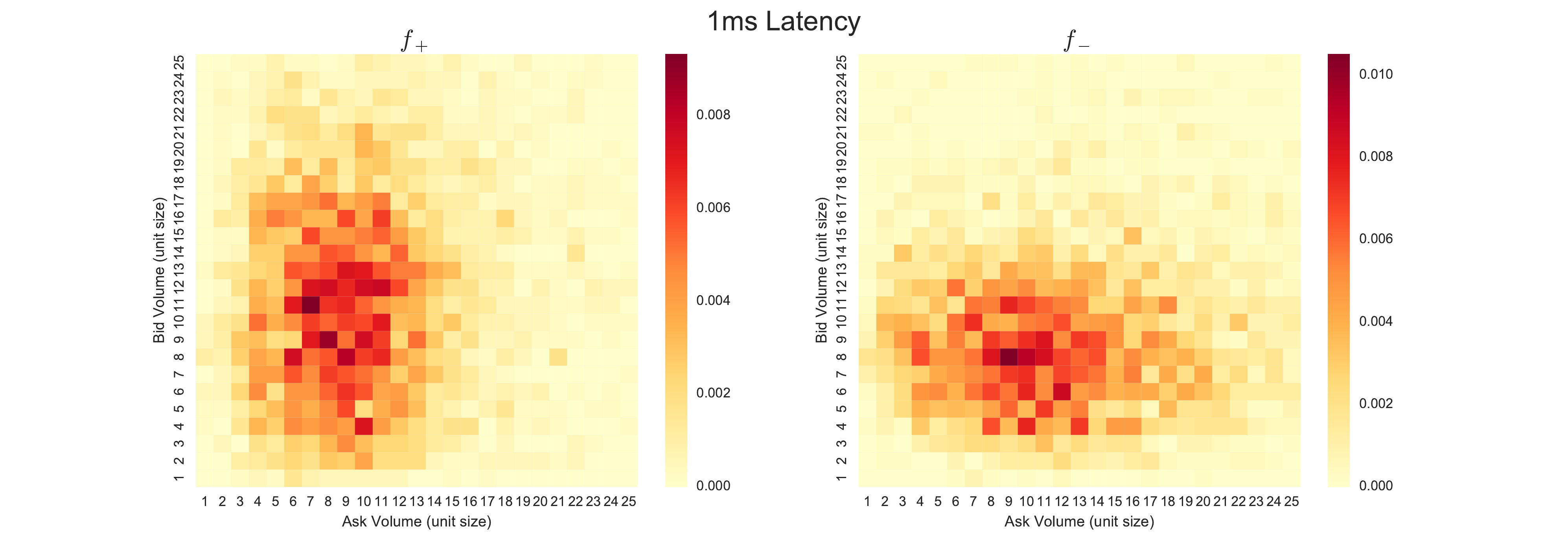}
\caption{YHOO:~$f_{+1}$ (left) and~$f_{-1}$ (right) with no latency (top) and with 
$1$ms latency (bottom)}
\label{fig:YHOOf1msltc}
\end{figure}

\subsection{Numerical scheme}\label{sec:OptmStrtg}
Dynamic programming techniques usually suffer from the `curse of dimensionality'~\cite{powell2007approximate}
to compute the value function through the iteration operator~$\Aa$ in~\eqref{eq:DPoA}.
The next proposition, proved in Appendix~\ref{app:VFrdcfm}, 
allows us to reduce the dimension of the problem, and hence to accelerate the implementation.
\begin{proposition}\label{prop:VFrdcfm}
Given $e := (j, v^b, v^a, p, z, y)\in \Ee$,~$\overline{e} := (j, v^b, v^a, \overline{p}, \overline{z}, y)\in \Ee$ and~$\lambda\in\TT$,
we have
\begin{align*}
V^*(e, \lambda) 
= V^*(\overline{e}, \lambda) + \rho({p} - \overline{p})(y+{z}) + \rho({z} - \overline{z}) (\overline{p} - j).
\end{align*}
\end{proposition}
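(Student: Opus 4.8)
The plan is to prove the equivalent statement that, with
$$
\psi(e) := \rho(p-1)(y+z) + \rho z(1-j),\qquad\text{for } e = (j,v^b,v^a,p,z,y)\in\Ee,
$$
(which, by a one-line rearrangement, equals $\rho(p-1)y + \rho z(p-j)$), the difference $V^*(e,\lambda) - \psi(e)$ depends on $e$ only through $(j,v^b,v^a,y)$. Granting this, the asserted identity follows at once by evaluating at $e$ and at $\overline{e} = (j,v^b,v^a,\overline{p},\overline{z},y)$ and subtracting, the only remaining step being the elementary algebraic identity $\psi(e) - \psi(\overline{e}) = \rho(p-\overline{p})(y+z) + \rho(z-\overline{z})(\overline{p}-j)$. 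To carry this out I introduce the set
$$
\Uu_0 := \left\{u\in\Uu : \ (e,\lambda)\mapsto u(e,\lambda) - \psi(e)\ \text{does not depend on the coordinates }p\text{ and }z\text{ of }e\right\}.
$$
Since $\Ee$ is finite and $\psi\geq 0$ (as $p\geq1$, $y,z\geq0$ and $j\in\{-1,+1\}$), we have $\psi\in\Uu_0$, so $\Uu_0\neq\emptyset$; moreover $\Uu_0$ is closed in $\Uu$, because if $u_n\to u$ uniformly with $u_n-\psi$ free of $p,z$ then $u-\psi$ is free of $p,z$ as well. Recall from the proof of Theorem~\ref{thm:ValueFunction} that the operator $\Vv$ of~\eqref{eq:DPoA} is a contraction on $\Uu$ with codomain $\Uu$ and unique fixed point $V^*$. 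Hence it suffices to show the invariance $\Vv(\Uu_0)\subseteq\Uu_0$: then $\Vv$ restricts to a contraction of the complete space $\Uu_0$, whose unique fixed point must coincide with $V^*$, giving $V^*\in\Uu_0$.

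For the invariance step, fix $u\in\Uu_0$ and write $u(e,\lambda) = \psi(e) + c(j,v^b,v^a,y,\lambda)$. The structural facts I will use are: (i) the admissible set $\Aa(e)$ in~\eqref{def:actionspace} depends on $e$ only through $(v^b,y)$; (ii) by~\eqref{eq:p_PFun} and~\eqref{eq:t_PFun}, $r(e,\alpha)=\rho m(p-1)+\rho z(p-j)$ and $w(e,\alpha,\zz)=\rho(p-1)(y-m)+\rho\zz-g(y-m-\zz)$ are affine in $(p,z)$, with $g$ free of $(p,z)$; (iii) by the factorisation~\eqref{eq:semiMarkovkernelsimp} together with Propositions~\ref{prop:qbarcal} and~\ref{prop:Pcal}, the quantities $P(\zz\lvert(e,\alpha),\lambda)$, $\sum_{\zz}P(\zz\lvert(e,\alpha),\lambda)=1-Q(\lambda,\Ee\lvert(e,\alpha))$ (Remark~\ref{rmk: tk}) and the marginal kernel $\Qq_{j,v,\alpha}(\D t,\tilde j,\tilde z)$ do not involve $(p,z)$ at all, while $Q(\D t,\tilde e\lvert(e,\alpha))$ involves $p$ only through the constraint $\tilde p = p+\tilde j$ (and never $z$), so that the integration runs over $\tilde e = (\tilde j,\tilde v^b,\tilde v^a,p+\tilde j,\tilde z,y-m-\tilde z)$ with $\sum_{\tilde v^b,\tilde v^a}f_{\tilde j}(\tilde v^b,\tilde v^a)=1$. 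Substituting $u(\tilde e,\lambda-t)=\psi(\tilde e)+c(\tilde j,\tilde v^b,\tilde v^a,y-m-\tilde z,\lambda-t)$ and using $\psi(\tilde e)=\rho(p+\tilde j-1)(y-m)+\rho\tilde z(1-\tilde j)$, a direct expansion shows that for each admissible $\alpha=(m,l)$ the $\alpha$-bracket in~\eqref{eq:DPoA} equals $\rho p\,y+\rho z(p-j)+K_\alpha(j,v^b,v^a,y,\lambda)$ with $K_\alpha$ free of $(p,z)$. The key cancellations are: the two occurrences of the coefficient $(y-m)$ — one carried by the terminal-reward term with total mass $1-Q(\lambda,\Ee\lvert(e,\alpha))$, the other by the $\psi(\tilde e)$-part of the kernel term with the complementary mass $Q(\lambda,\Ee\lvert(e,\alpha))$ — recombine into the full $\rho p(y-m)$, and the $-\rho p m$ piece of the latter cancels the $\rho p m$ coming from $r(e,\alpha)$. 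Since by (i) the supremum over $\alpha\in\Aa(e)$ is over a set depending only on $(v^b,y)$, we obtain $\Vv u(e,\lambda)=\rho p\,y+\rho z(p-j)+\widetilde c(j,v^b,v^a,y,\lambda)$ with $\widetilde c := \sup_{\alpha\in\Aa(e)}K_\alpha$; hence $\Vv u(e,\lambda)-\psi(e)=\rho y+\widetilde c(j,v^b,v^a,y,\lambda)$ depends on $e$ only through $(j,v^b,v^a,y)$. As $\Vv u\in\Uu$, this gives $\Vv u\in\Uu_0$.

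The main obstacle is precisely the bookkeeping in this invariance computation: one must track which of the several terms in~\eqref{eq:DPoA} carry $(p,z)$-dependence and verify that what survives telescopes to $\rho p\,y + \rho z(p-j)$, the delicate point being the mass-splitting identity $\rho(p-1)(y-m)\bigl(1-Q(\lambda,\Ee\lvert(e,\alpha))\bigr)+\rho p(y-m)Q(\lambda,\Ee\lvert(e,\alpha))=\rho(p-1)(y-m)+\rho(y-m)Q(\lambda,\Ee\lvert(e,\alpha))$, which makes the coefficient of $(y-m)$ come out correctly whether or not the queueing race completes before maturity. A minor auxiliary point to check is the normalisation $\sum_{\tilde v^b,\tilde v^a}f_{\tilde j}(\tilde v^b,\tilde v^a)=1$ (with the truncation renormalisation used in the numerics, and the boundary convention that $N$ is large enough that the post-transition ask price stays in range), which is what lets the summations over the post-transition best volumes collapse.
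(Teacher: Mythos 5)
Your argument is correct, and it runs on the same engine as the paper's own proof: uniqueness of the fixed point of the iteration operator $\Vv$ in~\eqref{eq:DPoA} (established in the proof of Theorem~\ref{thm:ValueFunction}), combined with the facts that $r$ and $w$ are affine in $(p,z)$ while $P$, the marginal kernel $\Qq_{j,v,\alpha}$ and the admissible set $\Aa(e)$ do not see $(p,z)$. The packaging differs: the paper handles the two coordinates separately, first reading the $z$-shift off the fixed-point equation and then exhibiting the explicit auxiliary function $u(\ee,\lambda)=V^*(\ee-\Delta_p\mathbf{i},\lambda)+\rho\Delta_p(y+z)$ as a second fixed point of $\Vv$, whereas you run a single invariant-closed-subspace argument, showing $\Vv(\Uu_0)\subseteq\Uu_0$ for the set of $u$ with $u-\psi$ free of $(p,z)$ and invoking uniqueness once. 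The two routes are interchangeable; yours has the merit of treating both coordinates in one pass and of making explicit the ``simple calculations'' the paper suppresses --- in particular the mass-splitting identity that recombines the coefficient of $(y-m)$ across the terminal-kernel term (mass $1-Q(\lambda,\Ee\lvert(e,\alpha))$) and the $\psi(\tilde e)$-part of the semi-Markov-kernel term (mass $Q(\lambda,\Ee\lvert(e,\alpha))$), and the cancellation of $\rho pm$ against the market-order payoff in $r$. Your computation of the surviving $(p,z)$-part, $\rho(p-1)y+\rho z(p-j)=\psi(e)$, checks out, as does the algebraic identity $\psi(e)-\psi(\overline e)=\rho(p-\overline p)(y+z)+\rho(z-\overline z)(\overline p-j)$. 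Both arguments share the same unaddressed boundary caveat that $\tilde p=p+\tilde j$ and $\tilde y=y-m-\tilde z$ must remain in the truncated state space $\Ee$; you at least flag this, while the paper relies silently on $N$ being ``large enough''.
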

Besides, the value function~$V^*$ is monotone with respect to time to maturity. 
Indeed, let~$\pi^{*}:=\left\{\phi^*, \phi^*, \dots\right\}$ be~$\TT$-optimal 
and construct a policy~$\pi^{\delta} := \left\{\phi^\delta, \phi^\delta, \dots\right\}$,
for fixed~$\mathcal{\delta}\in(0, T)$, as
\begin{equation*}
\phi^{\delta}(e, \lambda)
:= 
 \left\{
 \begin{array}{ll}
\phi^*(e, \lambda-\delta), & \text{ if }\delta\leq\lambda\leq T,\\
(0, 0), & \text{ if }0\leq\lambda < \delta.
 \end{array}
\right.
\end{equation*}
Definition~\ref{Def:expectedReward} immediately implies 
that $V^{\pi^\delta}(e, \lambda) \leq V^*(e, \lambda)$
for any~$(e, \lambda)\in \Ee\times \TT$ and
$
V^{\pi^\delta}(e, \lambda) = V^*(e, \lambda - \delta)
$
for any~$(e, \lambda)\in \Ee\times [\delta, T]$.

The monotonicity in time to maturity therefore follows since~$\delta$ is arbitrary.
As in~\cite{jiang2015approximate, nascimento2009optimal},
we can take advantage of the monotonicity of the value function to get a faster convergence rate. 
The implementation procedure proceeds as follows, for some tolerance level~$\mathfrak{tol}$:
\begin{enumerate}[Step 1.]
\item (initialization): let~$n = 0$ and~$V_0(e, \lambda) = \rho(p-1)y + \lambda\rho y/T$ for every~$(e, \lambda)\in \Ee\times\TT$;
\item (iteration): choose a random pair~$(e_n, \lambda_n)\in \Ee\times \TT$ and compute~$\widehat{V}_n:=\mathcal{A}V_n(e_n, \lambda_n)$;
\item (correction): with
$\widehat{U}_n := \gamma \widehat{V}_n + (1 - \gamma)V_n(e_n, \lambda_n)$ for~$\gamma\in(0, 1)$,
define the monotonicity projection as:
\begin{equation*}
V_{n+1}(e, \lambda)
= 
 \left\{
 \begin{array}{ll}
\widehat{U}_n, & \text{ if }e = e_n, \lambda = \lambda_n,\\
\widehat{U}_n\vee V_n(e, \lambda), & \text{ if }e = e_n, \lambda > \lambda_n,\\
\widehat{U}_n\land V_n(e, \lambda), & \text{ if }e = e_n, \lambda < \lambda_n,\\
V_n(e, \lambda), & \text{ if }e \not= e_n;
 \end{array}
\right.
\end{equation*}
\item (accuracy control): if~$\lVert V_{n+1} - V_n\lVert \leq \mathfrak{tol}$, end the scheme; 
otherwise go to Step~$2$ incrementing~$n$ to~$n+1$.
\end{enumerate}
\subsection{Optimal strategy}\label{sec:OSresult}
In this section, we provide the results of the optimal decision rule computed in~\eqref{eq:OptimalSPolicy}, 
in which the value function is approximated through the numerical scheme in Section~\ref{sec:OptmStrtg}.
To begin with,
since we are dealing with the optimal liquidation problem of a child order, 
we set the size of the child order~$\chi = 2$ and the maturity~$T = 10$ 
(throughout this section, order size is measured in numbers of unit size and time is measured in seconds), 
both of which are relatively small.
Furthermore, we apply the parameters for the stylised limit order book model estimated in Section~\ref{sec:EstMd} together with the market parameters~$\rho = 1$ and~$\overline{v} = 9$ (see~\eqref{eq:p_PFun} and~\eqref{eq:mifg} for the definitions) and the tolerance level~$\mathfrak{tol}= 0.001$ in the numerical scheme.
Indeed, Proposition~\ref{prop:VFrdcfm} together with~\eqref{eq:OptimalSPolicy} indicates that the optimal decision rule depends on 
the price move direction~$j$, 
the volumes at the best prices~$v^b$ and~$v^a$, 
remaining inventory~$y$ 
and time to maturity~$\lambda$,
and is irrelevant to the ask price in tick size~$p$ and the executed limit order volume in the previous queueing race~$z$.
Moreover, the parameter estimation results in Table~\ref{tab:ParaEst0msLtc} and~\ref{tab:ParaEst1msLtc}, together with those in Figure~\ref{fig:YHOOf1msltc}, 
indicate that the agent's latency (denoted by~$\mathfrak{lat}$) also affect her optimal trading strategy.

Figure~\ref{Fig:OS0lct} shows the optimal policy as a function of~$v^b$, $v^a$, $j$ and~$\mathfrak{lat}$ by fixing~$y=2$ and~$\lambda = 10$,
where the agent's admissible trading strategies are given by~\eqref{def:actionspace} as:
\begin{equation*}
(m, l)
\in
 \left\{
 \begin{array}{ll}
\left\{(2, 0), (1, 0), (1, 1), (0 ,0), (0, 1), (0, 2)\right\}, & \text{ if }v^b > 1,\\
\left\{(0, 0), (0, 1), (0, 2)\right\}, & \text{ if }v^b = 1.
 \end{array}
\right.
\end{equation*}
Comparing the subfigures horizontally and vertically, we observe the following:
\begin{itemize}
\item 
The trading strategy that executes part of the child order, either through a limit order
$(m, l) = (0, 1)$ or through a market order $(m, l) = (1, 0)$, or does nothing $(m, l) = (0, 0)$, 
is never optimal in all scenarios. 
Generally speaking, in the situations where the best ask volume is low and the best bid volume is high
(corresponding to the top-left part of the subfigures), it it expected that the price will soon increase and the agent will choose to wait or to trade partially as her best choice.
However, since the trading horizon is quite short and the intensity rate for the incoming market orders is relatively low,
it seems that the agent would rather post limit orders in order to increase the execution probability 
than wait for better opportunities.
On top of that, this model does not consider the risk of adverse selection, 
so that posing limit orders is basically at no additional cost.
\item 
The queue imbalance of the best prices, defined as~$I := (v^b - v^a)/(v^b + v^a)$, is regarded as a powerful and effective predictor of the short-term price movements~\cite{gould2015queue, yang2016reduced} and is incorporated into the optimal market making strategy~\cite{cartea2015enhancing}.
However, we observe no clear relationship between the queue imbalance and the choice of the optimal strategy in all scenarios,
which may imply that queue imbalance should not be the only consideration in building the optimal execution strategy.
Reason for this result may come from Assumption~\ref{ass:AdmissibleTradingS}\eqref{itm:Strategythird} that the agent sticks to a `no cancellation' rule, so that the best bid and ask queue follow different dynamics.
On the contrary, the volume at the best ask price individually plays the most decisive part in the selection of the optimal strategy: 
the larger the best ask volume, the more aggressive trading strategy the agent will employ. 
In particular, when the best ask volume~$v^a \leq 6$, 
the optimal strategy is always~$(m , l) = (0, 2)$,
indicating the value of queue position for limit orders~\cite{moallemi2016model}.
Besides, volume at the best bid price also contributes to determining the optimal strategy, 
in particular when the best ask volume is high and the best bid volume is low 
(corresponding to the bottom-right part of the subfigures).
In such situations, the optimal decision rule normally chooses to take all the available liquidity through market orders in case the price soon moves against the agent's favour.
However, when the best bid volume~$v^b\geq 10$, 
the pattern of the optimal strategy is unchanged in all scenarios.
\item 
The optimal strategy is no more aggressive after a price decline than after a price increase.
This is mainly because the cancellation rate is lower at the best ask price after a price increase, 
which increases the execution risk of the agent's limit order, so the the agent prefers to use a market order in this case.
\item 
The optimal policy is no more aggressive when the agent has no latency than one-millisecond latency. 
On the one hand, 
this result comes as the cancellation rate is higher at the best ask price when there is no latency,
which increases the execution probability of the agent's limit order.
On the other hand, 
suppose the liquidation process enters into the next round of queueing race, 
in which the volumes at the best prices change dramatically within the first one millisecond,
an agent with zero latency can take most advantage of the speed to occupy a good queue position in the new queueing race.
By contrast, an agent with one-millisecond latency is less likely to get a high time priority in the new queue,
and therefore prefers to react more aggressively in order  to terminate the trade as soon as possible.
\end{itemize}

\begin{figure}[!htp]
\centering
	\begin{subfigure}
	\centering
	\includegraphics[scale=0.33]{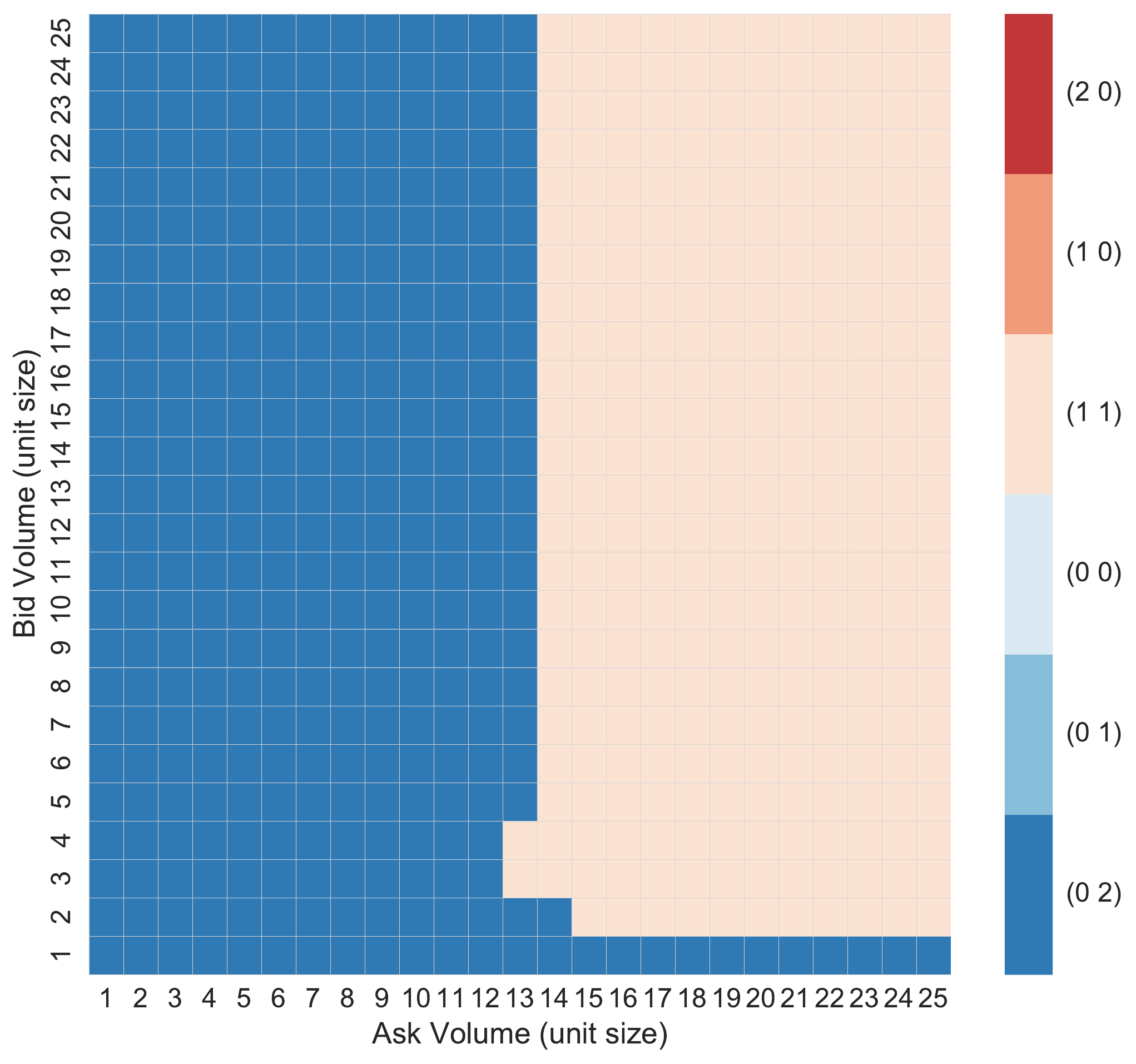}
	\end{subfigure}
	\begin{subfigure}
	\centering
	\includegraphics[scale=0.33]{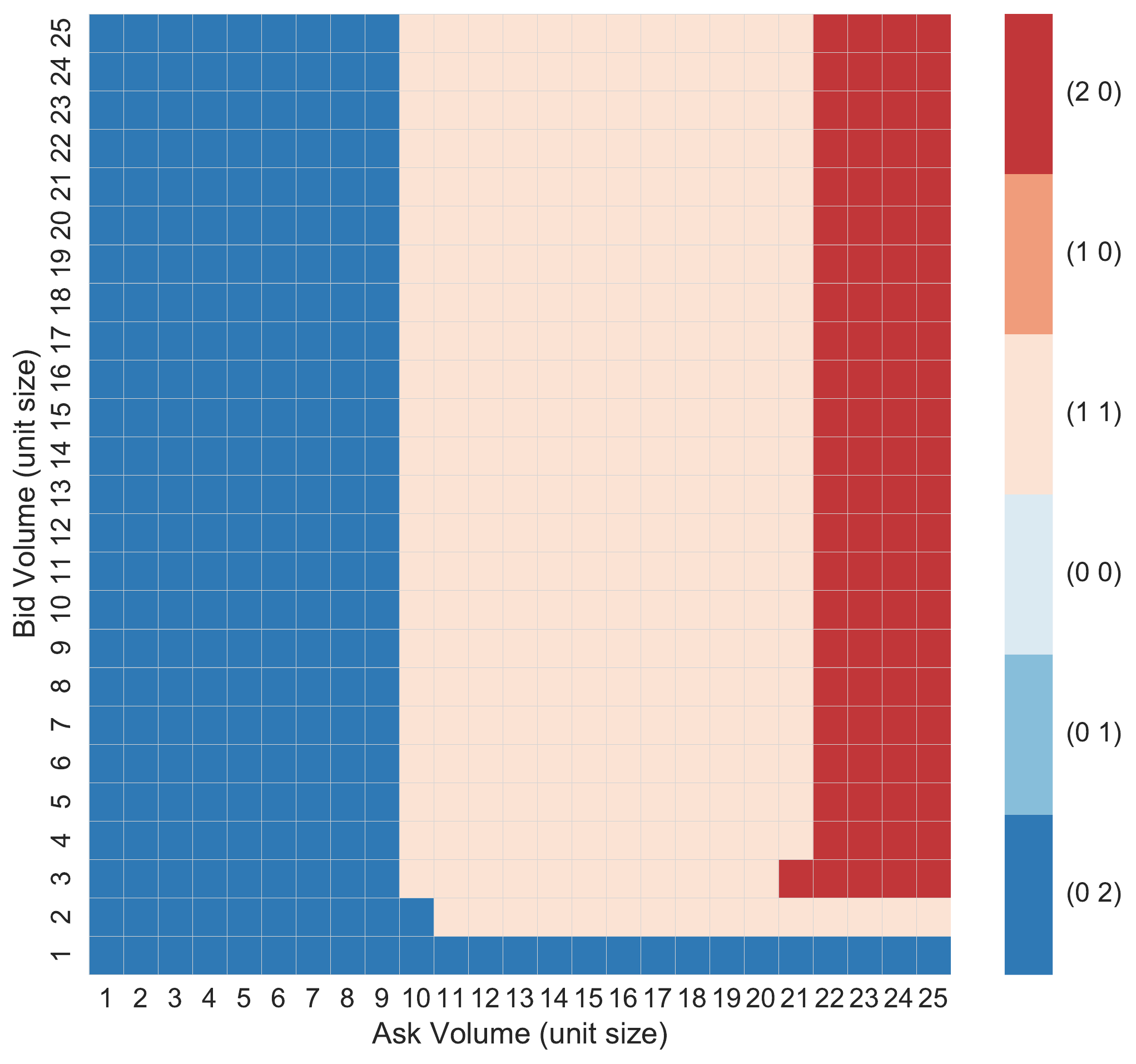}
	\end{subfigure}
	\begin{subfigure}
	\centering
	\includegraphics[scale=0.33]{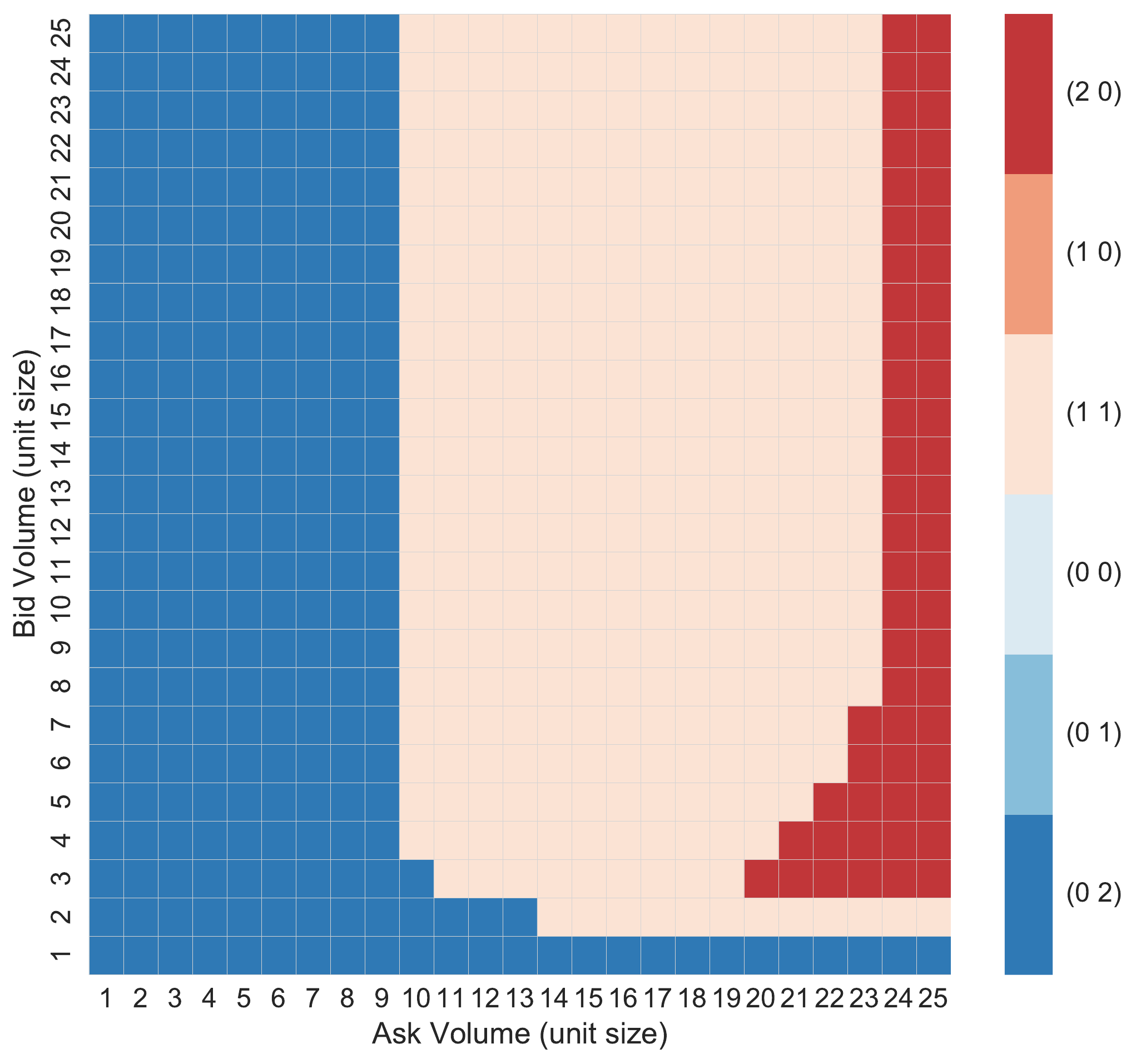}
	\end{subfigure}
	\begin{subfigure}
	\centering
	\includegraphics[scale=0.33]{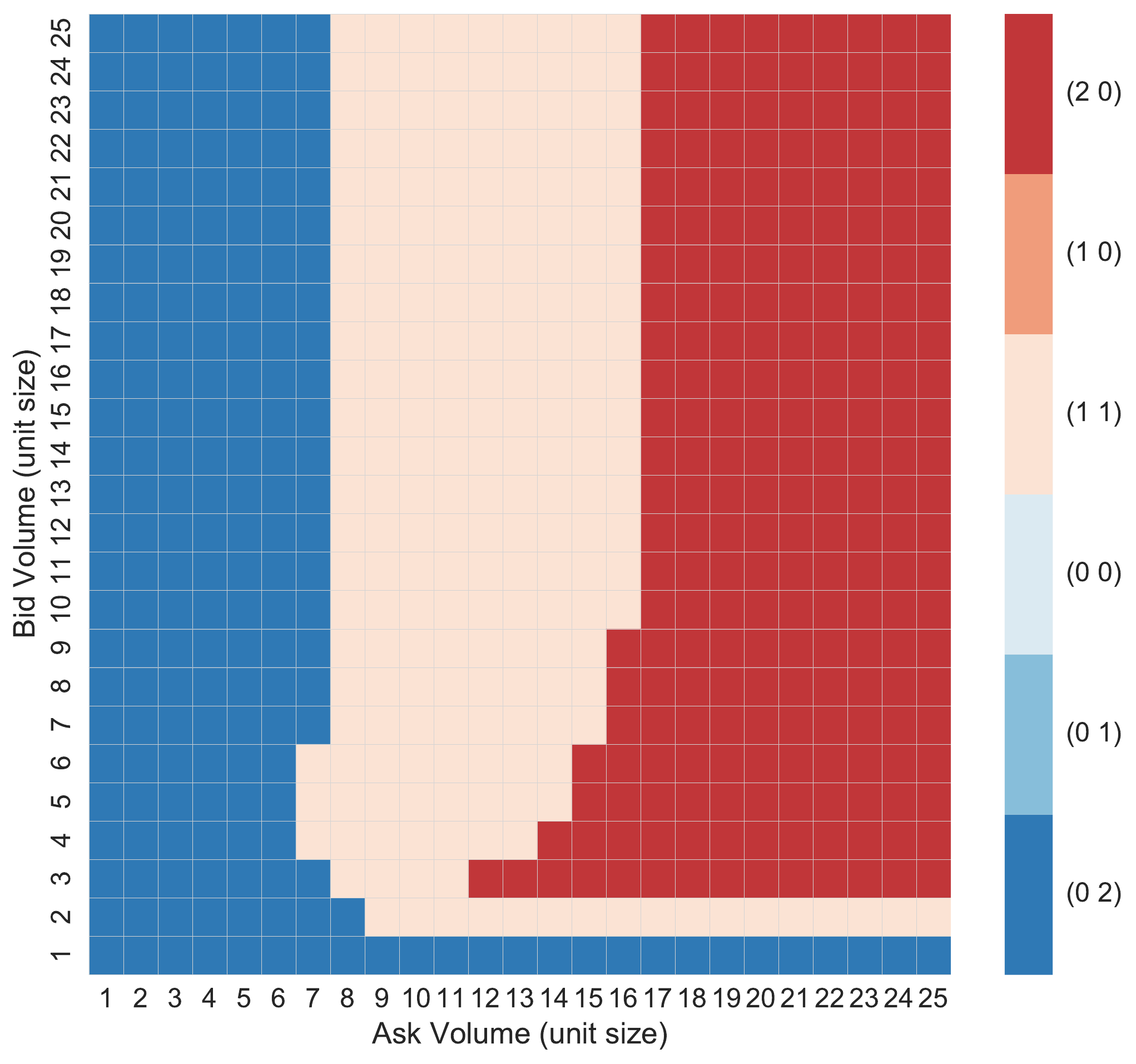}
	\end{subfigure}
\caption{Optimal policy as a function of best ask volume (x-axis: $v^a = 1, \dots, 25$),
best bid volume (y-axis: $v^b = 1, \dots, 25$), 
latency (top: $\mathfrak{lat} = 0$ms; bottom: $\mathfrak{lat} = 1$ms) 
and price move direction (left: $j = -1$; right: $j = 1$) when
fixing inventory~$y = 2$ and time to maturity~$\lambda = 10$.}
\label{Fig:OS0lct}
\end{figure}

Figure~\ref{Fig:OS1lct} shows the optimal decision rule as a function of~$v^b$, $v^a$, $j$ and~$\lambda$ (valued in~$3$ and~$10$ seconds) by fixing~$y = 1$ and~$\mathfrak{lat} = 1$ms, where the agent's admissible trading strategies are given by~\eqref{def:actionspace} as:
\begin{equation*}
(m, l)
\in
 \left\{
 \begin{array}{ll}
\left\{(1, 0),  (0 ,0), (0, 1)\right\}, & \text{ if }v^b > 1,\\
\left\{(0, 0), (0, 1)\right\}, & \text{ if }v^b = 1.
 \end{array}
\right.
\end{equation*}
In addition to the previous results, we find the agent to be more aggressive when there is less time to maturity. 

\begin{figure}[!htp]
\centering
	\begin{subfigure}
	\centering
	\includegraphics[scale=0.33]{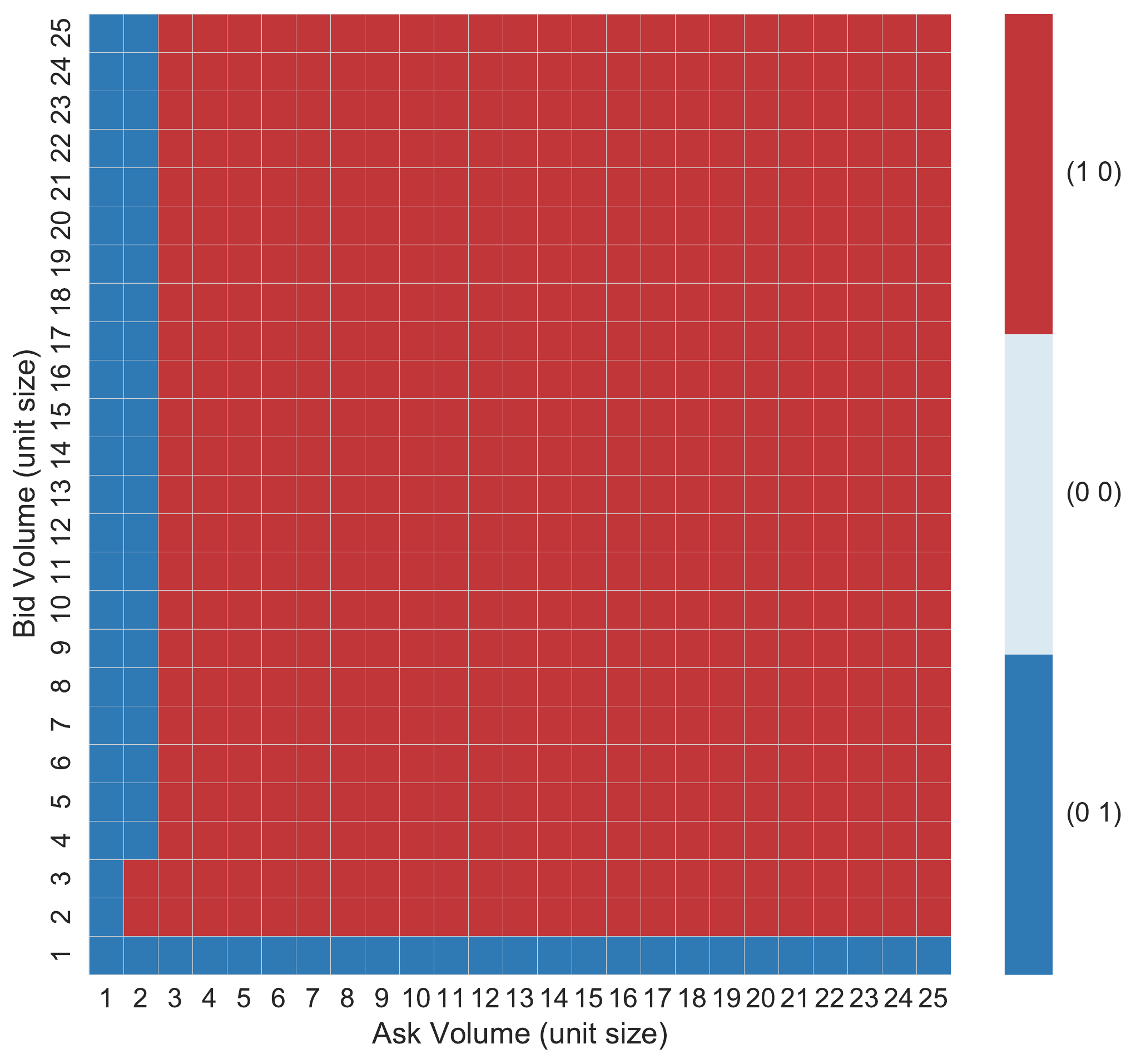}
	\end{subfigure}
	\begin{subfigure}
	\centering
	\includegraphics[scale=0.33]{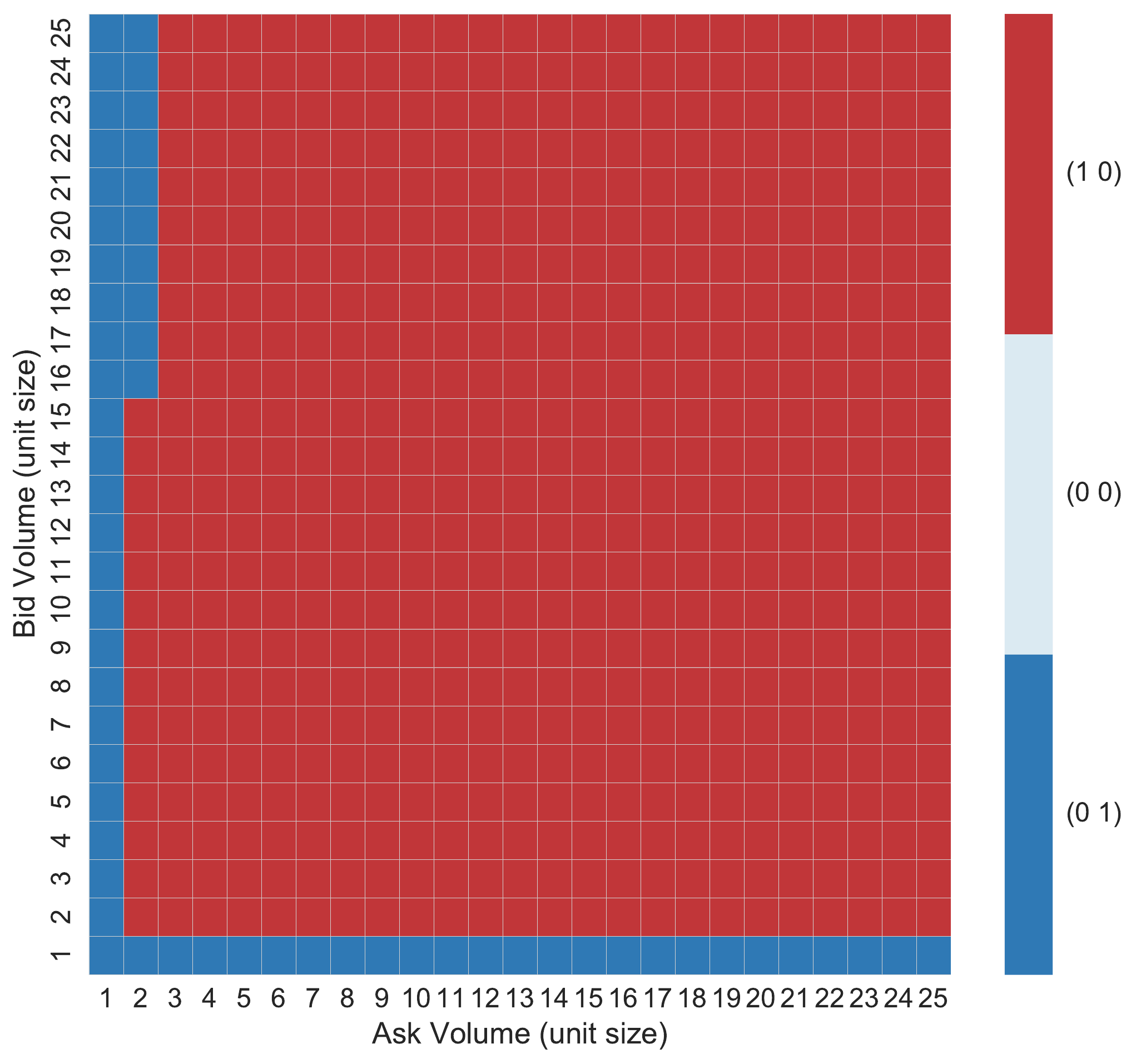}
	\end{subfigure}
	\begin{subfigure}
	\centering
	\includegraphics[scale=0.33]{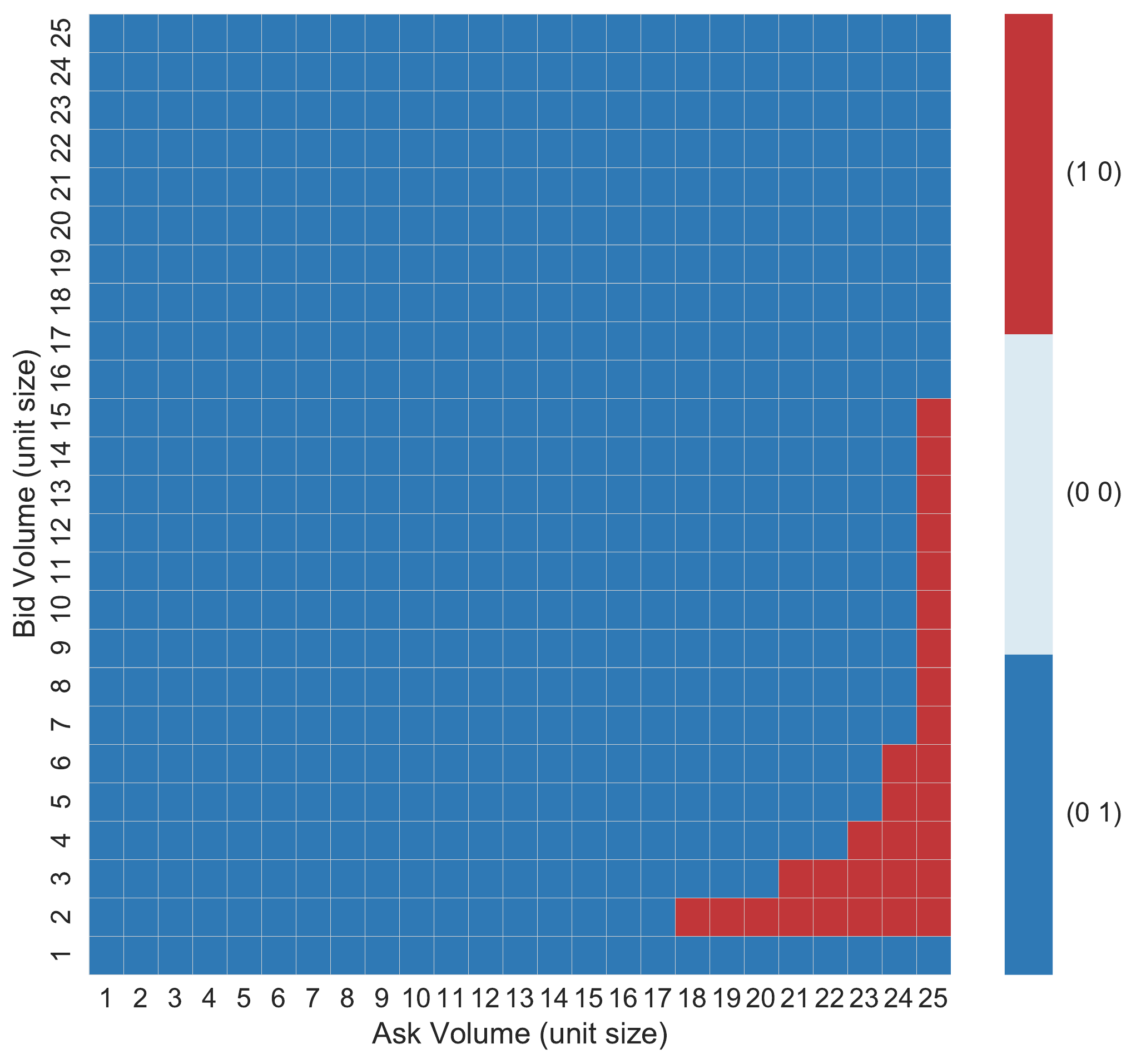}
	\end{subfigure}
	\begin{subfigure}
	\centering
	\includegraphics[scale=0.33]{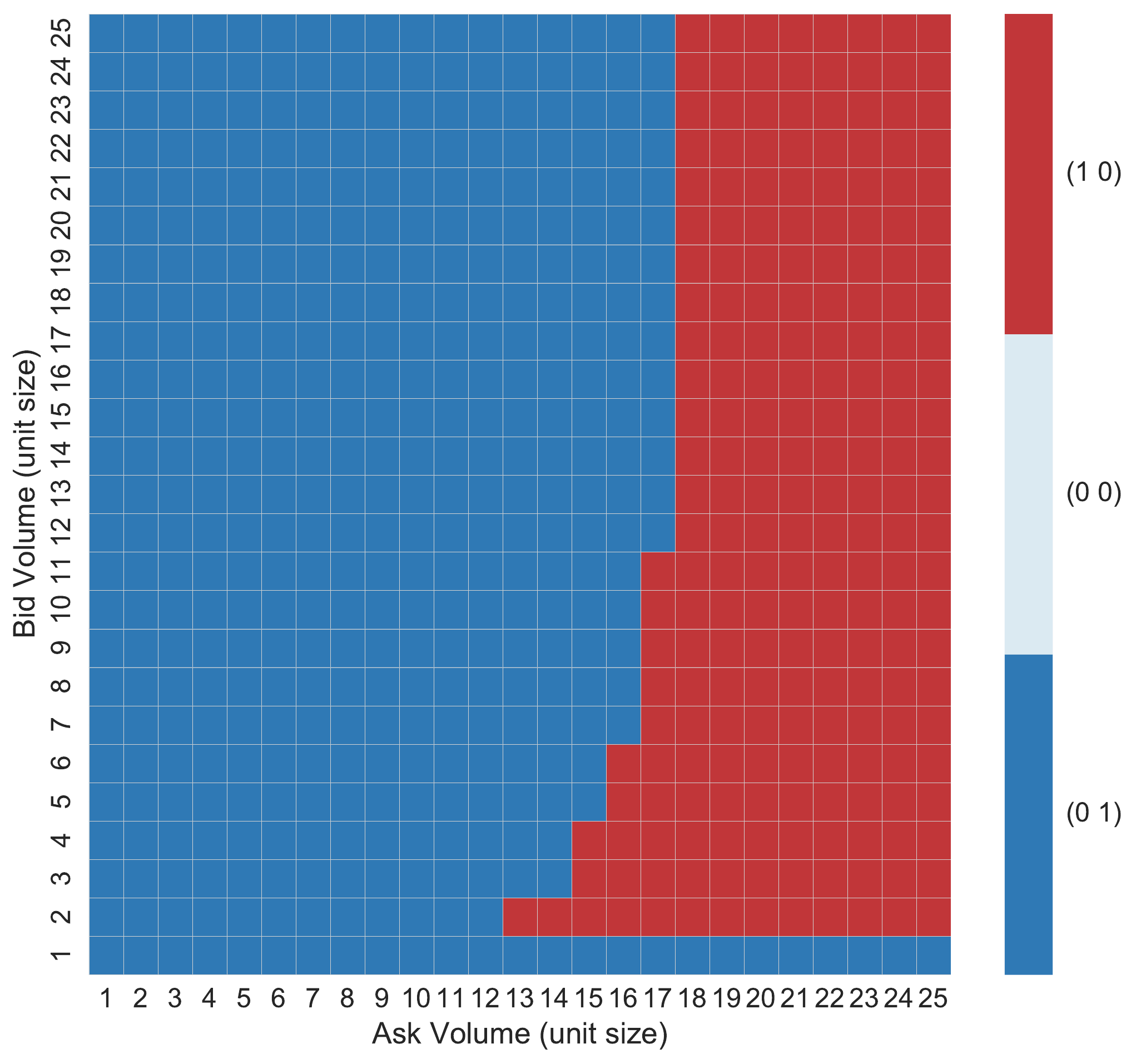}
	\end{subfigure}
\caption{Optimal policy as a function of best ask volume (x-axis: $v^a = 1, \dots, 25$),
best bid volume (y-axis: $v^b = 1, \dots, 25$), 
time to maturity (top: $\lambda = 3$; bottom: $\lambda = 10$)
and price move direction (left: $j = -1$; right: $j = 1$) when
fixing inventory~$y = 1$ and latency~$\mathfrak{lat} = 1$ms.}
\label{Fig:OS1lct}
\end{figure}

\newpage
\begin{appendix}
\section{Proof of Proposition~\ref{prop:qbarcal}}\label{app:propqbarcal}
In Scenario~[S1], the agent posts a limit order at the best ask price ($l \geq 1$),
and the best ask queue is depleted before the best bid queue ($\tilde{j} = +1$).
Hence, 
\begin{itemize}
\item the execution time of the best ask queue is less than that of the best bid queue;
\item the limit order posted by the agent must get fully executed in the queueing race;
\item the duration of the queueing race is the depletion time of the best ask queue.
\end{itemize}
Therefore, we can write
\begin{align*}
{\Qq}_{j, v, \alpha}\left(t, \tilde{j}, \tilde{z}\right)& = \PP\left(X_{n+1}\leq t, J_{n+1} = +1 \Big\lvert J_n = {j}, (V^b_n, V^a_n) = ({v}^b, {v}^a), A_n = \alpha\right)
\PP\left(Z_{n+1} = \tilde{z}\Big\lvert J_{n+1} = +1, L_n = {l}\right)\\
&=\overline{\PP}
\left(\tau_{{B}^b}>\tau_{{A}^l}, \tau_{{A}^l}\leq t\right) \ind_{\{\tilde{z} = {l}\}}
=\left\{F_{{A}^l}(t)
- \int_0^t f_{{A}^l}(u)F_{{B}^b}(u)\D u\right\}
\ind_{\{\tilde{z} = {l}\}}.
\end{align*}

In Scenario~[S2], the agent posts no limit order at the best ask price ($l = 0$).
The dynamics of best ask queue can be then described by the process~$B^a$,
independent of that of the best bid queue. The proof is similar to that in Scenario~[S1].

In Scenario~[S3], the agent posts a limit order at the best ask price ($l\geq0$),
and the best bid queue is depleted before the best ask queue ($\tilde{j} = -1$), 
while the agent's limit order gets no execution ($\tilde{z} = 0$).
Hence,
\begin{itemize}
\item the execution time of the best bid queue is less than that of one unit size of the agent's limit order together with the limit orders with higher time priority at the best ask price, and is therefore less than that of the entire best ask queue;
\item the duration of the queueing race is the depletion time of the best bid queue.
\end{itemize}
We then have
\begin{align*}
{\Qq}_{j, v, \alpha}\left(t, \tilde{j}, \tilde{z}\right)
&= {\PP}\left(X_{n+1}\leq t, Z_{n+1} = 0 \Big\lvert J_n = {j}, (V^b_n, V^a_n) = ({v}^b, {v}^a), A_n = \alpha\right)
\PP\left(J_{n+1} = -1\lvert L_n = {l}, Z_{n+1} = 0\right)\\
&= \overline{\PP}\left(\tau_{{C}^{1}}>\tau_{{B}^b},\tau_{{B}^b}\leq t\right)
= F_{{B}^b}(t) - \int_0^t f_{{B}^b}(u)F_{{C}^{1}}(u)\D u.
\end{align*}

In Scenario~[S4], the agent posts a limit order of one unit size at the best ask price ($l = 1$),
the best bid queue is depleted before the best ask queue ($\tilde{j} = -1$) and the agent's limit order gets executed ($\tilde{z} = 1$).
According to Remark~\ref{rem:ddotQsemiMarkov},
we have
${\Qq}_{j, v, \alpha}\left(t, \tilde{j}, \{0, 1\}\right)
=
{\Qq}_{j, v, \alpha}\left(t, \tilde{j}, 0\right)
+
{\Qq}_{j, v, \alpha}\left(t, \tilde{j}, 1\right)$, so that
\begin{align*}
{\Qq}_{j, v, \alpha}\left(t, -1, 1\right)
 & = \PP\left(X_{n+1}\leq t, J_{n+1} = -1 \Big\lvert J_n = {j}, (V^b_n, V^a_n) = ({v}^b, {v}^a), A_n = ({m}, 1)\right)-
{\Qq}_{j, v, \alpha}\left(t, -1, 0\right)\\
 & = \overline{\PP}\left(\tau_{{A}^{1}} > \tau_{{B}^b},\tau_{{B}^b}\leq t\right)
 - {\Qq}_{j, v, \alpha}\left(t, -1, 0\right)\\
&=
F_{{B}^b}(t) - \int_0^t f_{{B}^b}(u)F_{{A}^{1}}(u)\D u
 - \left[F_{{B}^b}(t) - \int_0^t f_{{B}^b}(u)F_{{C}^1}(u)\D u\right]\\
 & = \int_0^t f_{{B}^b}(u)\left[F_{{C}^{1}}(u) - F_{{A}^{1}}(u)\right]\D u.
\end{align*}

In Scenario~[S5], the best bid queue is depleted before the best ask queue ($\tilde{j} = -1$),
while~$\tilde{z}\in\{1, \dots, {l}-1\}$ out of~${l}>1$ unit size of the agent's limit order gets executed when this queueing race terminates.
Hence,
\begin{itemize}
\item the execution time of the best bid queue lies within the interval~$[\tau_{C^{\tilde{z}}}, \tau_{C^{\tilde{z}}}+\Updelta)$,
where~$\Updelta$ is the execution time of one unit size of the agent's limit order when at the top of the queue,
which is exponentially distributed with parameter~$\mu_j^a$ and is independent of~$\tau_{C^{\tilde{z}}}$;
\item the duration of the queueing race is the depletion time of the best bid queue.
\end{itemize}
We then have
\begin{align*}
&{\Qq}_{j, v, \alpha}\left(t, \tilde{j}, \tilde{z}\right)\\
& = {\PP}\left(X_{n+1}\leq t, Z_{n+1} = \tilde{z} \Big\lvert J_n = {j}, (V^b_n, V^a_n) = ({v}^b, {v}^a), A_n = \alpha\right)
{\PP}\left(J_{n+1} = -1\lvert L_n^a = {l}, Z_{n+1} = \tilde{z}\right)\\
&=\overline{\PP}\Big(
\tau_{{C}^{\tilde{z}}} \leq \tau_{{B}^b} <\tau_{{C}^{\tilde{z}}}+\Updelta, \tau_{B^b}\leq t
\Big)
=\int_0^\infty\int_0^\infty
\overline{\PP}\left(\tau_{{B}^b}\in[u, u+\nu), \tau_{{B}^b}\leq t\right)
f_{{C}^{\tilde{z}}}(u)\overline{\PP}(\Updelta \in \D\nu)\D u\D \nu\\
& = \int_0^t \int_0^{t-u}\left[F_{{B}^b}(u+\nu) - F_{{B}^b}(u)\right]
f_{{C}^{\tilde{z}}}(u) \overline{\PP}(\Updelta \in \D\nu)\D \nu\D u
+ \int_0^t \int_{t-u}^\infty \left[F_{{B}^b}(t)-F_{{B}^b}(u)\right]
f_{{C}^{\tilde{z}}}(u)\overline{\PP}(\Updelta \in \D\nu)\D \nu\D u\\
& = \mu_{{j}}^a\int_0^t \E^{\mu_{{j}}^a u}f_{{C}^{\tilde{z}}}(u)
\left(\int_u^t \E^{-\mu_{{j}}^a \epsilon} F_{{B}^b}(\epsilon)\D \epsilon\right)\D u
+
\E^{-\mu_{{j}}^a t}F_{{B}^b}(t)\int_0^t \E^{\mu_{{j}}^a u}f_{{C}^{\tilde{z}}}(u)\D u
-
\int_0^t F_{{B}^b}(u) f_{{C}^{\tilde{z}}}(u)\D u\\
& = \mu_{{j}}^a\int_0^tF^{*}_{{B}^b}(\epsilon)\int_0^\epsilon f^{*}_{{C}^{\tilde{z}}}(u)\D u\D \epsilon
+ \int_0^t\left[F^*_{B^b}(\epsilon)\right]^\prime\int_0^\epsilon f^*_{C^{\tilde{z}}}\D u \D \epsilon\\
& = \displaystyle \int_0^t f^*_{B^b}(\epsilon)\int_0^\epsilon f^*_{C^{\tilde{z}}}(u)\D u\, \D\epsilon
\end{align*}
where~$f^{*}_{{C}^{z}}(\xi) := \E^{\mu_{{j}}^a \xi}f_{{C}^{z}}(\xi)$,
$f^{*}_{{B}^b}(\xi) := \E^{-\mu_{{j}}^a \xi}f_{{B}^b}(\xi)$
and~$F^{*}_{{B}^b}(\xi) := \E^{-\mu_{{j}}^a \xi}F_{{B}^b}(\xi)$
for any~$\xi\geq 0$ and~$z\in\NN^+$.

Finally, in Scenario~[S6], according to Remark~\ref{rem:ddotQsemiMarkov},
for $\left(j, v^b, v^a, \alpha\right)\in \Kk^\prime$ such that~${l}> 1$, we have
$$
\Qq_{j, v, \alpha}\left(t, -1, \{0, 1, \dots, l\}\right)
 = 
\Qq_{j, v, \alpha}\left(t, -1, 0\right)
+\displaystyle\sum_{z = 1}^{{l}-1}
\Qq_{j, v, \alpha}\left(t, -1, z\right)
+ \Qq_{j, v, \alpha}\left(t, -1, l\right),
$$
which yields the result by using~[S3] and~[S5].


\section{Proof of Proposition~\ref{prop:Pcal}}\label{app:Pcal}
\begin{itemize}
\item If $l=0$ and~$\zz = 0$, then
$
P(\zz\lvert (e, \alpha), \lambda) 
 = \overline{\PP}\left(\tau_{B^b}\land\tau_{B^a} > \lambda \right)
 = \overline{\PP}\left(\tau_{B^b} > \lambda\right)\overline{\PP}\left(\tau_{B^a} > \lambda\right)
 =  \overline{F}_{B^b}(\lambda)\overline{F}_{B^a}(\lambda).
$
\item If $l\geq1$, $\zz = 0$, then
$
P(\zz\lvert (e, \alpha), \lambda) 
 = \overline{\PP}\left(\tau_{B^b}\land\tau_{A^l} > \lambda, \tau_{C^1} > \lambda \right)
 = \overline{\PP}\left(\tau_{B^b} > \lambda\right)\overline{\PP}\left(\tau_{C^1} > \lambda\right)
 =  \overline{F}_{B^b}(\lambda)\overline{F}_{C^1}(\lambda).
$
\item If $l>1$ and~$\zz\in\{1, \dots, l-1\}$, then
\begin{align*}
P(\zz\lvert (e, \alpha), \lambda) & = \overline{\PP}\left(\tau_{B^b}\land\tau_{A^l} > \lambda, \tau_{C^\zz} + \Xi >\lambda\geq \tau_{C^\zz}\right)
= \overline{\PP}\left(\tau_{B^b} > \lambda\right)\overline{\PP}\left(\tau_{C^\zz} + \Xi >\lambda\geq \tau_{C^\zz}\right)\\
& =  \overline{\PP}\left(\tau_{B^b} > \lambda\right)\overline{\PP}\left[1 - \overline{\PP}\left(\tau_{C^\zz}>\lambda\right)- \overline{\PP}\left(\tau_{C^\zz} + \Xi\leq \lambda\right) \right] 
 =  \overline{F}_{B^b}(\lambda)\left[F_{C^\zz}(\lambda) - \left(F_{C^\zz} * F_{\Xi}\right)(\lambda)\right].
\end{align*}
\item If $l\geq1$ and~$\zz = l$, then
\begin{align*}
P(\zz\lvert (e, \alpha), \lambda) & = \overline{\PP}\left(\tau_{B^b}\land\tau_{A^l} > \lambda,\tau_{C^l}\leq\lambda\right) = \overline{\PP}\left(\tau_{B^b} > \lambda\right)\overline{\PP}\left(\tau_{A^l}>\lambda\geq\tau_{C^l}\right)\\
& = \overline{\PP}\left(\tau_{B^b} > \lambda\right)\left[1 - \overline{\PP}\left(\tau_{C^l}>\lambda\right) - \overline{\PP}\left(\tau_{A^l}\leq \lambda\right)\right]
 =  \overline{F}_{B^b}(\lambda)\left[F_{C^l}(\lambda) - F_{A^l}(\lambda)\right].
\end{align*}
\item According to Remark~\ref{rmk: tk}, the terminal kernel has zero value in all other scenarios.
\end{itemize}

\section{Proof of Proposition~\ref{prop:PropTphi}}\label{app:PropTpi}
To prove Part~\eqref{prop:PropTphifirst} of the proposition,
we can write the inequality 
$$
\lVert \Tt^\phi u\lVert 
\leq \displaystyle \sup_{(e,\lambda)\in \Ee\times\TT}
\left|r(e, \phi(e, \lambda))\right| + 
\sup_{\substack{(e,\lambda)\in \Ee\times\TT \\ \zz\in\{0, \dots, \bar{l}\}}}
\left| w(e, \phi(e, \lambda), \zz)\right|
 + \sup_{(e,\lambda)\in \Ee\times\TT}\left\lvert\sum_{\tilde{e}\in \Ee}\int_0^\lambda u(\tilde{e}, \lambda-t)Q\big(\D t, \tilde{e}\lvert (e, \phi(e, \lambda))\big)\right\lvert,
$$
for any~$\phi\in\Phi$ and~$u\in \Uu$.
The first two terms are bounded since the state space~$\Ee$ and the action space~$\Aa$ are finite.
Regarding the last term,
applying Lemma~\ref{lem:fntde} yields
\begin{align*}
\sup_{(e,\lambda)\in \Ee\times\TT}\left\lvert\sum_{\tilde{e}\in \Ee}\int_0^\lambda u(\tilde{e}, \lambda-t)Q\big(\D t, \tilde{e}\lvert (e, \phi(e, \lambda))\big)\right\lvert
 & \leq \| u\| \displaystyle \sup_{(e,\lambda)\in \Ee\times\TT}
\left\lvert \sum_{\tilde{e}\in \Ee}\int_0^\lambda Q\big(\D t, \tilde{e}\lvert (e, \phi(e, \lambda))\big)\right\lvert\\
& = \lVert u\lVert \displaystyle \sup_{(e,\lambda)\in \Ee\times\TT}
Q\big(\lambda, \Ee\lvert (e, \phi(e, \lambda))\big)\\
 & \leq \| u \|\sup_{\lambda\in\TT}(1-\E^{-2{\iota}\lambda}) = \| u \| (1-\E^{-2{\iota}T})<\infty.
\end{align*}
Therefore the codomain of~$\Tt^\phi$ is~$\Uu$.
The contraction property follows directly from~\eqref{ineq:regu1}, since 
$\|\Tt^\phi u - \Tt^\phi v \|\leq
(1-\E^{-2{\iota}T})\| u - v \|$
holds for all $u, v\in \Uu$,
and the monotonicity follows from the properties of the semi-Markov kernel.
To prove Part~\eqref{prop:PropTphisecond} of this proposition, we can write, 
for any~$(e, \lambda)\in \Ee\times\TT$ and~$\pi := \{\phi_0, \phi_1, \phi_2, \dots\}\in\Pi$,
\begin{align*}
V^{\pi}(e, \lambda) & = \sum_{n = 0}^\infty \EE^{\pi}_{(e, \lambda)}\left[
r(\Ee_n, \Aa_n)\ind_{\{\Lambda_n\geq 0\}}
 + w(\Ee_n, \Aa_n, \ZZ)\ind_{\{0\leq\Lambda_n < X_{n+1}\}}\right]\\
 & = \EE^{\pi}_{(e, \lambda)}\left[
r(\Ee_0, \Aa_0)\ind_{\{\Lambda_0\geq 0\}}
 + w(\Ee_0, \Aa_0, \ZZ)\ind_{\{0\leq\Lambda_0 < X_{1}\}}
 \right]\\
 &\qquad\qquad+\displaystyle\sum_{n = 1}^\infty\EE^{\pi}_{(e, \lambda)}\left[
r(\Ee_n, \Aa_n)\ind_{\{\Lambda_n\geq 0\}}
 + w(\Ee_n, \Aa_n, \ZZ)\ind_{\{0\leq\Lambda_n < X_{n+1}\}}\right]\\
 & = \EE^{\pi}_{(e, \lambda)}\left[ 
 \EE^{\pi}_{(e, \lambda)}\left[
 r(\Ee_0, \Aa_0)\ind_{\{\Lambda_0\geq 0\}}
 +  \sum_{\zz = 0}^\infty w(\Ee_0, \Aa_0, \zz)\ind_{\{0\leq\Lambda_0 < X_{1}, \ZZ = \zz\}}
 \Big\lvert H_0
 \right]
 \right]\\
 &\qquad\qquad+\displaystyle\sum_{n = 1}^\infty\EE^{\pi}_{(e, \lambda)}\left[
 \EE^{\pi}_{(e, \lambda)}\left[
r(\Ee_n, \Aa_n)\ind_{\{\Lambda_n\geq 0\}}
 + w(\Ee_n, \Aa_n, \ZZ)\ind_{\{0\leq\Lambda_n < X_{n+1}\}}
 \Big\lvert
 H_1, 
 \right]
 \right]\\
 & = r(e, \phi_0(e, \lambda))
 + \sum_{\zz = 0}^\infty w(e, \phi_0(e, \lambda), \zz)P(\zz\lvert (e, \phi_0(e, \lambda)), \lambda)
\\
 & \qquad + 
 \sum_{n =1}^\infty \EE^{\pi}_{(e, \lambda)}\left[
 \EE_{(\Ee_1, \Lambda_1)}^{\pi_-}\left[
 r(\Ee_{n-1}, \Aa_{n-1})\ind_{\{\Lambda_{n-1}\geq 0\}}
 + w(\Ee_{n-1}, \Aa_{n-1}, \ZZ)\ind_{\{0\leq\Lambda_{n-1} < X_{n}\}}
 \right]
 \right]\\
 & = r(e, \phi_0(e, \lambda)) + \sum_{\zz = 0}^\infty w(e, \phi_0(e, \lambda), \zz)P(\zz\lvert (e, \phi_0(e, \lambda)), \lambda)
 + \EE_{(e, \lambda)}^\pi\left[V^{\pi_-}(\Ee_1, \Lambda_1)\right]\big)
 \\
 & = r(e, \phi_0(e, \lambda)) + \sum_{\zz = 0}^\infty w(e, \phi_0(e, \lambda), \zz)P(\zz\lvert (e, \phi_0(e, \lambda)), \lambda)
 + \sum_{\tilde{e}\in \Ee}\int_0^\lambda V^{\pi_-}(\tilde{e}, \lambda-t)Q\big(\D t, \tilde{e}\lvert (e, \phi_0(e, \lambda))\big),
\end{align*}
according to~Remark~\ref{rmk:vfchange} and Theorem~\ref{thm:Tulcea}, which concludes the proof.


\section{Proof of Proposition~\ref{prop:VFrdcfm}}\label{app:VFrdcfm}
Let ${\bf i} := (0, 0, 0, 1, 0, 0)$ and ${\bf k} := (0, 0, 0, 0, 1, 0)$, so that
$e = \overline{e} + \Delta_p {\bf i} + \Delta_z {\bf k}$,
where~$\Delta_p = p- \overline{p}$ and~$\Delta_z := z - \overline{z}$.
Define further $\hat{e} := \overline{e} + \Delta_{z}{\bf k}$. 
According to Proposition~\ref{prop:PropTphi} and~\cite[Theorem 3]{denardo1967contraction}, 
we can write
\begin{equation}\label{eq:rdc1}
V^*(\hat{e}, \lambda) = \Aa V^*(\hat{e}, \lambda)
=
V^*(\overline{e}, \lambda) + \rho\Delta_z(\overline{p} - j).
\end{equation}
With the auxiliary function
$u(\ee, \mathfrak{\lambda}) := V^*(\ee - \Delta_p {\bf i}, \lambda) + \rho\Delta_p(y+z)$
for $(\ee, \lambda)\in \Ee\times \TT$, 
simple calculations yield $\Aa u(\ee, \lambda) = u(\ee, \lambda)$  for any~$(\ee, \lambda)\in \Ee\times \TT$,
and Theorem~\ref{thm:ValueFunction} implies that
$V^*(e, \lambda) = V^*(\hat{e}, \lambda) + \rho\Delta_p (y+z)$.
Combining this with~\eqref{eq:rdc1} concludes the proof.

\section{Maximum Likelihood Estimation for the Poisson Parameters}\label{App:MLE}
Fix~$\ssf\in\{a, b\}, j\in\{+1, -1\}$ and 
denote the Poisson parameters~$\mu_j^\ssf, \kappa_j^\ssf, \theta_j^\ssf$
by~$\mu, \kappa, \theta$ respectively.
Introduce the auxiliary parameters~$\mu^\prime := \mu S^l/S^m$
and~$\theta^\prime := \theta S^l/ S^c$.
Suppose we observe~$l_i$ times of limit order arrivals, 
$m_i$ times of market order arrivals and~$c_i$ times of cancellations on the~$\ssf$ side in the~$i$-th queueing race,
whose starting time is~$\tau_i$,
duration is~$d_i$
and
the volume in unit size at~$\ssf$ price at time~$t$ is~$\text{Vol}_i(t)$,
for~$i \in \{1, \dots, \# \Qf_j\}$.
The likelihood functions are then constructed as:
\begin{align*}
\mathcal{L}\left(\mu^\prime: m_1, \dots, m_{\# \Qf_j}, d_1, \dots, d_{\# \Qf_j}\right)
&:=\prod_{i =1}^{\# \Qf_j}\frac{{(\mu^\prime d_i)}^{m_i}}{m_i !}\E^{-\mu^\prime d_i},\\
\mathcal{L}\left(\kappa: l_1, \dots, l_{\# \Qf_j}, d_1, \dots, d_{\# \Qf_j}\right)
&:=\prod_{i =1}^{\# \Qf_j}\frac{{(\kappa d_i)}^{l_i}}{l_i !}\E^{-\kappa d_i},\\
\mathcal{L}\left(\theta^\prime: c_1, \dots, c_{\# \Qf_j}, \Theta(d_1), \dots, \Theta(d_{\# \Qf_j})\right)
&:=\prod_{i =1}^{\# \Qf_j}\frac{{\Theta(d_i)}^{c_i}}{c_i !}\E^{-\Theta(d_i)},
\end{align*}
where $\Theta(d_i):=\theta^\prime \int_{\tau_i}^{\tau_i+d_i}\text{Vol}_i(t)\D t$.
Taking logarithms, and cancelling the derivatives yield the optima~\eqref{eq:MaxLikelihoodResult}
with
$$
N_{s,j}^{\varpi} = \sum_{i=1}^{\# \Qf_j}\varpi_i,\quad\text{for }\varpi\in \{m,l,c\}\qquad
D_{s,j} = \sum_{i=1}^{\# \Qf_j}d_i,\qquad
V_{s,j} = \sum_{i=1}^{\# \Qf_j}\int_{\tau_i}^{\tau_i+d_i}\text{Vol}_i(t)\D t.
$$
\end{appendix}


\end{document}